\definecolor{kgnote}{rgb}{1.0000,0.0000,0.0000}
\newcommand{\derive}{\mathit{val}}
\newlength\savedwidth
\newcommand{\rev}[1]{#1^{R}}
\newcommand{\LCPQ}{\mathsf{LCP}}
\newcommand{\LCSQ}{\mathsf{LCS}}
\newcommand{\LCEQ}{\mathsf{LCE}}
\newcommand{\shrink}[2]{\mathit{Shrink}_{#1}^{#2}}
\newcommand{\xshrink}[2]{\mathit{XShrink}_{#1}^{#2}}
\newcommand{\pow}[2]{\mathit{Pow}_{#1}^{#2}}
\newcommand{\xpow}[2]{\mathit{XPow}_{#1}^{#2}}
\newcommand{\uniq}[1]{\mathit{Uniq}(#1)}
\newcommand{\id}[1]{\mathit{id}(#1)}
\newcommand{\encblock}[1]{\mathit{Eblock}(#1)}
\newcommand{\encblockd}[1]{\mathit{Eblock}_{d}(#1)}
\newcommand{\encpow}[1]{\mathit{Epow}(#1)}
\newcommand{\deltaLR}[1]{\Delta_{#1}}
\newcommand{\sig}[1]{\mathit{Sig}(#1)}
\newcommand{\val}[1]{\mathit{val}(#1)}
\newcommand{\valp}[1]{\mathit{val}^{+}(#1)}
\newcommand{\lcp}[2]{\mathsf{LCP}(#1,#2)}
\newcommand{\lcs}[2]{\mathsf{LCS}(#1,#2)}
\newtheorem{theorem}{Theorem}
\newtheorem{definition}[theorem]{Definition}
\newtheorem{lemma}[theorem]{Lemma}
\newtheorem{example}[theorem]{Example}
\title{Fully dynamic data structure for LCE queries in compressed space}
 \author{
 	Takaaki Nishimoto$^1$\quad
 	Tomohiro I$^2$\quad
 	Shunsuke Inenaga$^1$\\
 	Hideo Bannai$^1$\quad
 	Masayuki Takeda$^1$\\
 	{$^1$ Department of Informatics, Kyushu University}\\
 	{\texttt{\{takaaki.nishimoto,inenaga,bannai,takeda\}@inf.kyushu-u.ac.jp}}\\
 	{$^2$ Kyushu Institute of Technology, Japan}\\
 	{\texttt{tomohiro@ai.kyutech.ac.jp}}
 }
\date{}
\begin{document}

\maketitle

\begin{abstract}
A Longest Common Extension (LCE) query on a text $T$ of length $N$ asks 
for the length of the longest common prefix of suffixes starting at given two positions.
We show that the signature encoding $\mathcal{G}$ of size $w = O(\min(z \log N \log^* M, N))$ 
[Mehlhorn et al., Algorithmica 17(2):183-198, 1997] of $T$,
which can be seen as a compressed representation of $T$,
has a capability to support LCE queries in $O(\log N + \log \ell \log^* M)$ time, where
$\ell$ is the answer to the query,
$z$ is the size of the Lempel-Ziv77 (LZ77) factorization of $T$, and
$M \geq 4N$ is an integer that can be handled in constant time under word RAM model.
In compressed space, this is the fastest deterministic LCE data structure in many cases.
Moreover, $\mathcal{G}$ can be enhanced to support efficient update operations:
After processing $\mathcal{G}$ in $O(w f_{\mathcal{A}})$ time,
we can insert/delete any (sub)string of length $y$ into/from an arbitrary position of $T$ in $O((y+ \log N\log^* M) f_{\mathcal{A}})$ time,
where $f_{\mathcal{A}} = O(\min \{ \frac{\log\log M \log\log w}{\log\log\log M}, \sqrt{\frac{\log w}{\log\log w}} \})$.
This yields the first fully dynamic LCE data structure working in compressed space.
We also present efficient construction algorithms from various types of inputs:
We can construct $\mathcal{G}$ in $O(N f_{\mathcal{A}})$ time from uncompressed string $T$;
in $O(n \log\log (n \log^* M) \log N \log^* M)$ time from grammar-compressed string $T$ represented by a straight-line program of size $n$;
and in $O(z f_{\mathcal{A}} \log N \log^* M)$ time from LZ77-compressed string $T$ with $z$ factors.
On top of the above contributions, we show several applications of our
data structures which improve previous best known results
on grammar-compressed string processing.
\end{abstract}

\section{Introduction}
A \emph{Longest Common Extension (LCE)} query on a text $T$ of length $N$ asks to compute
the length of the longest common prefix of suffixes starting at given two positions.
This fundamental query appears at the heart of many string processing problems (see text book~\cite{gusfield97:_algor_strin_trees_sequen} for example), and hence,
efficient data structures to answer LCE queries gain a great attention.
A classic solution is to use a data structure for lowest common ancestor queries~\cite{DBLP:journals/jal/BenderFPSS05} on the suffix tree of $T$.
Although this achieves constant query time, the $\Theta(N)$ space needed for the data structure is too large to apply it to large scale data.
Hence, recent work focuses on reducing space usage at the expense of query time.
For example, time-space trade-offs of LCE data structure have been extensively studied~\cite{DBLP:conf/cpm/BilleGKLV15, DBLP:journals/corr/TanimuraIBIPT16}.

Another direction to reduce space is to utilize a compressed structure of $T$, which is advantageous when $T$ is highly compressible.
There are several LCE data structures working on grammar-compressed string $T$ represented by a straight-line program (SLP) of size $n$.
The best known deterministic LCE data structure is
due to I et al.~\cite{IMSIBTNS15},
which supports LCE queries in $O(h \log N)$ time,
and occupies $O(n^2)$ space, where $h$ is the height of the derivation tree of a given SLP.
Their data structure can be built in $O(h n^2)$ time directly from the SLP.
Bille et al.~\cite{bille13:_finger_compr_strin} showed 
a Monte Carlo randomized data structure
which supports LCE queries in $O(\log N \log \ell)$ time,
where $\ell$ is the output of the LCE query.
Their data structure requires only $O(n)$ space,
but requires $O(N)$ time to construct.
Very recently, Bille et al.~\cite{BilleCCG15} showed
a faster Monte Carlo randomized data structure of $O(n)$ space
which supports LCE queries in $O(\log N + \log^2 \ell)$ time.
The preprocessing time of this new data structure is not given in~\cite{BilleCCG15}.
Note that, given the LZ77-compression of size $z$ of $T$,
we can convert it into an SLP of size $n = O(z \log \frac{N}{z})$~\cite{rytter03:_applic_lempel_ziv} and then apply the above results.

In this paper, we focus on the \emph{signature encoding} $\mathcal{G}$ of $T$,
which can be seen as a grammar compression of $T$, and show that $\mathcal{G}$ can support LCE queries efficiently.
The signature encoding was proposed by Mehlhorn et al.
for equality testing on a dynamic set of strings~\cite{DBLP:journals/algorithmica/MehlhornSU97}.
Alstrup et al. used signature encodings combined with their own data structure called anchors to present
a pattern matching algorithm on a dynamic set of strings~\cite{DBLP:conf/soda/AlstrupBR00,LongAlstrup}.
In their paper, they also showed that signature encodings can support
longest common prefix (LCP) and longest common suffix (LCS) queries on a dynamic set of strings.
Their algorithm is randomized as it uses a hash table for maintaining the dictionary of $\mathcal{G}$.
Very recently, Gawrychowski et al. improved the results 
by pursuing advantages of randomized approach other than the hash table~\cite{Gawrychowski2015OptimalDynamicStrings}.
It should be noted that the algorithms in~\cite{DBLP:conf/soda/AlstrupBR00,LongAlstrup,Gawrychowski2015OptimalDynamicStrings} can support LCE queries by 
combining split operations and LCP queries although it is not explicitly mentioned.
However, they did not focus on the fact that signature encodings can work in compressed space.
In~\cite{DBLP:conf/latin/0001IK16}, 
LCE data structures on edit sensitive parsing, a variant of signature encoding,
was used for sparse suffix sorting, but again, they did not focus on working in compressed space.

Our contributions are stated by the following theorems, where
$M \geq 4N$ is an integer that can be handled in constant time under word RAM model.
More specifically, $M = 4N$ if $T$ is static, and $M/4$ is the upper bound of the length of $T$ if we consider updating $T$ dynamically.
In dynamic case, $N$ (resp. $w$) always denotes the current size of $T$ (resp. $\mathcal{G}$).
Also, $f_{\mathcal{A}}$ denotes the time for predecessor/successor queries on a set of $w$ integers from an $M$-element universe,
which is $f_{\mathcal{A}} = O(\min \{ \frac{\log\log M \log\log w}{\log\log\log M}, \sqrt{\frac{\log w}{\log\log w}} \})$
by the best known data structure~\cite{DBLP:journals/jcss/BeameF02}.
\begin{theorem}[LCE queries]\label{theo:theorem1}
	Let $\mathcal{G}$ denote the signature encoding of size $w = O(\min(z \log N \\ \log^* M, N))$ for a string $T$ of length $N$. 
	Then $\mathcal{G}$ supports LCE queries on $T$ in $O(\log N + \log \ell \log^* M)$ time,
	where $\ell$ is the answer to the query, and $z$ is the size of the LZ77 factorization of $T$. 
\end{theorem}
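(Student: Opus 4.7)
The plan is to exploit the \emph{locally consistent parsing} at the heart of a signature encoding: two equal substrings of $T$ produce, at each level of the derivation, signature sequences that agree in the interior and differ only in an $O(\log^* M)$-sized boundary on each side. An LCE query then reduces to finding the first level and position at which the signature decompositions of the two suffixes disagree, so the time budget $O(\log N + \log \ell \cdot \log^* M)$ decomposes naturally into a one-time navigation cost and a doubling-style comparison cost.

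First, given positions $i$ and $j$, I would locate the leaves of the derivation tree of $\mathcal{G}$ corresponding to $T[i]$ and $T[j]$. Using the length information stored at each node and a top-down search, each walk costs $O(\log N)$ time because the derivation tree has height $O(\log N)$; this accounts for the additive $\log N$ term. In the same walk I would record, for each level, the spines (the sequences of ancestors of $T[i]$ and $T[j]$), which will drive the rest of the computation.

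Next I would use a doubling strategy on the answer $\ell$. For each $k$, define a canonical signature decomposition of the length-$k$ prefix of the suffix starting at a given position. By the locally consistent parsing property, whenever $T[i..i{+}k{-}1] = T[j..j{+}k{-}1]$ the two decompositions coincide as sequences of integer signatures; moreover, going from length $k$ to length $2k$ exposes only $O(\log^* M)$ new signatures on top of a canonical decomposition of length about $k$. Starting with $k=1$ and doubling, I would compare the decompositions of $T[i..i{+}k{-}1]$ and $T[j..j{+}k{-}1]$ using the two spines together with predecessor queries on $\mathcal{G}$, spending $O(\log^* M)$ time per doubling step to examine only the newly exposed boundary signatures. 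Once equality fails at some $k=2^t$, a symmetric binary search inside $[2^{t-1},2^t]$ pinpoints $\ell$. Over the $O(\log \ell)$ rounds at $O(\log^* M)$ cost each, the total comparison work is $O(\log \ell \cdot \log^* M)$, giving the claimed bound.

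The main obstacle is formalizing the canonical decomposition so that (i) it depends only on the underlying substring, not on the containing occurrence; (ii) the newly exposed boundary at each doubling level has size $O(\log^* M)$, matching the folklore guarantee for signature encodings but requiring a careful application of the level-by-level parsing at arbitrary prefix endpoints rather than at syntactically aligned subword boundaries; and (iii) it can be read off the precomputed spines in time proportional to the boundary signatures actually examined, so that the per-round cost really is $O(\log^* M)$ and not $O(\log k \cdot \log^* M)$. Once uniqueness of the canonical decomposition and its incremental extractability are established, the doubling/binary-search argument and the quoted query time follow routinely.
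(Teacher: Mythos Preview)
Your approach diverges from the paper's and, as you yourself flag in point~(iii), the per-round cost claim is where it breaks. The paper does \emph{not} use doubling or binary search at all. It proves a more general statement (Lemma~\ref{lem:sub_operation_lemma}) by a single greedy left-to-right traversal of the two derivation trees simultaneously, matching the largest common signatures it can. The time bound then falls out of a purely combinatorial fact (Lemma~\ref{lem:ancestors}): the set of ancestors, in the derivation tree, of the nodes forming $\mathit{Uniq}(P)$ has size $O(\log N + \log |P|\log^* M)$. No explicit length parameter $k$ is ever doubled; the ``doubling'' is implicit in the level structure of the encoding, and the algorithm never revisits a node.

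The gap in your scheme is concrete. When you extend $k$ to $2k$, the canonical decomposition $\mathit{Uniq}(T[i..i{+}2k{-}1])$ does share its left-boundary pieces $\hat{L}_t L_t$ with $\mathit{Uniq}(T[i..i{+}k{-}1])$, but the right-boundary pieces $R_t \hat{R}_t$ change at \emph{every} level $t=0,\dots,h^P{-}1$, and there are $\Theta(\log k)$ levels. So the ``newly exposed boundary signatures'' number $\Theta(\log k \cdot \log^* M)$, not $O(\log^* M)$, and summing over the $O(\log\ell)$ rounds yields $O(\log^2\ell \cdot \log^* M)$ rather than the target bound. To salvage $O(\log^* M)$ per step you would have to advance one \emph{level} at a time and read the needed $O(\log^* M)$ context from the already-traversed tree rather than recomputing $\mathit{Uniq}$ for a longer window; but that is precisely the paper's greedy traversal, and at that point the separate binary-search phase is unnecessary. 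Also, predecessor queries on $\mathcal{G}$ play no role in answering LCE queries here---they are used only for updates; the query needs nothing beyond the DAG with stored lengths.
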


\begin{theorem}[Updates]\label{theo:theorem2}
	After processing $\mathcal{G}$ in $O(w f_{\mathcal{A}})$ time,
	we can insert/delete any (sub)string $Y$ of length $y$ into/from an arbitrary position of $T$ in $O((y+ \log N\log^* M) f_{\mathcal{A}})$ time.
	If $Y$ is given as a substring of $T$,
	we can support insertion in $O(f_{\mathcal{A}} \log N \log^* M)$ time.
\end{theorem}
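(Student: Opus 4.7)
The plan is to first augment $\mathcal{G}$ with a predecessor/successor data structure over the right-hand sides of its signature dictionary (a set of $w$ pairs drawn from an $M$-element universe), which I can build in the allowed $O(w f_\mathcal{A})$ preprocessing time. This structure supports, in $O(f_\mathcal{A})$ time per call, the basic operation ``given a pair of signatures, return the signature already assigned to that pair, or install a fresh one''. Every subsequent manipulation of $\mathcal{G}$ will be phrased as a sequence of such dictionary calls.

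Next I would develop three primitive operations on signature-encoded strings that walk through the $O(\log N)$ levels of the signature DAG: $\addchar{c}$, returning the encoding of a one-character string; $\splito{T}{i}$, returning the encodings of $T[1..i]$ and $T[i+1..N]$; and $\concat{T_1,T_2}$, returning the encoding of $T_1 T_2$. The target running time is $O(f_\mathcal{A} \log N \log^* M)$ for both $\splito{T}{i}$ and $\concat{T_1,T_2}$, realised as $O(\log N \log^* M)$ dictionary calls.

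Granted these primitives, the updates are essentially bookkeeping. To insert an arbitrary string $Y$ of length $y$ at position $i$, I first construct the signature encoding $\mathcal{G}_Y$ of $Y$ in $O(y f_\mathcal{A})$ time via the uncompressed-input construction announced in the introduction, then call $\splito{T}{i}$ once and $\concat{\cdot,\cdot}$ twice; the total is $O((y + \log N \log^* M) f_\mathcal{A})$. Deletion of a length-$y$ factor is two SPLITs plus one CONCAT, which already fits in $O(f_\mathcal{A} \log N \log^* M)$ and hence in the stated bound. For the substring case $Y = T[j..j+y-1]$, I skip the expensive ``build from scratch'' step and instead extract an encoding of $Y$ by two SPLITs on (a copy of) $\mathcal{G}$; one further SPLIT at $i$ and two CONCATs complete the insertion, all within $O(f_\mathcal{A} \log N \log^* M)$.

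The main obstacle is showing that SPLIT and CONCAT can in fact be done in $O(\log N \log^* M)$ dictionary calls. The key lever is the \emph{local consistency} of signature encoding: the signature assigned to a block at level $k$ depends only on an $O(\log^* M)$-wide window of its level-$(k-1)$ children. Consequently, a single split or concatenation perturbs only an $O(\log^* M)$-wide window of signatures at each of the $O(\log N)$ levels, and these perturbations can be resolved level-by-level using $O(\log^* M)$ dictionary calls each. Careful treatment is needed at the boundary regions belonging to the repetitive blocks (Pow-rules) and the alternating blocks (Shrink-rules), so that splitting them in the middle still yields valid, locally-consistent parsings on each side; but because both rule types are themselves locally determined, the perturbed width at each level stays $O(\log^* M)$, giving the overall $O(\log N \log^* M)$ bound on dictionary calls and hence the claimed running times.
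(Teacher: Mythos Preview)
Your SPLIT/CONCAT decomposition is essentially the paper's approach in different clothing: the paper computes $\mathit{Uniq}(T[..i-1])$ and $\mathit{Uniq}(T[i..])$ via Lemma~\ref{lem:ComputeShortCommonSequence} (your SPLIT) and then rebuilds the signature encoding level by level over their concatenation (your CONCAT), with the same $O(\log N\log^* M)$ dictionary-call budget justified by the same local-consistency argument (Lemma~\ref{lem:ancestors}). Your dictionary structure is the paper's structure~(A).

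However, you are missing garbage collection, which the paper singles out as its contribution over the Alstrup et al.\ SPLIT/CONCAT framework. After a delete (and even after an insert, since the old start symbol and its spine become unreachable), $\mathcal{G}$ contains useless signatures; if you never remove them, the size $w$ of $\mathcal{G}$ no longer tracks the signature encoding of the \emph{current} $T$, and the bounds---which are stated in terms of $w$ through $f_{\mathcal{A}}$---become vacuous over a sequence of operations. The paper handles this with two further structures beyond your~(A): (B) a predecessor structure over maximal used-signature intervals so that fresh signatures are allocated as $\min([1..M]\setminus\mathcal{V})$ rather than from an ever-growing counter, and (C) a parent-count on each DAG node so that uselessness can be detected. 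After each update the paper runs a DFS from the old $S$ to sweep out the now-useless signatures; this sweep is precisely where the $y$ term in the deletion bound comes from (your claim that deletion costs only $O(f_{\mathcal{A}}\log N\log^* M)$ is a symptom of having skipped it). Without (B), (C), and the sweep, your scheme is correct for a single operation but does not sustain a sequence of updates within the stated bounds.
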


\begin{theorem}[Construction]\label{theo:HConstructuionTheorem}
	Let $T$ be a string of length $N$, 
	$Z$ be LZ77 factorization without self reference of size $z$ representing $T$,
	and $\mathcal{S}$ be an SLP of size $n$ generating $T$.
	Then, we can construct the signature encoding $\mathcal{G}$ for $T$ in
	(1a) in $O(N f_{\mathcal{A}})$ time and $O(w)$ working space from $T$,
	(1b) in $O(N)$ time and working space from $T$,
	(2) in $O(z f_{\mathcal{A}} \log N \log^* M)$ time and $O(w)$ working space from $Z$, 
	(3a) in $O(n f_{\mathcal{A}} \log N \log^* M)$ time and $O(w)$ working space from $\mathcal{S}$, and
	(3b) in $O(n \log \log (n \log^* M) \log N \log^* M)$ time and $O(n \log^* M + w)$ working space from $\mathcal{S}$. 
\end{theorem}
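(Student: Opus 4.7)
The plan is to handle the five cases (1a), (1b), (2), (3a), (3b) by a combination of direct construction and reuse of Theorem~\ref{theo:theorem2}.

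For (1a), I would implement the standard signature--encoding construction explicitly: in $O(\log N)$ rounds, first compress maximal runs and then apply alphabet reduction (the source of the $\log^* M$ factor) on the current sequence, inserting each newly generated signature into a predecessor structure that costs $f_{\mathcal{A}}$ per operation. Since each round shrinks the sequence by a constant factor, the total work sums geometrically to $O(N f_{\mathcal{A}})$, and the data structure lives in $O(w)$ space throughout because the dictionary of distinct signatures has size at most $w$. For (1b), I would replace the predecessor structure by a deterministic linear-time block-identification step: within each round, collect all blocks and sort them lexicographically by radix sort, then assign fresh names. This removes the $f_{\mathcal{A}}$ factor at the cost of letting the working space grow to $O(N)$, and the geometric decrease across rounds keeps the total time $O(N)$.

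For (2), I would bootstrap from Theorem~\ref{theo:theorem2}: initialize an empty $\mathcal{G}$ and process the $z$ LZ77 factors left to right. A factor that is a single literal character is inserted in $O(f_{\mathcal{A}} \log N \log^* M)$ time by the first bound of Theorem~\ref{theo:theorem2}. A factor that copies an earlier substring of the text currently represented by $\mathcal{G}$ is handled by the "$Y$ is a substring of $T$" case of Theorem~\ref{theo:theorem2}, which also costs $O(f_{\mathcal{A}} \log N \log^* M)$. Because the LZ77 factorization is required to be without self-reference, the source of each copy is entirely inside the already-built prefix, so the hypotheses of Theorem~\ref{theo:theorem2} are satisfied. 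Summing over all $z$ factors yields the claimed bound, and the $O(w)$ working space follows because no intermediate structure larger than $\mathcal{G}$ itself is needed.

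For (3a) and (3b), I would process the variables of $\mathcal{S}$ in bottom-up topological order, maintaining for each already-processed variable the signature encoding of the string it derives. For a production $X \to YZ$, the signature encodings of $Y$ and $Z$ are glued along the seam using the concatenation primitive underlying Theorem~\ref{theo:theorem2}, which only modifies $O(\log N \log^* M)$ signatures. Each of the $n$ variables therefore costs $O(f_{\mathcal{A}} \log N \log^* M)$, giving (3a). For (3b), the key observation is that across all $n$ concatenations, the set of \emph{newly created} signatures appearing on seams has size only $O(n \log^* M)$, because each seam contributes $O(\log^* M)$ signatures per level of the $O(\log N)$ levels but the levels share signatures by reuse. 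Storing this set in a van~Emde~Boas-style dictionary over a universe of size $O(n \log^* M)$ replaces $f_{\mathcal{A}}$ by $O(\log \log (n \log^* M))$, at the price of enlarging the working space by the same additive term.

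The main obstacle will be (3b): we must argue carefully that constructing a fresh signature encoding for every SLP variable does not blow up space or time, so the common signatures must be shared across variables and reused. In particular, we need to establish that concatenating two signature encodings does not touch their interiors---only the $O(\log N \log^* M)$ signatures along the seam---and that the dictionary of signatures grows monotonically throughout the bottom-up traversal, so that the predecessor/hash structure can be incrementally maintained. This is what makes the total cost scale with $n$ rather than with the size of the expanded text, and is the crux that has to be verified using structural properties of signature encodings developed earlier in the paper.
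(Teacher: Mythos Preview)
Your treatments of (1b) and (2) match the paper. However, (1a), (3a), and (3b) each have a genuine gap in the space analysis, and (3b) also diverges substantially from the paper's argument.

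For (1a), the direct level-by-level construction must materialize $\shrink{0}{T}$, which has length $N$; the working space is therefore $\Theta(N)$, not $O(w)$, regardless of how small the dictionary of distinct signatures is. The paper avoids this by cutting $T$ into blocks of length $B=\log N\log^* M$ and calling $\mathit{INSERT}$ once per block, so that only the current block together with the $O(w)$-size dynamic encoding is ever held. A similar problem hits your (3a): the bottom-up concatenation keeps $\id{\val{X_i}}$ for every variable, and each of the $n$ concatenations adds up to $O(\log N\log^* M)$ fresh signatures to the shared dictionary; since nothing is discarded, the dictionary can reach size $\Theta(n\log N\log^* M)$, which is not $O(w)$ in general (recall $w=O(z\log N\log^* M)$ with $z\le n$, possibly $z\ll n$). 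The paper instead takes the G-factorization of $T$ with respect to $\mathcal{S}$---a decomposition of $T$ into $O(n)$ phrases, each either a previously seen substring or a new character---and then reuses the argument of~(2), so that only a single encoding of a growing prefix of $T$ is ever maintained.

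For (3b), your central count is incorrect: the number of signatures created across all $n$ seams is $O(n\log N\log^* M)$, not $O(n\log^* M)$; the assertion that ``levels share signatures by reuse'' is unjustified, and indeed a single concatenation can create $\Theta(\log N\log^* M)$ genuinely new signatures (this is exactly the cost in Theorem~\ref{theo:theorem2}). Hence your dictionary, and with it your working space, is $\Theta(n\log N\log^* M)$ rather than $O(n\log^* M + w)$. The paper's (3b) does not concatenate per variable at all. It proceeds level by level, maintaining for each level $t$ an $O(n\log^* M)$-size \emph{compressed} representation $\Lambda_t$ of $\pow{t}{X_n}$, built from the $O(\log^* M)$-length seam fragments $z_t^{X_i}$ (one per variable) plus two boundary pieces. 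From $\Lambda_t$ (resp.\ $\hat{\Lambda}_t$) one extracts the $O(n\log^* M)$ blocks (resp.\ run-length pairs) that require new names and assigns them by \emph{sorting}: bucket sort for blocks, and Han's $O(x\log\log x)$ integer sort for the exponents of runs, since run lengths can be as large as $N$ and bucket sort would blow up space. This sorting step is the source of the $\log\log(n\log^* M)$ factor and is what keeps the per-level working space at $O(n\log^* M)$.
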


The remarks on our contributions are listed in the following:
\begin{itemize}
  \item We achieve an algorithm for the fastest deterministic LCE queries on SLPs, which even permits faster LCE queries than
        the randomized data structure of Bille et al.~\cite{BilleCCG15} 
        when $\log^* M = o(\log \ell)$ which in many cases is true.
  \item We present the first fully dynamic LCE data structure working in compressed space.
  \item Different from the work in~\cite{DBLP:conf/soda/AlstrupBR00,LongAlstrup,Gawrychowski2015OptimalDynamicStrings},
        we mainly focus on maintaining a single text $T$ in compressed $O(w)$ space.
        For this reason we opt for supporting insertion/deletion as edit operations rather than split/concatenate on a dynamic set of strings.
        However, the difference is not much essential;
        our insert operations specified by a substring of an existing string can work as split/concatenate, and conversely, split/concatenate can simulate insert.
        Our contribution here is to clarify how to collect garbage being produced during edit operations, as directly indicated by a support of delete operations.
  \item The results (2) and (3a) of Theorem~\ref{theo:HConstructuionTheorem} immediately follow from 
        the update operations considered in~\cite{DBLP:conf/soda/AlstrupBR00,LongAlstrup}, but others are nontrivial.
  \item Direct construction of $\mathcal{G}$ from SLPs is important for applications in compressed string processing,
        where the task is to process a given compressed representation of string(s) without explicit decompression.
	In particular, we use the result (3b) of Theorem~\ref{theo:HConstructuionTheorem} to show several applications which improve previous best known results. 
	Note that the time complexity of the result (3b) can be written as $O(n \log \log n \log N \log^* M)$
	when $\log^* M = O(n)$ which in many cases is true, and always true in static case because $\log^* M = O(\log^* N) = O(\log N) = O(n)$.
\end{itemize}

	Proofs and examples omitted due to lack of space are in a full version of this paper~\cite{DBLP:journals/corr/NishimotoIIBT16}.
\section{Preliminaries} \label{sec:preliminary}

\subsection{Strings}

Let $\Sigma$ be an ordered alphabet.
An element of $\Sigma^*$ is called a string.
For string $w = xyz$,
$x$, $y$ and $z$ are called a prefix, substring, and suffix of $w$, respectively.
The length of string $w$ is denoted by $|w|$.
The empty string $\varepsilon$ is a string of length $0$.
Let $\Sigma^+ = \Sigma^* - \{\varepsilon\}$.
For any $1 \leq i \leq |w|$, $w[i]$ denotes the $i$-th character of $w$.
For any $1 \leq i \leq j \leq |w|$,
$w[i..j]$ denotes the substring of $w$ that begins at position $i$
and ends at position $j$.
Let $w[i..] = w[i..|w|]$ and $w[..i] = w[1..i]$ for any $1 \leq i \leq |w|$.
For any string $w$, let $\rev{w}$ denote the reversed string of $w$,
that is, $\rev{w} = w[|w|] \cdots w[2]w[1]$. 
For any strings $w$ and $u$, 
let $\lcp{w}{u}$ (resp. $\lcs{w}{u}$) denote the length of 
the longest common prefix (resp. suffix) of $w$ and $u$.
Given two strings $s_1, s_2$ and two integers $i, j$, let $\LCEQ(s_1, s_2, i, j)$ denote a query which returns $\lcp{s_1[i..|s_1|]}{s_2[j..|s_2|]}$.
Our model of computation is the unit-cost word RAM with machine word 
size of $\Omega(\log_2 M)$ bits,
and space complexities will be evaluated by the number of machine words.
Bit-oriented evaluation of space complexities can be obtained with 
a $\log_2 M$ multiplicative factor.
\begin{definition}[Lempel-Ziv77 factorization~\cite{LZ77}]
The Lempel-Ziv77 (LZ77) factorization of a string $s$ without self-references is 
a sequence $f_1, \ldots, f_z$ of non-empty substrings of $s$ 
such that $s = f_1 \cdots f_z$,
$f_1 = s[1]$, 
and for $1 < i \leq z$, if the character $s[|f_1..f_{i-1}|+1]$ does not occur in $s[|f_1..f_{i-1}|]$, then $f_i = s[|f_1..f_{i-1}|+1]$, otherwise $f_i$ is the longest prefix of $f_i \cdots f_z$ which occurs in $f_1 \cdots f_{i-1}$. 
The size of the LZ77 factorization $f_1, \ldots, f_z$ of string $s$
is the number $z$ of factors in the factorization.
\end{definition}

\subsection{Context free grammars as compressed representation of strings}

\vspace*{0.5pc}
\noindent \textbf{Straight-line programs.}
A \emph{straight-line program} (\emph{SLP}) is a context free grammar in the Chomsky normal form
that generates a single string.
Formally, an SLP that generates $T$ is
a quadruple $\mathcal{G} = (\Sigma, \mathcal{V}, \mathcal{D}, S)$, 
such that
$\Sigma$ is an ordered alphabet of terminal characters;
$\mathcal{V} = \{ X_1, \ldots, X_{n} \}$ is a set of positive integers, called \emph{variables};
$\mathcal{D} = \{X_i \rightarrow \mathit{expr}_i\}_{i = 1}^{n}$ is a set of \emph{deterministic productions} (or \emph{assignments})
with each $\mathit{expr}_i$ being either of form $X_\ell X_r~(1 \leq \ell, r < i)$, or a single character $a \in \Sigma$;
and $S := X_{n} \in \mathcal{V}$ is the start symbol which derives the string $T$.
We also assume that the grammar neither contains \emph{redundant} variables (i.e., there is at most one assignment whose righthand side is $\mathit{expr}$)
nor \emph{useless} variables (i.e., every variable appears at least once in the derivation tree of $\mathcal{G}$).
The \emph{size} of the SLP $\mathcal{G}$ is the number $n$ of productions in $\mathcal{D}$.
In the extreme cases the length $N$ of the string $T$ can be as large as $2^{n-1}$,
however, it is always the case that $n \geq \log_2 N$.

Let $\mathit{val}: \mathcal{V} \rightarrow \Sigma^+$ be the function
which returns the string derived by an input variable.
If $s = \val{X}$ for $X \in \mathcal{V}$,
then we say that the variable $X$ \emph{represents} string $s$.
For any variable sequence $y \in \mathcal{V}^{+}$,
let $\valp{y} = \val{y[1]} \cdots \val{y[|y|]}$.

\vspace*{0.5pc}
\noindent \textbf{Run-length straight-line programs.}
We define \emph{run-length SLPs} (\emph{RLSLPs}), as an extension to SLPs,
which allow \emph{run-length encodings} in the righthand sides of productions, i.e.,
$\mathcal{D}$ might contain a production $X \rightarrow \hat{X}^{k} \in \mathcal{V} \times \mathcal{N}$.
The \emph{size} of the RLSLP is still the number of productions in $\mathcal{D}$
as each production can be encoded in constant space.
Let $\mathit{Assgn}_{\mathcal{G}}$ be the function such that
$\mathit{Assgn}_{\mathcal{G}}(X_i) = \mathit{expr_i}$ iff $X_i \rightarrow \mathit{expr_i} \in \mathcal{D}$.
Also, let $\mathit{Assgn}^{-1}_{\mathcal{G}}$ denote the reverse function of $\mathit{Assgn}_{\mathcal{G}}$.
When clear from the context, 
we write $\mathit{Assgn}_{\mathcal{G}}$ and $\mathit{Assgn}^{-1}_{\mathcal{G}}$
as $\mathit{Assgn}$ and $\mathit{Assgn}^{-1}$, respectively.

\vspace*{0.5pc}
\noindent \textbf{Representation of RLSLPs.}
For an RLSLP $\mathcal{G}$ of size $w$, 
we can consider a DAG of size $w$ as a compact representation of the derivation trees of variables in $\mathcal{G}$.
Each node represents a variable $X$ in $\mathcal{V}$ and store $|\val{X}|$
and out-going edges represent the assignments in $\mathcal{D}$:
For an assignment $X_i \rightarrow X_{\ell}X_r \in \mathcal{D}$,
there exist two out-going edges from $X_i$ to its ordered children $X_{\ell}$ and $X_r$;
and for $X \rightarrow \hat{X}^{k} \in \mathcal{D}$, there is a single edge from $X$ to $\hat{X}$ with the multiplicative factor $k$. 

\section{Signature encoding}\label{sec:Framework}

Here, we recall the \emph{signature encoding} 
first proposed by Mehlhorn et al.~\cite{DBLP:journals/algorithmica/MehlhornSU97}.
Its core technique is \emph{locally consistent parsing} defined as follows:

\begin{lemma}[Locally consistent parsing~\cite{DBLP:journals/algorithmica/MehlhornSU97,LongAlstrup}]\label{lem:CoinTossing}
  Let $W$ be a positive integer.
  There exists a function $f: [0..W]^{\log^* W + 11} \rightarrow \{0,1\}$ such that,
  for any $p \in [1..W]^n$ with $n \geq 2$ and $p[i] \neq p[i+1]$ for any $1 \leq i < n$,
  the bit sequence $d$ defined by 
  $d[i] = f(\tilde{p}[i-\deltaLR{L}], \ldots, \tilde{p}[i+\deltaLR{R}])$ for $1 \leq i \leq n$, satisfies:
  $d[1] = 1$; 
  $d[n] = 0$;
  $d[i] + d[i+1] \leq 1$ for $1 \leq i < n$; 
  and $d[i] + d[i+1] + d[i+2] + d[i+3] \geq 1$ for any $1 \leq i < n-3$;
where $\deltaLR{L} = \log ^*W + 6$, $\deltaLR{R} = 4$, and $\tilde{p}[j] = p[j]$ for all $1 \leq j \leq n$, $\tilde{p}[j] = 0$ otherwise. 
Furthermore, we can compute $d$ in $O(n)$ time using a precomputed table of size $o(\log W)$, which can be computed in $o(\log W)$ time.
\end{lemma}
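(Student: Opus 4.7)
The plan is to apply the classical deterministic coin tossing technique of Cole and Vishkin, in the form used by Mehlhorn--Sundar--Uhrig. I proceed in three stages: (i) iterated alphabet reduction, (ii) post-processing to a constant-size coloring, and (iii) selection of marker bits via local minima.

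For (i), I would define a sequence of labelings $p = p_0, p_1, \ldots, p_k$ with $k = \log^* W + O(1)$. Given $p_j$ (whose adjacent symbols are distinct and lie in $[0..B-1]$), I set $p_{j+1}[i]$ by locating the least-significant bit $\ell$ on which $p_j[i]$ differs from $p_j[i-1]$ and letting $p_{j+1}[i] = 2\ell + \mathrm{bit}_\ell(p_j[i])$. A standard induction shows that $p_{j+1}$ still has distinct adjacent symbols and alphabet size at most $2\lceil \log_2 B\rceil$; iterating drives the alphabet to constant size after $\log^* W$ rounds. Crucially, $p_j[i]$ depends only on $p[i-j..i]$, so the left-context grows by exactly one per round.

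For (ii), a further $O(1)$ rounds of the same reduction turn $p_k$ into a $3$-coloring $c$, i.e., a labeling in $\{0,1,2\}$ with adjacent symbols distinct. For (iii), I set $d[i] = 1$ iff $c[i]$ is a strict local minimum, i.e., $c[i] < c[i-1]$ and $c[i] < c[i+1]$, where the padded sequence $\tilde p$ supplies the sentinels at the endpoints and forces $d[1]=1$ and $d[n]=0$. Local minima clearly cannot be adjacent, giving $d[i]+d[i+1]\le 1$. A short case analysis over a three-letter alphabet with distinct neighbors shows that four consecutive positions must contain a local minimum, giving $d[i]+d[i+1]+d[i+2]+d[i+3]\ge 1$. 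Tallying the window: stage (i) requires $\log^* W + O(1)$ positions to the left, while stages (ii) and (iii) add $O(1)$ positions on each side; the constants can be tuned so that $\deltaLR{L}=\log^* W+6$ on the left and $\deltaLR{R}=4$ on the right suffice, for a total window of size $\log^* W+11$. The entire computation of $d[i]$ is then packaged as a single function $f$ of this window. For the complexity, one precomputes a table for the bit-level primitives (lowest differing bit, bit extraction) on $O(\log W)$-bit integers; since after a few rounds all further labels are of constant bit-length, a table of size $o(\log W)$ built in $o(\log W)$ time suffices, and a single sweep maintaining all $k+O(1)$ reductions in a sliding window processes the entire sequence in $O(n)$ time.

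The main obstacle I expect is pinning down the exact constants: verifying that $\log^* W+O(1)$ rounds really drive the alphabet to constant size while tracking how the left-context accumulates, checking that the $O(1)$ post-processing after reaching a constant alphabet yields a coloring whose local minima are dense enough to guarantee a marker in every four consecutive positions, and arranging the precomputed table so that it remains $o(\log W)$ in size even though the formal argument to $f$ involves $\log^* W + 11$ symbols drawn from $[0..W]$ (which forces the table to tabulate only the low-level bit operations rather than the outer window).
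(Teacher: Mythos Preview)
Your high-level plan (Cole--Vishkin iterated alphabet reduction followed by marking local extrema of the resulting constant-size coloring) is essentially the same as the paper's sketch, which invokes a reduction $f'$ taking a $W$-colored sequence to a $(\log W)$-colored one, applies it $\log^* W + 2$ times, and then applies a local function $f''$ on a $6$-colored sequence looking at $p[i-3..i+3]$.

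There is, however, a genuine gap in your $O(n)$ time argument. If you tabulate only the low-level bit primitives and then ``maintain all $k+O(1)$ reductions in a sliding window,'' each new position still forces you to propagate through all $k=\log^* W$ levels, giving $O(n\log^* W)$ time, not $O(n)$. In your obstacle paragraph you explicitly rule out tabulating the outer window on the grounds that its symbols come from $[0..W]$, but that is exactly the missing idea: you first apply $f'$ three times \emph{explicitly} (each round in $O(n)$ time), obtaining a $(\log\log\log W)$-colored sequence $p^{\langle 3\rangle}$; now the entire remaining computation---the remaining $\log^* W - 1$ rounds of reduction together with the final local-extrema step---is a function of a window of $p^{\langle 3\rangle}$ of length $\log^* W + 11$, and the number of such windows is at most $2^{(\log^* W + 11)\log\log\log W}=o(\log W)$. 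Hence a single precomputed table of size $o(\log W)$ turns each $d[i]$ into one lookup, giving $O(n)$ total time. This is the Alstrup et al.\ speedup the paper cites; without it your plan stalls at $O(n\log^* W)$.
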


For the bit sequence $d$ of Lemma~\ref{lem:CoinTossing},
we define the function $\mathit{Eblock}_d(p)$ that 
decomposes an integer sequence $p$ according to $d$:
$\mathit{Eblock}_d(p)$ decomposes $p$ into a sequence
$q_1, \ldots, q_j$ of substrings called \emph{blocks} of $p$,
such that $p = q_1 \cdots q_j$ and 
$q_i$ is in the decomposition iff $d[|q_1 \cdots q_{i-1}|+1] = 1$
for any $1 \leq i \leq j$.
Note that each block is of length from two to four by the property of $d$, i.e.,
$2 \leq |q_i| \leq 4$ for any $1 \leq i \leq j$.
Let $|\mathit{Eblock}_{d}(p)| = j$ and let $\mathit{Eblock}_{d}(s)[i] = q_i$.
We omit $d$ and write $\encblock{p}$ when it is clear from the context, 
and we use implicitly the bit sequence created by Lemma~\ref{lem:CoinTossing} as $d$. 

We complementarily use run-length encoding to get a sequence to which $\mathit{Eblock}$ can be applied.
Formally, for a string $s$, 
let $\encpow{s}$ be the function which 
groups each maximal run of same characters $a$ as $a^k$,
where $k$ is the length of the run. 
$\encpow{s}$ can be computed in $O(|s|)$ time.
Let $|\encpow{s}|$ denote the number of maximal runs of same characters in $s$ and let $\encpow{s}[i]$ denote $i$-th maximal run in $s$.

The signature encoding is the RLSLP $\mathcal{G} = (\Sigma, \mathcal{V}, \mathcal{D}, S)$,
where the assignments in $\mathcal{D}$ are determined by
recursively applying $\mathit{Eblock}$ and $\mathit{Epow}$ to $T$
until a single integer $S$ is obtained.
We call each variable of the signature encoding a \emph{signature},
and use $e$ (for example, $e_i \rightarrow e_{\ell}e_{r} \in \mathcal{D}$)
instead of $X$ to distinguish from general RLSLPs.

For a formal description,
let $E := \Sigma \cup \mathcal{V}^{2} \cup \mathcal{V}^{3} \cup \mathcal{V}^{4} \cup (\mathcal{V} \times \mathcal{N})$
and let $\mathit{Sig}: E \rightarrow \mathcal{V}$ be the function such that:
$\mathit{Sig}(\mathit{x}) = e$ if $(e \rightarrow \mathit{x}) \in \mathcal{D}$;
$\mathit{Sig}(\mathit{x}) = \mathit{Sig}( \mathit{Sig}(\mathit{x}[1..|\mathit{x}|-1]) \mathit{x}[|\mathit{x}|])$
if $\mathit{x} \in \mathcal{V}^{3} \cup \mathcal{V}^{4}$;
or otherwise undefined.
Namely, the function $\mathit{Sig}$ returns, if any,
the lefthand side of the corresponding production of $\mathit{x}$
by recursively applying the $\mathit{Assgn}^{-1}$ function from left to right.
For any $p \in E^*$,
let $\mathit{Sig}^{+}(p) = \sig{p[1]} \cdots \sig{p[|p|]}$.

The signature encoding of string $T$ is defined by the following $\mathit{Shrink}$ and $\mathit{Pow}$ functions:
$\shrink{t}{T} = \mathit{Sig}^{+}(T)$ for $t = 0$, and $\shrink{t}{T} = \mathit{Sig}^{+}(\encblock{\pow{t-1}{T}})$ for $0 < t \leq h$; and
$\pow{t}{T} = \mathit{Sig}^{+}(\encpow{\shrink{t}{T}})$ for $0 \leq t \leq h$;
where $h$ is the minimum integer satisfying $|\pow{h}{T}| = 1$.
Then, the start symbol of the signature encoding is $S = \pow{h}{T}$.
We say that a node is in \emph{level} $t$ in the derivation tree of $S$
if the node is produced by $\shrink{t}{T}$ or $\pow{t}{T}$.
The height of the derivation tree of the signature encoding of $T$ is
$O(h) = O(\log |T|)$.
For any $T \in \Sigma^+$,
let $\id{T} = \pow{h}{T} = S$, i.e.,
the integer $S$ is the signature of $T$.

In this paper, we implement signature encodings by the DAG of RLSLP introduced in Section~\ref{sec:preliminary}.

\section{Compressed LCE data structure using signature encodings}\label{sec:lce}
In this section, we show Theorem~\ref{theo:theorem1}.

\vspace*{0.5pc}
\noindent \textbf{Space requirement of the signature encoding.}
It is clear from the definition of the signature encoding $\mathcal{G}$ of $T$ that the size of $\mathcal{G}$ is less than $4N \leq M$,
and hence, all signatures are in $[1..M-1]$.
Moreover, the next lemma shows that $\mathcal{G}$ requires only \emph{compressed space}:
\begin{lemma}[\cite{18045}]\label{lem:upperbound_signature}
The size $w$ of the signature encoding of $T$ of length $N$ is 
$O(z \log N \log^* M)$,
where $z$ is the number of factors in the LZ77 factorization without self-reference of $T$. 
\end{lemma}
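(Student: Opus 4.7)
The plan is to exploit the \emph{local consistency} of the signature encoding: up to boundary effects of width $O(\log^* M)$, any two equal substrings of $T$ produce identical sequences of signatures at every level of the parse. Since the LZ77 factorization without self-reference writes $T$ as $f_1 \cdots f_z$ where every $f_i$ of length $\geq 2$ occurs earlier in $T$, we will charge the total number of distinct right-hand sides in $\mathcal{D}$ to the factors, counting ``newly introduced'' signatures level by level.

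First I would formalize the locality statement. At each level $t$, define the affected region near a boundary of an occurrence to be the set of positions whose parsing might disagree with the corresponding positions of a matching occurrence elsewhere. At level $0$ this region is empty away from the factor's endpoints. The map from level $t$ to level $t+1$ consists of applying $\encpow$, which is a local deterministic rewriting, followed by $\encblock$, which by Lemma~\ref{lem:CoinTossing} produces a block boundary at position $i$ depending only on the window $[i-\deltaLR{L}, i+\deltaLR{R}]$ of size $\log^* M + 11$. Consequently, symbols at level $t+1$ whose preimages at level $t$ lie entirely outside the $O(\log^* M)$ neighborhood of a factor boundary must agree between the factor and its source. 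Because blocks have size between $2$ and $4$, the length of the sequence shrinks by a constant factor at each level, and the absolute width of the boundary-affected region at level $t+1$ is bounded by the width at level $t$ divided by two plus the window width $O(\log^* M)$; this recurrence stays in $O(\log^* M)$.

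Next I would charge. A signature $e$ contributes $1$ to $w$ only at the (unique) level and position where its right-hand side is first created; say that $e$ is introduced by factor $f_i$ at level $t$ if the first occurrence of its defining $\encblock$/$\encpow$ window inside $T$'s parse lies within the span of $f_i$ at level $t$. By the locality argument, every position inside $f_i$'s span at level $t$ that lies at distance more than $O(\log^* M)$ from both boundaries of $f_i$ parses identically to the corresponding position in the earlier source occurrence, so no signature is introduced there by $f_i$. Hence $f_i$ introduces at most $O(\log^* M)$ signatures per level. Since the derivation tree has height $h = O(\log N)$ and there are $z$ factors (plus $O(1)$ for the single, possibly novel first character of each factor and for the start symbol), the total number of distinct signatures is $O(z \log N \log^* M)$.

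The main obstacle is to verify rigorously that the boundary-affected region does not blow up geometrically as we ascend the levels; once the recurrence argument sketched above is established (using block sizes $\leq 4$ together with the fixed window width from Lemma~\ref{lem:CoinTossing}), the level-by-level counting is straightforward. A secondary subtlety is the initial character of each $f_i$, which need not be a copy when $f_i$ is a fresh character, but this contributes only $O(z)$ additional signatures and is absorbed into the final bound.
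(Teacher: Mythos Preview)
Your argument is correct and captures the essential combinatorial content, but the paper takes a different (and, in context, shorter) route. The paper does not bound the size of the static encoding directly; instead it observes that the signature encoding of $T$ can be built incrementally from the LZ77 factorization by starting from the empty encoding and performing $z$ operations $\mathit{INSERT'}(c_i,|f_i|,|f_1\cdots f_{i-1}|+1)$, one per factor. Since each $\mathit{INSERT'}$ of a substring already present in the current text touches only the ancestors of a single $\uniq{\cdot}$ sequence, Lemma~\ref{lem:ancestors} immediately bounds the number of newly created signatures per step by $O(\log N \log^* M)$, and the bound $w = O(z \log N \log^* M)$ follows by summing over the $z$ factors.

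The two approaches are really the same locality phenomenon viewed from two sides. Your static charging argument (``the leftmost occurrence of every distinct signature must lie within $O(\log^* M)$ of a factor boundary at its level'') is exactly what underlies Lemma~\ref{lem:common_sequence2} and Lemma~\ref{lem:ancestors}; the paper simply packages that reasoning once in those lemmas and then invokes them through the dynamic $\mathit{INSERT'}$ machinery. The paper's route is more economical here because the update analysis is needed anyway for Theorem~\ref{theo:theorem2}, so the size bound comes for free. Your route has the advantage of being self-contained: it does not require setting up the dynamic data structure or proving correctness of updates, and the recurrence $W_{t+1} \le W_t/2 + O(\log^* M)$ for the boundary width is the clean statement that makes the $O(\log^* M)$-per-level count transparent.
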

\vspace*{0.5pc}
\noindent \textbf{Common sequences of signatures to all occurrences of same substrings.}
Here, we recall the most important property of the signature encoding,
which ensures the existence of common signatures to all occurrences of same substrings by the following lemma.

\begin{lemma}[common sequences~\cite{18045}]\label{lem:common_sequence2}
	Let $\mathcal{G}$ be a signature encoding for a string $T$. 
	Every substring $P$ in $T$ is represented by a signature sequence $\mathit{Uniq}(P)$ in $\mathcal{G}$ for a string $P$.  
\end{lemma}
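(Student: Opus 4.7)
The plan is to construct $\uniq{P}$ explicitly by simulating the signature encoding on $P$ in isolation, and to argue by induction on the levels $t = 0, 1, \ldots, h$ that the encoding of every occurrence of $P$ inside $T$ shares a common interior that coincides with $\uniq{P}$. At level $0$ the terminal characters of $P$ receive signatures dictated by $\mathcal{G}$, which are identical for every occurrence of $P$, so the base case is immediate.

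For the inductive step I would analyze the two operations separately. The function $\encpow$ groups maximal runs of equal signatures, a decision that depends only on the two adjacent symbols, so after discarding at most one run from each end the resulting run-length encoded middle is determined by the content of $P$ alone. The function $\encblock$ is governed by the bit sequence $d$ of Lemma~\ref{lem:CoinTossing}, which depends only on a sliding window of width $\deltaLR{L} + \deltaLR{R} + 1 = O(\log^* M)$. Consequently, after trimming $O(\log^* M)$ signatures from each end of the level-$t$ sequence, the block decomposition over the remaining interior is identical across all occurrences of $P$. Hence at every level a shared interior signature sequence of $P$ is well defined, independent of the surrounding context in $T$.

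To produce $\uniq{P}$, I would then glue together, at each level, the common interior signatures together with the leftover fringe symbols peeled off from the two ends at that level; these fringes themselves encode a short prefix and suffix of the factor of $P$ still uncovered, and the induction carries them to the next level. The derivation of this assembled sequence equals $P$ by construction, and because $\shrink{h}{\,\cdot\,}$ terminates in $h = O(\log N)$ levels while each level contributes only $O(\log^* M)$ fringe symbols, the total length of $\uniq{P}$ is $O(\log N \log^* M)$. Existence of a canonical sequence is therefore a by-product of the construction, and its invariance across occurrences follows from the level-wise agreement proven by the induction.

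The main obstacle is the boundary bookkeeping. Near the ends of $P$, characters may be absorbed into blocks or runs that straddle the boundary of $P$ inside $T$, so the naive "image of $P$ in the derivation tree of $T$" is not the same for all occurrences. The fix is to insist that $\uniq{P}$ keeps a maximal interior signature factor at every level while the two short fringes are pushed up to be re-parsed at the next level, and to check inductively that the cumulative fringe across all $h$ levels remains bounded. Lemma~\ref{lem:CoinTossing} is precisely what guarantees both the width bound on the fringes and the determinism of the interior, so the argument reduces to a careful accounting of how fringes at level $t$ feed into fringes at level $t+1$.
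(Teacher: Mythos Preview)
Your proposal is correct and follows essentially the same approach as the paper: the paper formalizes your ``interior plus fringes'' construction via the sequences $\mathit{XShrink}_t^{P}$, $\mathit{XPow}_t^{P}$ and the peeled pieces $L_t^{P}$, $R_t^{P}$, $\hat{L}_t^{P}$, $\hat{R}_t^{P}$ (Definition~\ref{def:xshrink}), and then proves by induction on $t$ that $\mathit{XShrink}_t^{P}$ and $\mathit{XPow}_t^{P}$ occur inside $\shrink{t}{T}$ and $\pow{t}{T}$ at every occurrence of $P$, using precisely the locality of Lemma~\ref{lem:CoinTossing} for $\encblock$ and the one-run trimming for $\encpow$ that you describe. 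The only cosmetic slip is that the fringes are not ``pushed up to be re-parsed at the next level''; they are recorded as-is in $\uniq{P}$ and only the remaining interior is passed upward, but your surrounding text makes clear you have the right picture.
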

$\mathit{Uniq}(P)$, which we call the \emph{common sequence} of $P$, is defined by the following.

\begin{definition}\label{def:xshrink}
For a string $P$, let
\begin{eqnarray*}
 \mathit{XShrink}_t^{P} &=&
  \begin{cases}
   \mathit{Sig}^{+}(P) & \mbox{ for } t = 0, \\
   \mathit{Sig}^{+}(\encblockd{\mathit{XPow}_{t-1}^{P}}[|L_{t}^{P}|..|\mathit{XPow}_{t-1}^{P}|-|R_{t}^{P}|]) & \mbox{ for } 0 < t \leq h^{P}, \\
  \end{cases} \\
\mathit{XPow}_t^{P} &=& \mathit{Sig}^{+}(\encpow{\mathit{XShrink}_t^{P}[|\hat{L}_{t}^{P}| + 1..|\mathit{XShrink}_t^{P}| - |\hat{R}_{t}^{P}]}|) \ \mbox{ for } 0 \leq t < h^{P}, \mbox{ where}
\end{eqnarray*}
\begin{itemize}
  \item $L_{t}^{P}$ is the shortest prefix of $\mathit{XPow}_{t-1}^{P}$ of length at least $\deltaLR{L}$ such that $d[|L_{t}^{P}|+1] = 1$,
  \item $R_{t}^{P}$ is the shortest suffix of $\mathit{XPow}_{t-1}^{P}$ of length at least $\deltaLR{R}+1$ such that $d[|d| - |R_{t}^{P}| + 1] = 1$,
  \item $\hat{L}_{t}^{P}$ is the longest prefix of $\mathit{XShrink}_t^{P}$ such that $|\encpow{\hat{L}_{t}^{P}}|  = 1$,
  \item $\hat{R}_{t}^{P}$ is the longest suffix of $\mathit{XShrink}_t^{P}$ such that $|\encpow{\hat{R}_{t}^{P}}| = 1$, and
  \item $h^{P}$ is the minimum integer such that $|\encpow{\mathit{XShrink}_{h^{P}}^{P}}| \leq \Delta_{L} + \Delta_{R} + 9$.
\end{itemize}
\end{definition}
Note that $\Delta_{L} \leq |L_{t}^{P}| \leq \Delta_{L} + 3$ and $\Delta_{R}+1 \leq |R_{t}^{P}| \leq \Delta_{R} + 4$ hold by the definition. 
Hence $|\xshrink{t+1}{P}| > 0$ holds if $|\encpow{\xshrink{t}{P}}| > \Delta_{L} + \Delta_{R} + 9$. 
Then,
\[ 
\mathit{Uniq}(P) = \hat{L}_{0}^{P}L_{0}^{P} \cdots 
\hat{L}_{h^{P}-1}^{P}L_{h^{P}-1}^{P}\mathit{XShrink}_{h^{P}}^{P}R_{h^{P}-1}^{P}\hat{R}_{h^{P}-1}^{P} \cdots R_{0}^{P}\hat{R}_{0}^{P}.
\] 

We give an intuitive description of Lemma~\ref{lem:common_sequence2}. 
Recall the locally consistent parsing of Lemma~\ref{lem:CoinTossing}. 
Each $i$-th bit of bit sequence $d$ of Lemma~\ref{lem:CoinTossing} for a given string $s$
is determined by $s[i-\deltaLR{L}..i+\deltaLR{R}]$. 
Hence, for two positions $i, j$ such that $P = s[i..i+k-1] = s[j..j+k-1]$ for some $k$, 
$d[i+\deltaLR{L}..i+k-1-\deltaLR{R}] = d[j+\deltaLR{L}..j+k-1-\deltaLR{R}]$ holds, namely, 
``internal'' bit sequences of the same substring of $s$ are equal. 
Since each level of the signature encoding uses the bit sequence,
all occurrences of same substrings in a string share same internal signature sequences,
and this goes up level by level.
$\xshrink{t}{P}$ and $\xpow{t}{P}$ represent signature sequences 
obtained from only internal signature sequences of $\xpow{t-1}{T}$ and $\xshrink{t}{T}$, respectively. 
This means that $\xshrink{t}{P}$ and $\xpow{t}{P}$ are always created over $P$.
From such common signatures we take as short signature sequence as possible for $\mathit{Uniq}(P)$:
Since $\valp{\pow{t-1}{P}} = \valp{L_{t-1}^{P}\xshrink{t}{P}R_{t-1}^{P}}$ and 
$\valp{\shrink{t}{P}} = \valp{\hat{L}_{t}^{P}\xpow{t}{P}\hat{R}_{t}^{P}}$ hold, 
$|\encpow{\mathit{Uniq}(P)}| = O(\log |P| \log^* M)$ and $\valp{\mathit{Uniq}(P)} = P$ hold.
Hence Lemma~\ref{lem:common_sequence2} holds~\footnote{
	The common sequences are conceptually equivalent to
	the \emph{cores}~\cite{maruyama13:_esp} which are defined for the
	\emph{edit sensitive parsing} of a text,
	a kind of locally consistent parsing of the text.
}.

The number of ancestors of nodes corresponding to $\mathit{Uniq}(P)$ is upper bounded by:
\begin{lemma}\label{lem:ancestors}
  Let $\mathcal{G} = (\Sigma, \mathcal{V}, \mathcal{D}, S)$ be a signature encoding for a string $T$, 
  $P$ be a string, and let $\mathcal{T}$ be the derivation tree of a signature $e \in \mathcal{V}$. 
  Consider an occurrence of $P$ in $s$,
  and the induced subtree $X$ of $\mathcal{T}$
  whose root is the root of $\mathcal{T}$ and whose leaves are 
  the parents of the nodes representing $\uniq{P}$, where $s = \val{e}$.
  Then $X$ contains $O(\log^* M)$ nodes for every level and
  $O(\log |s| + \log |P| \log^* M)$ nodes in total.
\end{lemma}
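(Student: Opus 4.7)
The plan is to bound the number of nodes of $X$ on each level of $\mathcal{T}$ separately and then sum. Using Definition~\ref{def:xshrink} and Lemma~\ref{lem:common_sequence2}, each piece $\hat L_t^P, L_t^P, \hat R_t^P, R_t^P$ of $\uniq{P}$ sits on a specific level of $\mathcal{T}$, and the central piece $\xshrink{h^P}{P}$ sits at the topmost level $\tau_0$ of $\uniq{P}$. The length bounds are $|L_t^P| \le \Delta_L + 3 = O(\log^* M)$, $|R_t^P| \le \Delta_R + 4 = O(1)$, and $|\encpow{\xshrink{h^P}{P}}| \le \Delta_L + \Delta_R + 9 = O(\log^* M)$; moreover $\hat L_t^P$ and $\hat R_t^P$ are each a maximal run of a single signature, so all their nodes share a single parent in $\mathcal{T}$ (the run-length signature produced at the next Pow step). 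Therefore the parents of signatures of $\uniq{P}$ lying at any fixed level contribute $O(\log^* M)$ distinct leaves to $X$, and since each Pow-Shrink round contracts the internal length by a factor of at least two (blocks produced by Lemma~\ref{lem:CoinTossing} have size between $2$ and $4$), we have $h^P = O(\log|P|)$ and the total number of leaves of $X$ is $O(\log|P|\log^* M)$.

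For the bottom portion $\tau \le \tau_0 = O(\log|P|)$, the nodes of $X$ at level $\tau$ are either leaves of $X$ at that level, which number $O(\log^* M)$ by the above, or ancestors of leaves lying strictly below. Since $\uniq{P}$'s left and right boundary at every level already has width $O(\log^* M)$, their ancestors at any higher level remain at most $O(\log^* M)$ in number. Summing over the $O(\log|P|)$ levels of this portion yields $O(\log|P|\log^* M)$ nodes.

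For the top portion $\tau > \tau_0$, the nodes of $X$ at level $\tau$ are exactly the ancestors at level $\tau$ of the $O(\log^*M)$ topmost leaves of $X$ (coming from $\xshrink{h^P}{P}$ and the topmost $L,R,\hat L,\hat R$ pieces). Because each Shrink step groups at least two signatures per block, every Pow-Shrink round reduces the width of a contiguous window of sibling ancestors by a factor of at least $2$; hence after $k$ extra levels the number of ancestors is $O(\log^*M/2^{\Omega(k)})$, which drops to $O(1)$ once $k = \Omega(\log\log^*M)$ and remains $O(1)$ up to the root at level $O(\log|s|)$. Summing over this portion yields $O(\log^*M + \log|s|)$. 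Combining the two portions gives the overall bound $O(\log|s| + \log|P|\log^*M)$, and the per-level bound $O(\log^*M)$ follows from the same estimates.

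The main obstacle is justifying the geometric decay above $\tau_0$: one must argue that even though Pow by itself need not reduce the width of a window of sibling signatures (when no adjacent duplicates appear), the subsequent Shrink step always does so by a factor of at least two because its input is duplicate-free and Lemma~\ref{lem:CoinTossing} applies. This gives a factor-two contraction per Pow-Shrink round; once this contraction is in hand, the remaining summations are routine.
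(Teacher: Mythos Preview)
Your proposal is correct and follows essentially the same approach as the paper: both arguments rest on (i) the observation that the parents of the $\uniq{P}$ pieces contribute $O(\log^* M)$ leaves of $X$ per level (using $|L_t^P|,|R_t^P|=O(\log^* M)$ and that each $\hat L_t^P,\hat R_t^P$ is a single run with one parent), and (ii) the halving recurrence $x \mapsto (x+2)/2$ coming from the fact that $\mathit{Shrink}$ groups at least two signatures per block. The paper states this recurrence directly and leaves the summation implicit, whereas you make the bottom/top split explicit; your one slightly loose step---asserting in the bottom portion that ancestors of an $O(\log^* M)$-wide window stay $O(\log^* M)$ wide---is exactly the fixed-point consequence of that same halving recurrence, which you already invoke for the top portion.
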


\vspace*{0.5pc}
\noindent \textbf{LCE queries.}
In the next lemma, we show a more general result than Theorem~\ref{theo:theorem1},
which states that the signature encoding supports (both forward and backward) LCE queries on a given arbitrary pair of signatures. 
Theorem~\ref{theo:theorem1} immediately follows from Lemma~\ref{lem:sub_operation_lemma}.
\begin{lemma}\label{lem:sub_operation_lemma}
Using a signature encoding $\mathcal{G} = (\Sigma, \mathcal{V}, \mathcal{D}, S)$ for a string $T$, 
we can support queries $\LCEQ(s_1, s_2, i, j)$ and $\LCEQ(s_1^{R}, s_2^{R}, i, j)$ 
in $O(\log |s_1| + \log |s_2| + \log\ell \log^* M)$ time 
for given two signatures $e_1, e_2 \in \mathcal{V}$ and 
two integers $1 \leq i \leq |s_1|$, $1 \leq j \leq |s_2|$, 
where $s_1 = \val{e_1}$, $s_2 = \val{e_2}$ and $\ell$ is the answer to the $\LCEQ$ query. 
\end{lemma}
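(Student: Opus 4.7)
The plan is to (i) express each of $s_1[i..]$ and $s_2[j..]$ as a short concatenation of signatures read off from a root-to-leaf walk in the DAG of $\mathcal{G}$, and then (ii) run a greedy two-pointer matching whose running time is controlled by the common-sequence property of Lemma~\ref{lem:common_sequence2}.

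First I walk down the DAG from $e_1$ toward the leaf at position $i$, using the sizes $|\val{X}|$ stored at each node to pick the correct child at a binary production $e \to e_\ell e_r$ or the correct offset inside a run-length production $e \to \hat{e}^k$. Along the way I collect the signatures of the maximal subtrees hanging off the right side of this path (at most one per level), obtaining a sequence $a_1, \ldots, a_k$ with $\valp{a_1 \cdots a_k} = s_1[i..]$ and $k = O(\log |s_1|)$, in $O(\log |s_1|)$ time; run-length productions contribute a single shortened-power signature to this list, so the number of visited nodes is bounded by the height. A symmetric walk from $e_2$ produces $b_1, \ldots, b_m$ with $m = O(\log |s_2|)$.

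Next I perform greedy matching. Maintain fronts $x$ and $y$ (initially $a_1$ and $b_1$). Repeatedly: if $x = y$, declare the full block $\val{x}$ matched and advance both fronts; otherwise descend into the larger of the two by replacing it with the sequence of its children (unrolling run-length when needed), and make the new front the leftmost of these children. The procedure halts when the leading characters differ or one side is exhausted. The crucial point for complexity is the following: letting $P$ be the final matched prefix of length $\ell$, the signatures that the algorithm visits on the $s_1$-side are all ancestors of leaves inside the occurrence of $P$ at position $i$ in the derivation tree of $e_1$ (together with at most one boundary ancestor per level), and similarly on the $s_2$-side. By Lemma~\ref{lem:common_sequence2} the interior signatures of the two occurrences of $P$ coincide with $\uniq{P}$, so the ``match-and-advance'' branch fires on every signature of $\uniq{P}$, while the ``descend'' branch fires only on the boundary nodes counted by Lemma~\ref{lem:ancestors}. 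The total number of visits on each side is therefore $O(\log |s_i| + \log \ell \log^* M)$, yielding the stated bound.

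The main obstacle will be verifying that the greedy descents actually land on signatures of $\uniq{P}$ rather than on unrelated decompositions: when $x \neq y$ but $\val{x}$ and $\val{y}$ share a nontrivial common prefix, one must argue that descending into the larger side repeatedly aligns the block sizes at precisely the levels dictated by the locally consistent parsing of Lemma~\ref{lem:CoinTossing}, so that subsequent matches hit genuine $\uniq{P}$ signatures. This amounts to a careful case analysis on how the left fringes $\hat{L}_t^P, L_t^P$ from Definition~\ref{def:xshrink} are traversed, and is the source of the per-level $\log^* M$ factor. The backward query $\LCEQ(s_1^R, s_2^R, i, j)$ is handled by the mirror algorithm (collecting left siblings on the downward walk and descending into rightmost children during matching), using the reversal symmetry of $\encblock$ and $\encpow$ together with the symmetric variants of Lemmas~\ref{lem:common_sequence2} and~\ref{lem:ancestors} applied to suffix occurrences to inherit the same bound.
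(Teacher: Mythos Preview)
Your proposal is correct and follows essentially the same approach as the paper: a greedy left-to-right traversal of the two derivation trees, using Lemma~\ref{lem:common_sequence2} to guarantee that both occurrences of $P$ expose the signatures of $\uniq{P}$ (so equal-signature matches are abundant) and Lemma~\ref{lem:ancestors} to bound the number of nodes ever visited by $O(\log|s_1|+\log|s_2|+\log\ell\log^*M)$. Your two-phase presentation (first decompose each suffix into $O(\log|s_i|)$ signatures via a root-to-leaf walk, then run a two-pointer greedy match with ``descend into the larger'') is just a more explicit rendering of what the paper calls ``simultaneously traverses two derivation trees \ldots\ matching the common signatures greedily from left to right''; the paper does not spell out the descent rule or the intermediate decomposition, but the underlying algorithm and its analysis are the same. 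The subtlety you flag---that descents must stay inside the ancestor set of $\uniq{P}$ so that Lemma~\ref{lem:ancestors} applies---is exactly the point the paper compresses into the phrase ``we only have to traverse ancestors of such nodes,'' and it is resolved by the same observation you give: once inside $P$, the locally consistent parsing forces both sides to produce the $\uniq{P}$ signatures, so non-matching fronts can only occur on the $O(\log^*M)$ boundary nodes per level.
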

\begin{proof}
 We focus on $\LCEQ(s_1, s_2, i, j)$ as $\LCEQ(s_1^{R}, s_2^{R}, i, j)$ is supported similarly.

 Let $P$ denote the longest common prefix of $s_1[i..]$ and $s_2[j..]$. 
 Our algorithm simultaneously traverses two derivation trees rooted at $e_1$ and $e_2$
 and computes $P$ by matching the common signatures greedily from left to right. 
 Recall that $s_1$ and $s_2$ are substrings of $T$. 
 	Since the both substrings $P$ occurring at position $i$ in $\val{e_1}$ and at position $j$ in $\val{e_2}$ are 
 	represented by $\uniq{P}$ in the signature encoding by Lemma~\ref{lem:common_sequence2},
 we can compute $P$ by at least finding the common sequence of nodes which represents $\uniq{P}$,
 and hence, we only have to traverse ancestors of such nodes.
 By Lemma~\ref{lem:ancestors},
 the number of nodes we traverse, which dominates the time complexity, is upper bounded by
 $O(\log |s_1| + \log |s_2| + \encpow{\uniq{P}}) = O(\log |s_1| + \log |s_2| + \log \ell \log^* M)$.
\end{proof}

\section{Updates}\label{sec:Update}
In this section, we show Theorem~\ref{theo:theorem2}. 
Formally, we consider a \emph{dynamic signature encoding} $\mathcal{G}$ of $T$,
which allows for efficient updates of $\mathcal{G}$ in compressed space according to the following operations:
$\mathit{INSERT}(Y, i)$ inserts a string $Y$ into $T$ at position $i$, i.e., $T \leftarrow T[..i-1] Y T[i..]$;
$\mathit{INSERT'}(j, y, i)$ inserts $T[j..j+y-1]$ into $T$ at position $i$, i.e., $T \leftarrow T[..i-1] T[j..j+y-1] T[i..]$; and
$\mathit{DELETE}(j, y)$ deletes a substring of length $y$ starting at $j$, i.e., $T \leftarrow T[..j-1]T[j+y..]$.

During updates we recompute $\shrink{t}{T}$ and $\pow{t}{T}$ for some part of new $T$
(note that the most part is unchanged thanks to the virtue of signature encodings, Lemma~\ref{lem:ancestors}).
When we need a signature for $\mathit{expr}$, 
we look up the signature assigned to $\mathit{expr}$ (i.e., compute $\mathit{Assign}^{-1}(\mathit{expr})$) and use it if such exists.
If $\mathit{Assign}^{-1}(\mathit{expr})$ is undefined
we create a new signature, which is an integer that is currently not used as signatures (say $e_{\mathit{new}} = \min ([1..M] \setminus \mathcal{V})$),
and add $e_{\mathit{new}} \rightarrow \mathit{expr}$ to $\mathcal{D}$.
Also, updates may produce a useless signature whose parents in the DAG are all removed.
We remove such useless signatures from $\mathcal{G}$ during updates. 

Note that the corresponding nodes and edges of the DAG can be added/removed
in constant time per addition/removal of an assignment. 
In addition to the DAG, we need dynamic data structures to conduct the following operations efficiently:
(A) computing $\mathit{Assgn}^{-1}(\cdot)$,
(B) computing $\min ([1..M] \setminus \mathcal{V})$, and
(C) checking if a signature $e$ is useless.

For (A), we use Beame and Fich's data structure~\cite{DBLP:journals/jcss/BeameF02} 
that can support predecessor/successor queries on a dynamic set of integers.\footnote{
	Alstrup et al.~\cite{LongAlstrup} used hashing for this purpose.
	However,
	since we are interested in the worst case time complexities, we use
	the data structure~\cite{DBLP:journals/jcss/BeameF02} in place of hashing.
}
For example, we consider Beame and Fich's data structure maintaining a set of integers
$\{ e_{\ell} M^2 + e_{r} M + e \mid e \rightarrow e_{\ell} e_{r} \in \mathcal{D} \}$ in $O(w)$ space.
Then we can implement $\mathit{Assgn}^{-1}(e_{\ell} e_{r})$ by computing the successor $q$ of $e_{\ell} M^2 + e_{r} M$, i.e.,
$e = q \mod M$ if $\lfloor q / M \rfloor = e_{\ell} M + e_{r}$, and otherwise $\mathit{Assgn}^{-1}(e_{\ell} e_{r})$ is undefined.
Queries as well as update operations can be done in deterministic $O(f_{\mathcal{A}})$ time, where 
$f_{\mathcal{A}} = O\left(\min \left\{ \frac{\log \log M \log \log w}{\log \log \log M}, \sqrt{\frac{\log w}{\log\log w}} \right\} \right)$. 

For (B), we again use Beame and Fich's data structure to maintain the set of maximal intervals
such that every element in the intervals is signature.
Formally, the intervals are maintained by a set of integers
$\{e_i M + e_j \mid [e_i..e_j] \subseteq \mathcal{V}, e_i - 1 \notin \mathcal{V}, e_j + 1 \notin \mathcal{V} \}$ in $O(w)$ space.
Then we can know the minimum integer currently not in $\mathcal{V}$ by computing the successor of $0$.

For (C), we let every signature $e \in \mathcal{V}$ have a counter to count the number of parents of $e$ in the DAG.
Then we can know that a signature is useless if the counter is $0$.

Lemma~\ref{lem:ComputeShortCommonSequence} shows that we can efficiently compute $\uniq{P}$ for a substring $P$ of $T$.
\begin{lemma}\label{lem:ComputeShortCommonSequence}
Using a signature encoding $\mathcal{G} = (\Sigma, \mathcal{V}, \mathcal{D}, S)$ of size $w$,  
given a signature $e \in \mathcal{V}$ (and its corresponding node in the DAG)
and two integers $j$ and $y$, 
we can compute $\encpow{\uniq{s[j..j+y-1]}}$ in $O(\log |s| + \log y \log^* M)$ time, 
where $s = \val{e}$.
\end{lemma}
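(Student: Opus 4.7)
The plan is to prove the lemma by algorithmically traversing the derivation tree of $e$ to identify the nodes that constitute $\uniq{P}$, where $P = s[j..j+y-1]$, and output $\encpow{\uniq{P}}$ directly. By Lemma~\ref{lem:common_sequence2}, every signature appearing in $\uniq{P}$ already exists in $\mathcal{G}$, and by Lemma~\ref{lem:ancestors} the induced subtree $X$ whose frontier are the parents of the $\uniq{P}$-nodes contains only $O(\log|s| + \log y \log^* M)$ nodes total, with $O(\log^* M)$ per level. So if I can walk $X$ in time proportional to its size, the stated bound follows immediately.

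First I would descend from the root $e$ twice to locate the leaves representing positions $j$ and $j+y-1$, choosing children at each step using the stored $|\val{\cdot}|$ values (and respecting the multiplicative factors of run-length productions). Each descent takes $O(\log|s|)$ time and produces a left-boundary path and a right-boundary path, which together are the boundary of $X$.

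Next I would walk $X$ in a level-by-level sweep between these two paths. At each node $v \in X$ I inspect its children: a child whose $\val{\cdot}$ range lies entirely inside $[j..j+y-1]$ is the parent of $\uniq{P}$-nodes whose signatures I append to the output sequence, while a child that straddles the left or right endpoint of the window is added to $X$ at the next level for further descent. Only the boundary-crossing children require recursion, and the locally consistent parsing property (the bit sequence of Lemma~\ref{lem:CoinTossing} depends only on a window of $\Delta_L + \Delta_R = O(\log^* M)$ positions) ensures that this boundary widens by at most $O(\log^* M)$ signatures per level, matching the per-level bound of Lemma~\ref{lem:ancestors}. Signatures of output-nodes are emitted left-to-right, and $\encpow$ is applied online by merging consecutive equal signatures into runs.

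The main obstacle is justifying that the frontier signatures harvested in this way coincide exactly with $\uniq{P}$ as defined recursively by Definition~\ref{def:xshrink} through $\xshrink{t}{P}$ and $\xpow{t}{P}$: that is, that the children I classify as "entirely inside" are precisely the interior signatures left after peeling off the boundary prefixes/suffixes $L_t^P, \hat{L}_t^P, R_t^P, \hat{R}_t^P$. This requires a careful inductive argument using locally consistent parsing — once we have descended past the $O(\log^* M)$-wide boundary region of each level, the structure of the derivation tree of $e$ in the interior is forced to follow $\xshrink{t}{P}$ and $\xpow{t}{P}$, because the blocking and run-length decisions at interior positions depend only on other interior signatures. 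Once that identification is established, bounding the per-level work at $O(\log^* M)$ completes the proof.
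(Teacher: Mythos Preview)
Your overall plan---locate a small subtree $X$ of the derivation tree of $e$ and invoke Lemma~\ref{lem:ancestors} for its $O(\log|s|+\log y\log^*M)$ size---is exactly the paper's approach; the paper's own proof says little more than ``compute $X$ using Definition~\ref{def:xshrink} and the properties described in the proof of Lemma~\ref{lem:ancestors}.''

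The gap is in your concrete traversal rule. Being ``entirely inside $[j..j+y-1]$'' does not characterise $\uniq{P}$-nodes (or their parents): at any level above $h^P$, an entirely-inside child of an $X$-node is still an ancestor of the $\xshrink{h^P}{P}$ block, hence an \emph{interior} node of $X$, so outputting it (or its children) yields a coarser range decomposition of $P$ in the tree of $e$, not $\uniq{P}$. Your own ``main obstacle'' paragraph in fact exposes the inconsistency: you say the entirely-inside children should be ``the interior signatures left \emph{after} peeling off $L_t^P,\hat L_t^P,R_t^P,\hat R_t^P$''---but it is precisely those peeled-off strips, together with $\xshrink{h^P}{P}$, that constitute $\uniq{P}$; the interior is what must continue to the next level, not what is output. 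The fix (which is what the paper means by ``using Definition~\ref{def:xshrink}'') is to run that definition explicitly: after locating the level-$0$ endpoints of $P$, maintain at each level $t$ the boundaries of $\xshrink{t}{P}$ and $\xpow{t}{P}$ inside the derivation tree of $e$, read off $\hat L_t^P, L_t^P, R_t^P, \hat R_t^P$ (each determined in $O(\log^*M)$ time from the local run and block structure), and pass the remaining middle upward until $|\encpow{\xshrink{t}{P}}|\le\Delta_L+\Delta_R+9$. The per-level work is then $O(\log^*M)$ and Lemma~\ref{lem:ancestors} gives the stated bound.
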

 
\begin{proof}[Proof of Theorem~\ref{theo:theorem2}]
	It is easy to see that, given the static signature encoding of $T$, we can construct data structures (A)-(C) in $O(w f_{A})$ time.
	After constructing these, we can add/remove an assignment in $O(f_{\mathcal{A}})$ time.

	Let $\mathcal{G} = (\Sigma, \mathcal{V}, \mathcal{D}, S)$ be the signature encoding before the update operation.
	We support $\mathit{DELETE}(j, y)$ as follows:
	(1) Compute the new start variable $S' = \id{T[..j-1]T[j+y..]}$ 
	by recomputing the new signature encoding from $\uniq{T[..j-1]}$ and $\uniq{T[j+y..]}$. 
		Although we need a part of $d$ to recompute $\encblockd{\pow{t}{T[..j-1]T[j+y..]}}$ for every level $t$, 
		the input size to compute the part of $d$ is $O(\log^* M)$ by Lemma~\ref{lem:CoinTossing}. 
		Hence these can be done in $O(f_{\mathcal{A}}\log N \log^* M)$ time 
		by Lemmas~\ref{lem:ComputeShortCommonSequence} and \ref{lem:ancestors}. 
	(2) Remove all useless signatures $Z$ from $\mathcal{G}$.
	Note that if a signature is useless, then all the signatures along the path from $S$ to it are also useless.
	Hence, we can remove all useless signatures efficiently by depth-first search starting from $S$,
	which takes $O(f_{\mathcal{A}}|Z|)$ time, where $|Z| = O(y + \log N \log^* M)$ by Lemma~\ref{lem:ancestors}.
	
	Similarly, we can support $\mathit{INSERT}(Y, i)$ in $O(f_{\mathcal{A}}(y + \log N \log^* M))$ time 
	by creating the new start variable $S'$ from $\uniq{T[..i-1]}$, $\uniq{Y}$ and $\uniq{T[i..]}$.
	Note that we can naively compute $\uniq{Y}$ in $O(f_{\mathcal{A}} y)$ time.
	For $\mathit{INSERT'}(j, y, i)$, we can avoid $O(f_{\mathcal{A}} y)$ time by computing $\uniq{T[j..j+y-1]}$ using Lemma~\ref{lem:ComputeShortCommonSequence}.
\end{proof}

\section{Construction}\label{sec:Construction}
In this section, we give proofs of Theorem~\ref{theo:HConstructuionTheorem}, but
we omit proofs of the results (2) and (3a)
as they are straightforward from the previous work~\cite{DBLP:conf/soda/AlstrupBR00,LongAlstrup}. 
\subsection{Theorem~\ref{theo:HConstructuionTheorem}~(1a)}
\begin{proof}[Proof of Theorem~\ref{theo:HConstructuionTheorem}~(1a)]
	Note that we can naively compute $\id{T}$ for a given string $T$ in 
	$O(N f_{\mathcal{A}})$ time and $O(N)$ working space.
	In order to reduce the working space,
	we consider factorizing $T$ into blocks of size $B$ and processing them incrementally:
	Starting with the empty signature encoding $\mathcal{G}$,
	we can compute $\id{T}$ in $O(\frac{N}{B}f_{\mathcal{A}}(\log N \log^* M + B))$ time 
	and $O(w + B)$ working space
	by using $\mathit{INSERT}(T[(i-1)B+1..iB],(i-1)B+1)$ for $i = 1, \ldots, {\frac{N}{B}}$ in increasing order.
	Hence our proof is finished by choosing $B = \log N \log^* M$.
\end{proof}

\subsection{Theorem~\ref{theo:HConstructuionTheorem}~(1b)}\label{sec:Proof_HConstructuionTheorem1}
We compute signatures level by level, i.e.,
construct $\shrink{0}{T}, \pow{0}{T}$, $\ldots, \shrink{h}{T}, \pow{h}{T}$ incrementally.
For each level, we create signatures by sorting signature blocks (or run-length encoded signatures) to which we give signatures,
as shown by the next two lemmas.

\begin{lemma} \label{lem:signature_encblock}
Given $\encblock{\pow{t-1}{T}}$ for $0 < t \leq h$, 
we can compute $\shrink{t}{T}$ in $O((b-a)+|\pow{t-1}{T} |)$ time and space,
where $b$ is the maximum integer in $\pow{t-1}{T}$ 
and $a$ is the minimum integer in $\pow{t-1}{T}$.
\end{lemma}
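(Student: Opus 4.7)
The plan is to exploit that each block in $\encblock{\pow{t-1}{T}}$ has length between $2$ and $4$ over integers in $[a..b]$, so that assigning a consistent signature to each block reduces to grouping equal short tuples, which is amenable to a linear-time radix sort.

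Concretely, I would first scan $\encblock{\pow{t-1}{T}}$ once, producing for each of the $N' \leq |\pow{t-1}{T}|/2$ blocks a record $(i, \tilde{q}_i)$, where $i$ is the original index and $\tilde{q}_i$ is the block padded on the right to length exactly $4$ by a sentinel $0$, after shifting every signature by $-(a-1)$ so that the entries lie in $[1..b-a+1]$. I would then radix-sort these records lexicographically by $\tilde{q}_i$ using four passes of counting sort over the alphabet $[0..b-a+1]$; each pass costs $O((b-a)+N')$, and the four passes together run in $O((b-a)+|\pow{t-1}{T}|)$ time and space. After sorting, I sweep the list once: whenever the tuple changes, I allocate a fresh signature $e_{\mathrm{new}}$, taking the next integer not yet used in $\mathcal{V}$, add $e_{\mathrm{new}} \to q$ to $\mathcal{D}$, and reuse $e_{\mathrm{new}}$ for every consecutive duplicate. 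Using the original index $i$ carried through the sort, I finally write the assigned signatures into an output array of length $N'$, yielding $\shrink{t}{T} = \mathit{Sig}^{+}(\encblock{\pow{t-1}{T}})$.

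The subtlety to handle is that $\mathit{Sig}$ should, in principle, reuse an already existing signature whenever $\mathit{Assgn}^{-1}(q)$ is defined, which would otherwise demand a dictionary lookup outside the $O((b-a)+|\pow{t-1}{T}|)$ budget. I sidestep this by building the signature encoding strictly level by level and allocating fresh integers in strictly ascending order, so the signatures created at level $t$ form a range disjoint from all earlier levels. Any right-hand side handed to $\encblock$ at level $t$ is a tuple of signatures drawn from $\pow{t-1}{T}$; hence tuples at two different levels share no entries and cannot collide, while tuples within one level are already unified by the sorting sweep, and $\encblock$ and $\encpow$ right-hand sides have incompatible syntactic shapes. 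The main obstacle is therefore to keep the radix-sort alphabet of size $O(b-a)$ rather than $O(b)$, which the initial shift by $-(a-1)$ achieves; this is why the stated bound depends on $b-a$ and not on $b$.
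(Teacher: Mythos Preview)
Your proposal is correct and follows essentially the same approach as the paper: both sort the length-$\leq 4$ blocks over the shifted alphabet $[0..b-a]$ by bucket/radix sort and then assign fresh signatures in a single sweep, justifying the absence of dictionary lookups by the observation that signatures created at distinct levels occupy disjoint integer ranges. Your write-up is in fact more explicit about the mechanics (padding to length~4, carrying the original index through the sort) than the paper's version.
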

\begin{proof}
Since we assign signatures to signature blocks and run-length signatures in the derivation tree of $S$ in the order they appear in the signature encoding.
$\pow{t-1}{T}[i] - a$ fits in an entry of a bucket of size $b-a$ for each element of $\pow{t-1}{T}[i]$ of $\pow{t-1}{T}$.
Also, the length of each block is at most four.
Hence we can sort all the blocks of $\encblock{\pow{t-1}{T}}$ by bucket sort in $O((b-a)+|\pow{t-1}{T}|)$ time and space. 
Since $\mathit{Sig}$ is an injection
and since we process the levels in increasing order,
for any two different levels $0 \leq t' < t \leq h$,
no elements of $\shrink{t-1}{T}$ appear in $\shrink{t'-1}{T}$,
and hence no elements of $\pow{t-1}{T}$ appear in $\pow{t'-1}{T}$.
Thus, we can determine a new signature for each block in $\encblock{\pow{t-1}{T}}$,
\emph{without} searching existing signatures in the lower levels.
This completes the proof.
\end{proof}

\begin{lemma} \label{lem:signature_encpow}
Given $\encpow{\shrink{t}{T}}$, we can compute $\pow{t}{T}$ in 
$O(x + (b-a) +|\encpow{\shrink{t}{T}} \\ |)$ time and space, 
where $x$ is the maximum length of runs in $\encpow{\shrink{t}{T}}$,
$b$ is the maximum integer in $\pow{t-1}{T}$, 
and $a$ is the minimum integer in $\pow{t-1}{T}$.
\end{lemma}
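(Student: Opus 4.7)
The plan is to view $\encpow{\shrink{t}{T}}$ as a sequence of pairs $(e, k)$, where $e$ is a signature appearing in $\shrink{t}{T}$ and $k \leq x$ is the length of a maximal run, and to compute $\pow{t}{T}$ by stably sorting these pairs so that equal pairs form contiguous groups and then assigning a fresh signature to each distinct pair. As in Lemma~\ref{lem:signature_encblock}, injectivity of $\mathit{Sig}$ across levels ensures that we can safely add a new production $e_{\mathit{new}} \to \hat{e}^{k}$ for each distinct pair without checking for collisions against assignments from lower levels.

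First, I would bucket-sort the pairs on the second coordinate $k$, whose range is $[1..x]$, in $O(x + |\encpow{\shrink{t}{T}}|)$ time and space. I would then stably bucket-sort the partially sorted sequence on the first coordinate $e$. To justify using buckets of size $O(b-a)$ for this pass, I would argue: (i) since $\mathit{Sig}$ is injective, the elements of $\pow{t-1}{T}$ are pairwise distinct, whence $|\pow{t-1}{T}| \leq b - a + 1$; and (ii) since $|\shrink{t}{T}| = |\encblock{\pow{t-1}{T}}| \leq |\pow{t-1}{T}|$ and the signatures in $\shrink{t}{T}$ were freshly allocated in Lemma~\ref{lem:signature_encblock}, we may commit to placing them as consecutive integers immediately above the largest signature used so far, so that they occupy a contiguous interval of width at most $|\pow{t-1}{T}| \leq b - a + 1$. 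Shifting by the smallest such signature puts the bucket keys into $[0..b-a]$.

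After the two-pass radix sort, I would sweep the sorted array once, creating a fresh signature (under the same compact allocation discipline) for every newly encountered pair and recording the corresponding production $e_{\mathit{new}} \to \hat{e}^{k}$ in $\mathcal{D}$; a final linear pass over the original (unsorted) input then emits $\pow{t}{T}$ by replacing each pair with its assigned signature. Summing the passes gives the claimed $O(x + (b-a) + |\encpow{\shrink{t}{T}}|)$ time and space bound.

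The main obstacle I expect is the range control for the bucket sort on the first coordinate: an arbitrary allocation policy could spread the signatures of $\shrink{t}{T}$ across a wide subinterval of $[1..M]$ and blow up the bucket count. The resolution is to fix, once and for all in the construction, the rule that new signatures are handed out as the smallest unused integers in bottom-up order, so that the signatures generated at level $t$ form a contiguous block whose width is controlled by $|\pow{t-1}{T}|$, and hence by $b-a$. This discipline is consistent with the allocation already used (implicitly) in Lemma~\ref{lem:signature_encblock}, and is the only place where the argument is not purely syntactic radix sorting.
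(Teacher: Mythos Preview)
Your two-pass radix sort on the pairs $(e,k)$ is essentially the paper's approach; the paper merely swaps the pass order, first bucket-sorting by the signature $e$ and then, within each group of equal signature, bucket-sorting by the exponent $k$ while reusing a single bucket array of size $x$.

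However, your justification for the $O(b-a)$ bucket range has a real error. Claim (i) is false: injectivity of $\mathit{Sig}$ says only that distinct right-hand sides receive distinct signatures, not that the entries of the \emph{sequence} $\pow{t-1}{T}$ are pairwise distinct. The same run-length token $\hat{e}^{\,k}$ can occur at many positions of $\encpow{\shrink{t-1}{T}}$, and every occurrence receives the \emph{same} signature, so $|\pow{t-1}{T}|$ can be arbitrarily large relative to $b-a$. This breaks the last inequality in your chain $|\shrink{t}{T}| \le |\pow{t-1}{T}| \le b-a+1$, and with it the bound on the bucket range for your second pass. The paper's proof does not attempt any such reduction: it simply bucket-sorts on the signatures actually appearing in $\shrink{t}{T}$. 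Read consistently with that proof, $a$ and $b$ should be the minimum and maximum integers in $\shrink{t}{T}$ (the wording ``$\pow{t-1}{T}$'' in the statement looks like a carry-over from Lemma~\ref{lem:signature_encblock}); under that reading, the contiguous-allocation discipline you already invoke makes the range exactly $b-a+1$ by definition, with no need to bound $|\pow{t-1}{T}|$. Either way, the sum of these ranges over all levels is $O(N)$, which is all that Theorem~\ref{theo:HConstructuionTheorem}(1b) uses.
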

\begin{proof}
We first sort all the elements of $\encpow{\shrink{t}{T}}$ by bucket sort
in $O(b-a + \\|\encpow{\shrink{t}{T}}|)$ time and space, ignoring the powers of runs.
Then, for each integer $r$ appearing in $\shrink{t}{T}$,
we sort the runs of $r$'s by bucket sort with a bucket of size $x$.
This takes a total of $O(x + |\encpow{\shrink{t}{T}}|)$ time and space
for all integers appearing in $\shrink{t}{T}$.
The rest is the same as the proof of Lemma~\ref{lem:signature_encblock}.
\end{proof}

\begin{proof}[Proof of Theorem~\ref{theo:HConstructuionTheorem}~(1b)]
Since the size of the derivation tree of $\id{T}$ is $O(N)$,
by Lemmas~\ref{lem:CoinTossing}, \ref{lem:signature_encblock}, and~\ref{lem:signature_encpow}, 
we can compute a DAG of $\mathcal{G}$ for $T$ in $O(N)$ time and space.
\end{proof}

\subsection{Theorem~\ref{theo:HConstructuionTheorem}~(3b)}\label{sec:HConstruction3-2}
In this section, we sometimes abbreviate $\val{X}$ as $X$ for $X \in \mathcal{S}$.
For example, $\shrink{t}{X}$ and $\pow{t}{X}$ represents $\shrink{t}{\val{X}}$ and $\pow{t}{\val{X}}$ respectively. 

Our algorithm computes signatures level by level, i.e.,
constructs incrementally $\shrink{0}{X_n}, \\ \pow{0}{X_n}$, $\ldots, \shrink{h}{X_n}, \pow{h}{X_n}$.
Like the algorithm described in Section~\ref{sec:Proof_HConstructuionTheorem1}, we can create signatures 
by sorting blocks of signatures or run-length encoded signatures in the same level.
The main difference is that we now utilize the structure of the SLP, 
which allows us to do the task efficiently in $O(n \log^* M + w)$ working space.
In particular, although $|\shrink{t}{X_n}|, |\pow{t}{X_n}| = O(N)$ for $0 \leq t \leq h$, 
they can be represented in $O(n \log^* M)$ space.

In so doing, we introduce some additional notations relating to $\xshrink{t}{P}$ and $\xpow{t}{P}$ in Definition~\ref{def:xshrink}.
By Lemma~\ref{lem:common_sequence2}, 
there exist $\hat{z}^{(P_1,P_2)}_t$ and $z^{(P_1,P_2)}_t$ 
for any string $P = P_1P_2$ such that the following equation holds:
$\xshrink{t}{P} = \hat{y}^{P_1}_{t} \hat{z}^{(P_1,P_2)}_t \hat{y}^{P_2}_{t}$ for $0 < t \leq h^{P}$, and
$\xpow{t}{P} = y^{P_1}_{t} z^{(P_1,P_2)}_t y^{P_2}_{t}$ for $0 \leq t < h^{P}$,
where we define $\hat{y}^{P}_{t}$ and $y^{P}_{t}$ for a string $P$ as:
\begin{eqnarray*}
\hat{y}^{P}_{t} =
  \begin{cases}
  \xshrink{t}{P} &\mbox{ for } 0 <  t \leq h^{P},\\
  \varepsilon &\mbox{ for } t > h^{P},\\
  \end{cases}&
y^{P}_{t} =
  \begin{cases}
  \xpow{t}{P} &\mbox{ for } 0 \leq t < h^{P},\\
  \varepsilon &\mbox{ for } t \geq h^{P}.\\
  \end{cases}
\end{eqnarray*}
For any variable $X_i \rightarrow X_{\ell} X_{r}$,
we denote $\hat{z}^{X_i}_{t} = \hat{z}^{(\val{X_{\ell}},\val{X_{r}})}_{t}$ (for $0 < t \leq h^{\val{X_i}}$)
and $z^{X_i}_{t} = z^{(\val{X_{\ell}},\val{X_{r}})}_{t}$ (for $0 \leq t < h^{\val{X_i}}$).
Note that $|z^{X_i}_{t}|, |\hat{z}^{X_i}_{t}| = O(\log^* M)$ because 
$z^{X_i}_{t}$ is created on $\hat{R}^{X_{\ell}}_{t}\hat{z}^{X_i}_{t}\hat{L}^{X_{r}}_{t}$, 
similarly, $\hat{z}^{X_i}_{t}$ is created on $R^{X_{\ell}}_{t-1}z^{X_i}_{t-1}L^{X_{r}}_{t-1}$. 
We can use $\hat{z}_{t}^{X_1}, \ldots, \hat{z}_{t}^{X_n}$ (resp. $z_{t}^{X_1}, \ldots, z_{t}^{X_n}$) 
as a compressed representation of $\xshrink{t}{X_n}$ (resp. $\xpow{t}{X_n}$) based on the SLP:
Intuitively, $\hat{z}_{t}^{X_n}$ (resp. $z_{t}^{X_n}$) covers the middle part of $\xshrink{t}{X_n}$ (resp. $\xpow{t}{X_n}$) and
the remaining part is recovered by investigating the left/right child recursively (see also Fig.~\ref{fig:LeftRightAccessFact}).
Hence, with the DAG structure of the SLP, $\xshrink{t}{X_n}$ and $\xpow{t}{X_n}$ can 
be represented in $O(n \log^* M)$ space.
\begin{figure}[t]
\begin{center}
  \includegraphics[scale=0.5]{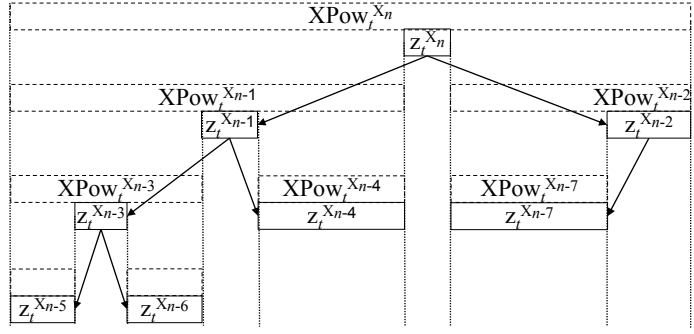}
  \caption{
  $\xpow{t}{X_n}$ can be represented by $z^{X_1}_{t}, \ldots, z^{X_n}_{t}$. 
  In this example, 
  $\xpow{t}{X_n} = z_{t}^{X_{n-5}}z_{t}^{X_{n-3}}z_{t}^{X_{n-6}}z_{t}^{X_{n-1}}z_{t}^{X_{n-4}}z_{t}^{X_{n}}z_{t}^{X_{n-7}}z_{t}^{X_{n-2}}$.
  } 
  \label{fig:LeftRightAccessFact}
\end{center}
\end{figure}

In addition, we define $\hat{A}^{P}_{t}$, $\hat{B}^{P}_{t}$, $A^{P}_t$ and $B^{P}_t$ as follows:
For $0 < t \leq h^{P}$, $\hat{A}^{P}_t$ (resp. $\hat{B}^{P}_t$) is a prefix (resp. suffix) of $\shrink{t}{P}$ 
which consists of signatures of $A^{P}_{t-1}L^{P}_{t-1}$ (resp. $R^{P}_{t-1}B^{P}_{t-1}$); and
for $0 \leq t < h^{P}$, $A^{P}_t$ (resp. $B^{P}_t$) is a prefix (resp. suffix) of $\pow{t}{P}$ 
which consists of signatures of $\hat{A}^{P}_{t}\hat{L}^{P}_{t}$ (resp. $\hat{R}^{P}_{t}\hat{B}^{P}_{t}$).
By the definition,
$\shrink{t}{P} = \hat{A}^{P}_t\xshrink{t}{P}\hat{B}^{P}_t$ for $0 \leq t \leq h^{P}$, and
$\pow{t}{P} = A^{P}_t\xpow{t}{P}B^{P}_t$ for $0 \leq t < h^{P}$.
See Fig.~\ref{fig:AXShrinkB} for the illustration.
\begin{figure}[t]
\begin{center}
  \includegraphics[scale=0.5]{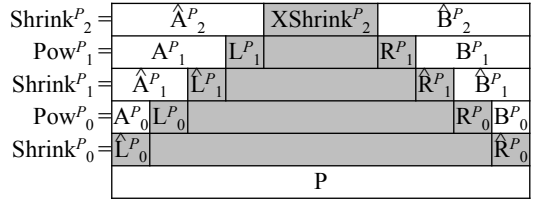}
  \caption{
  An abstract image of $\shrink{t}{P}$ and $\pow{t}{P}$ for a string $P$. 
  For $0 \leq t < h^{P}$, $A^{P}_{t}L^{P}_{t}$ (resp. $R^{P}_{t}B^{P}_{t}$) 
  is encoded into $\hat{A}^{P}_{t+1}$ (resp. $\hat{B}^{P}_{t+1}$). 
  Similarly, for $0 < t < h^{P}$, $\hat{A}^{P}_{t}\hat{L}^{P}_{t}$ (resp. $\hat{R}^{P}_{t}\hat{B}^{P}_{t}$) 
  is encoded into $A^{P}_{t}$ (resp. $B^{P}_{t}$). 
  } 
  \label{fig:AXShrinkB}
\end{center}
\end{figure}

Since $\shrink{t}{X_n} = \hat{A}_{t}^{X_n} \xshrink{t}{X_n} \hat{B}_{t}^{X_n}$ for $0 < t \leq h^{X_n}$,
we use $\hat{\Lambda}_{t} = (\hat{z}_{t}^{X_1}, \ldots, \hat{z}_{t}^{X_n}, \hat{A}^{X_n}_{t}, \\ \hat{B}^{X_n}_{t})$ 
as a compressed representation of $\shrink{t}{X_n}$ of size $O(n \log^* M)$.
Similarly, for $0 \leq t < h^{X_n}$,
we use $\Lambda_{t} = (z_{t}^{X_1}, \ldots, z_{t}^{X_n}, A^{X_n}_{t}, B^{X_n}_{t})$ 
as a compressed representation of $\pow{t}{X_n}$ of size $O(n \log^* M)$.

Our algorithm computes incrementally $\Lambda_{0}, \hat{\Lambda}_{1}, \ldots, \hat{\Lambda}_{h^{X_n}}$.
Given $\hat{\Lambda}_{h^{X_n}}$,
we can easily get $\pow{h^{X_n}}{X_n}$ of size $O(\log^* M)$ in $O(n \log^* M)$ time,
and then $\id{\val{X_n}}$ in $O(\log^* M)$ time from $\pow{h^{X_n}}{X_n}$.
Hence, in the following three lemmas, we show how to compute $\Lambda_{0}, \hat{\Lambda}_{1}, \ldots, \hat{\Lambda}_{h^{X_n}}$.

\begin{lemma}\label{lem:Lambda_0}
Given an SLP of size $n$, we can compute $\Lambda_{0}$ in $O(n \log \log (n \log^* M) \log^* M)$ time and $O(n \log^*M)$ space.
\end{lemma}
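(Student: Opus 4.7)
My plan is a single bottom-up pass over the SLP that processes variables in topological order while maintaining a dynamic dictionary implementing $\mathit{Assgn}^{-1}$, keyed on integer encodings of run-length pairs of signatures. A Beame--Fich-style predecessor structure gives $O(\log\log(n \log^* M))$ per lookup/insert, since the total number of signatures ever created at level~0 stays $O(n \log^* M)$.

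First, an $O(n)$-time bottom-up sweep computes, for every $X_i$, the leftmost-run character and length, the rightmost-run character and length, and $|\val{X_i}|$. For a terminal production these values are immediate; for $X_i \rightarrow X_\ell X_r$ they are assembled in $O(1)$ from the two children by a boundary-merge check that fuses the rightmost run of $\val{X_\ell}$ and the leftmost run of $\val{X_r}$ whenever their characters coincide. Along the way I register a signature for each distinct terminal occurring in the grammar. Next, for each production $X_i \rightarrow X_\ell X_r$ I compute $z_0^{X_i}$: by definition, $z_0^{X_i}$ is the signature sequence assigned to $\encpow{\hat{R}_0^{\val{X_\ell}} \hat{L}_0^{\val{X_r}}}$, which is constructed in $O(1)$ from the preprocessed data and contains at most two run-length pairs (one if the boundary characters match, two otherwise). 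Each is then looked up in $\mathit{Assgn}^{-1}$ or assigned a fresh signature. Charging $O(\log^* M)$ dictionary operations per variable---a loose bound that also covers contextual $L$, $R$ material recorded to enable the subsequent levels---yields the claimed $O(n \log\log(n \log^* M) \log^* M)$ time and $O(n \log^* M)$ space. Finally, $A_0^{X_n}$ and $B_0^{X_n}$ are just the signatures of the leftmost and rightmost runs of $\val{X_n}$, read directly from the preprocessed data.

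The main obstacle is ensuring \emph{global consistency} of signatures throughout the pass: the same run-length pair $(a,k)$ must receive the same signature wherever it arises in the expanded derivation tree, otherwise Lemma~\ref{lem:common_sequence2} fails and the sharing exploited in the representation $\Lambda_0$ breaks. This is handled by a single global dictionary for $\mathit{Assgn}^{-1}$ keyed on $(a,k)$, so that any two occurrences automatically resolve to the same signature. The remaining correctness---that $\pow{0}{X_n} = A_0^{X_n} \xpow{0}{X_n} B_0^{X_n}$ and that each $\xpow{0}{\val{X_i}}$ factors as $y_0^{\val{X_\ell}} z_0^{X_i} y_0^{\val{X_r}}$---then follows inductively from the definition together with the bottom-up processing order.
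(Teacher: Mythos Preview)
Your approach has a genuine gap in the time analysis. You claim that a Beame--Fich predecessor structure over run-length pairs $(a,k)$ gives $O(\log\log(n\log^* M))$ per operation because ``the total number of signatures ever created at level~$0$ stays $O(n\log^* M)$''. But the Beame--Fich bound depends on the universe size as well as on the number of keys: it is $f_{\mathcal{A}} = O\bigl(\min\{\frac{\log\log M\,\log\log w}{\log\log\log M},\,\sqrt{\log w/\log\log w}\}\bigr)$, and neither term is $O(\log\log w)$. The obstruction is that run lengths $k$ can be as large as $N$, which may be exponential in~$n$, so the universe of your dictionary keys is huge. Plugging Beame--Fich into your scheme essentially reproduces the bound of Theorem~\ref{theo:HConstructuionTheorem}(3a), not~(3b); the whole point of Lemma~\ref{lem:Lambda_0} is to beat~$f_{\mathcal{A}}$. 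The paper does this by abandoning the on-line dictionary: it first collects all $O(n\log^* M)$ run-length encoded pairs to be named, then assigns signatures in one batch by \emph{sorting}. Bucket sort handles the small-range component (the base signature), but for the run lengths the paper explicitly invokes Han's deterministic integer sort, which sorts $x$ word-sized integers in $O(x\log\log x)$ time and $O(x)$ space irrespective of the universe. That sort is where the $\log\log(n\log^* M)$ factor in the lemma actually originates.

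There is also a smaller correctness issue. Your description of $z_0^{X_i}$ as the signatures of $\encpow{\hat{R}_0^{X_\ell}\hat{L}_0^{X_r}}$ with ``at most two run-length pairs'' is only valid when both children satisfy $h^{X_\ell},h^{X_r}\geq 1$. If, say, $h^{X_\ell}=0$ (meaning $|\encpow{\xshrink{0}{X_\ell}}|\leq\Delta_L+\Delta_R+9$), then $y_0^{X_\ell}=\varepsilon$ and $z_0^{X_i}$ must absorb \emph{all} internal runs of $\val{X_\ell}$, up to $O(\log^* M)$ of them; storing only the leftmost and rightmost run of each variable is then insufficient to reconstruct it. The paper handles this by retaining the full sequence $\encpow{\xshrink{0}{X_i}}$ whenever it has at most $\Delta_L+\Delta_R+9$ entries, and only $\encpow{\hat{L}_0^{X_i}}$, $\encpow{\hat{R}_0^{X_i}}$ otherwise.
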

\begin{proof}
We first compute, for all variables $X_i$,
$\encpow{\xshrink{0}{X_i}}$ if $|\encpow{\xshrink{0}{X_i}}| \leq \Delta_{L} + \Delta_{R} + 9$,
otherwise $\encpow{\hat{L}_{0}^{X_i}}$ and $\encpow{\hat{R}_{0}^{X_i}}$.
The information can be computed in $O(n \log^*M)$ time and space in a bottom-up manner, i.e., by processing variables in increasing order.
For $X_i \rightarrow X_{\ell} X_{r}$, if both $|\encpow{\xshrink{0}{X_{\ell}}}|$ and $|\encpow{\xshrink{0}{X_{r}}}|$ are no greater than $\Delta_{L} + \Delta_{R} + 9$,
we can compute $\encpow{\xshrink{0}{X_i}}$ in $O(\log^* M)$ time by naively concatenating $\encpow{\xshrink{0}{X_{\ell}}}$ and $\encpow{\xshrink{0}{X_{r}}}$.
Otherwise $|\encpow{\xshrink{0}{X_i}}| > \Delta_{L} + \Delta_{R} + 9$ must hold, and 
$\encpow{\hat{L}_{0}^{X_i}}$ and $\encpow{\hat{R}_{0}^{X_i}}$ can be computed in $O(1)$ time from the information for $X_{\ell}$ and $X_{r}$.

The run-length encoded signatures represented by $z_{0}^{X_i}$ can be obtained by using the above information for $X_{\ell}$ and $X_r$ in $O(\log^* M)$ time:
$z_{0}^{X_i}$ is created over run-length encoded signatures
$\encpow{\xshrink{0}{X_{\ell}}}$ (or $\encpow{\hat{R}_{0}^{X_{\ell}}}$) followed by $\encpow{\xshrink{0}{X_r}}$ (or $\encpow{\hat{R}_{0}^{X_r}}$).
Also, by definition $A_{0}^{X_n}$ and $B_{0}^{X_n}$ represents $\encpow{\hat{L}_{0}^{X_n}}$ and $\encpow{\hat{R}_{0}^{X_n}}$, respectively.

Hence, we can compute in $O(n \log^* M)$ time $O(n \log^*M)$ run-length encoded signatures to which we give signatures.
We determine signatures by sorting the run-length encoded signatures as Lemma~\ref{lem:signature_encpow}.
However, in contrast to Lemma~\ref{lem:signature_encpow},
we do not use bucket sort for sorting the powers of runs
because the maximum length of runs could be as large as $N$ and we cannot afford $O(N)$ space for buckets.
Instead, we use the sorting algorithm of Han~\cite{SuperSort} which sorts $x$ integers in $O(x \log\log x)$ time and $O(x)$ space.
Hence, we can compute $\Lambda_{0}$ in $O(n \log \log (n \log^* M) \log^* M)$ time and $O(n \log^*M)$ space.
\end{proof}

\begin{lemma}\label{lem:Lambda_t}
Given $\hat{\Lambda}_{t}$, we can compute $\Lambda_{t}$ in $O(n \log \log (n \log^* M) \log^*M)$ time and $O(n \log^*M)$ space.
\end{lemma}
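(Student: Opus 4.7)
The plan is to mimic Lemma~\ref{lem:Lambda_0}, but starting at level $t$ where the job is to convert the compressed representation $\hat{\Lambda}_t$ of $\shrink{t}{X_n}$ into the compressed representation $\Lambda_t$ of $\pow{t}{X_n} = \mathit{Sig}^+(\encpow{\shrink{t}{X_n}})$. The structural ingredients of $\hat{\Lambda}_t$ together with the identity $\xshrink{t}{X_i} = \hat{y}^{X_\ell}_t \hat{z}^{X_i}_t \hat{y}^{X_r}_t$ (for $X_i \to X_\ell X_r$) make a single bottom-up pass sufficient to collect all the run-length blocks that need signatures.

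First, in a bottom-up traversal of the DAG of $\mathcal{S}$, I would compute, for each variable $X_i$, either the full $\encpow{\xshrink{t}{X_i}}$ when $|\encpow{\xshrink{t}{X_i}}| \leq \Delta_L + \Delta_R + 9$, or otherwise only $\encpow{\hat{L}^{X_i}_t}$ and $\encpow{\hat{R}^{X_i}_t}$. Since $\hat{z}^{X_i}_t$ has length $O(\log^* M)$ and is already stored in $\hat{\Lambda}_t$, each such summary is obtained in $O(\log^* M)$ time by concatenating the three factors and merging consecutive equal signatures across the two seams. This pass costs $O(n \log^* M)$ time and space.

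Next, for every $X_i \to X_\ell X_r$, I would extract the run-length block sequence encoded by $z_t^{X_i}$. By definition $\xpow{t}{X_i}$ is obtained from $\encpow{\xshrink{t}{X_i}}$ after stripping its first and last runs; therefore $z_t^{X_i}$ corresponds precisely to the run-length blocks lying strictly inside $\hat{R}^{X_\ell}_t\, \hat{z}^{X_i}_t\, \hat{L}^{X_r}_t$, possibly coalescing with neighbours at each seam, while the interior of the children is already handled recursively through $z_t^{X_\ell}$ and $z_t^{X_r}$. Using the head/tail/short summaries from the first pass, each $z_t^{X_i}$ is produced in $O(\log^* M)$ time, yielding a collection of $O(n \log^* M)$ pairs $(\text{signature}, \text{multiplicity})$ that still need names. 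The boundary pieces $A_t^{X_n}$ and $B_t^{X_n}$ of $\Lambda_t$ are derived analogously from $\hat{A}^{X_n}_t \hat{L}^{X_n}_t$ and $\hat{R}^{X_n}_t \hat{B}^{X_n}_t$.

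Finally, I need to assign fresh signatures to the distinct run-length blocks. Because multiplicities may be as large as $N$, bucket sort is infeasible; instead, as in Lemma~\ref{lem:Lambda_0}, I pack each run into a single $O(\log M)$-bit integer and sort the $O(n \log^* M)$ integers with Han's algorithm~\cite{SuperSort} in $O(n \log\log(n \log^* M) \log^* M)$ time and $O(n \log^* M)$ space, then sweep in sorted order to identify duplicates, mint new signatures via the data structures of Section~\ref{sec:Update}, and substitute the signatures back into every $z_t^{X_i}$, $A_t^{X_n}$ and $B_t^{X_n}$. The total cost meets the bound claimed.

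The main obstacle is the bookkeeping across children's boundaries: when the rightmost run of $\hat{R}^{X_\ell}_t$ carries the same signature as the leftmost symbol of $\hat{z}^{X_i}_t$ (or symmetrically on the right seam), two runs coalesce and their multiplicities add, so the correct head/tail runs of $\xshrink{t}{X_i}$ must be propagated upward even though the interior of the child's $\encpow{\xshrink{t}{X_\cdot}}$ has been discarded. Once this seam-merging is handled consistently, the rest of the argument mirrors Lemma~\ref{lem:Lambda_0} verbatim and the stated complexity follows.
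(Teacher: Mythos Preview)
Your proposal is correct and follows essentially the same approach as the paper: a bottom-up pass over the SLP using the stored $\hat{z}_t^{X_i}$ to compute, for each variable, either the full $\encpow{\xshrink{t}{X_i}}$ (when short) or just its head and tail runs, then extracting the $O(n\log^*M)$ run-length pairs underlying $z_t^{X_i}$, $A_t^{X_n}$, $B_t^{X_n}$ and assigning signatures via Han's integer sort. The paper's proof is terser and omits your explicit discussion of seam-merging, but the two arguments are the same in substance.
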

\begin{proof}
The computation is similar to that of Lemma~\ref{lem:Lambda_0}
except that we also use $\hat{\Lambda}_{t}$.
\end{proof}

\begin{lemma}\label{lem:hat_Lambda_t}
Given $\Lambda_{t}$, we can compute $\hat{\Lambda}_{t+1}$ in $O(n \log^*M)$ time and $O(n \log^*M)$ space.
\end{lemma}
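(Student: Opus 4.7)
The approach parallels the proof of Lemma~\ref{lem:Lambda_t}, with two essential differences: we apply $\mathit{Eblock}$ rather than $\mathit{Epow}$, and the blocks produced have length between $2$ and $4$ instead of being run-length encodings with potentially unbounded powers. This bounded-length property removes the need to invoke Han's integer sort for the powers, so plain bucket sort suffices and the work per level drops to $O(n \log^* M)$.

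Concretely, the plan is as follows. First, in a bottom-up pass over the SLP I would compute, for every variable $X_i$, the short prefix $L_t^{X_i}$ and short suffix $R_t^{X_i}$ of $\xpow{t}{X_i}$, each of length $O(\log^* M)$. For $X_i \to X_\ell X_r$, since $\xpow{t}{X_i}$ aligns at its boundary with the concatenation $y_t^{X_\ell} z_t^{X_i} y_t^{X_r}$, the required $O(\log^* M)$-length prefix (resp.\ suffix) can be assembled by splicing the already-computed short prefix/suffix of $X_\ell$ (resp.\ $X_r$) with $z_t^{X_i}$ and, if $y_t^{X_\ell}$ is too short, with the corresponding short piece inherited from one more level of descent. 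Each variable is handled in $O(\log^* M)$ time, totaling $O(n \log^* M)$.

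Second, for each $X_i \to X_\ell X_r$ I would form the length-$O(\log^* M)$ sequence $R_t^{X_\ell} z_t^{X_i} L_t^{X_r}$, apply $\mathit{Eblock}$ using the precomputed table of Lemma~\ref{lem:CoinTossing}, and take the resulting blocks as the generators of $\hat{z}_{t+1}^{X_i}$. The boundary pieces $\hat{A}_{t+1}^{X_n}$ and $\hat{B}_{t+1}^{X_n}$ are produced analogously by applying $\mathit{Eblock}$ to $A_t^{X_n} L_t^{X_n}$ and $R_t^{X_n} B_t^{X_n}$, respectively. This yields a total of $O(n \log^* M)$ blocks, each of length at most $4$. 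To convert the blocks into signatures, I maintain the invariant (as in Lemma~\ref{lem:signature_encblock}) that the signatures introduced at level $t$ occupy a compact integer interval of length $O(n \log^* M)$, obtained by allocating smallest unused integers at each level. Each block-coordinate is then an index into a bucket of size $O(n \log^* M)$, so radix-sorting over the (at most four) coordinates identifies duplicate blocks and lets me assign new level-$(t{+}1)$ signatures compactly in $O(n \log^* M)$ time; no collision with lower-level signatures is possible because $\mathit{Sig}$ is an injection and levels are processed in increasing order.

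The main obstacle is the bookkeeping in the bottom-up pass that builds the $L_t^{X_i}, R_t^{X_i}$: when $t \geq h^{\val{X_i}}$ some of the intended $y_t^{X_i}$ are already empty, and when $y_t^{X_\ell}$ (or $y_t^{X_r}$) is shorter than $\deltaLR{L} + 3$ the needed $\log^* M$ elements must be stitched out of deeper descendants. Executing this stitching so that every variable still costs only $O(\log^* M)$ amortized—without ever expanding a long $\xpow{t}{\cdot}$ explicitly—is the delicate part, but it is handled by the same kind of short-prefix/short-suffix recursion already used for $\hat{L}, \hat{R}$ at the previous level.
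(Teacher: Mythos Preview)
Your overall plan matches the paper's: compute short prefixes/suffixes of each $y_t^{X_i}$ bottom-up over the SLP, form the sequence $v_t^{X_\ell}\, z_t^{X_i}\, u_t^{X_r}$ (your $R_t^{X_\ell} z_t^{X_i} L_t^{X_r}$) on which $\hat z_{t+1}^{X_i}$ is parsed, treat $\hat A_{t+1}^{X_n}$ and $\hat B_{t+1}^{X_n}$ analogously, and assign signatures to the resulting $O(n\log^* M)$ length-$\le 4$ blocks by bucket sort as in Lemma~\ref{lem:signature_encblock}.

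Where you diverge is in the ``main obstacle'' paragraph. The case you worry about---$y_t^{X_\ell}$ nonempty but too short to supply the needed $\Delta_L+\Delta_R+4$ symbols---does not occur. The paper observes, directly from Definition~\ref{def:xshrink} (via the definition of $h^P$), that whenever $z_t^{X} \neq \varepsilon$ one has $|z_t^{X}| > \Delta_L+\Delta_R+9$; consequently each $y_t^{X_j}$ is either empty or already long enough. The bottom-up pass therefore collapses to a clean two-branch rule per variable: $u_t^{X_i} = u_t^{X_\ell}$ if $z_t^{X_\ell}\neq\varepsilon$, and otherwise $u_t^{X_i}$ is simply the length-$(\Delta_L+\Delta_R+4)$ prefix of $z_t^{X_i}$ (symmetrically for $v_t^{X_i}$). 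No stitching from deeper descendants and no amortization argument is needed, and you get a worst-case $O(\log^* M)$ per variable. Your proposed amortized stitching is not wrong in spirit, but you have not actually argued it runs in the stated bound, and once you see the length dichotomy it becomes unnecessary.

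A minor point: your reuse of the symbols $L_t^{X_i}$, $R_t^{X_i}$ for these short prefixes/suffixes clashes with the paper's Definition~\ref{def:xshrink}, where $L_t^P$ and $R_t^P$ already denote specific block-boundary pieces of $\xpow{t-1}{P}$; the paper introduces fresh names $u_t^{X_i}, v_t^{X_i}$ for exactly this reason.
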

\begin{proof}
In order to compute $\hat{z}_{t+1}^{X_i}$ for a variable $X_i \rightarrow X_{\ell} X_{r}$,
we need a signature sequence on which $\hat{z}_{t+1}^{X_i}$ is created,
as well as its context, i.e., $\Delta_{L}$ signatures to the left and $\Delta_{R}$ to the right.
To be precise, the needed signature sequence is $v_{t}^{X_{\ell}} z_{t}^{X_i} u_{t}^{X_{r}}$,
where $u_{t}^{X_j}$ (resp. $v_{t}^{X_j}$) denotes a prefix (resp. suffix) of $y_{t}^{X_j}$ of length $\Delta_{L} + \Delta_{R} + 4$ for any variable $X_j$
(see also Figure~\ref{fig:hat_z_construction}). 
Also, we need $A_{t} u_{t}^{X_n}$ and $v_{t}^{X_n} B_{t}$ to create $\hat{A}_{t+1}^{X_n}$ and $\hat{B}_{t+1}^{X_n}$, respectively.

Note that by Definition~\ref{def:xshrink}, 
$|z_{t}^{X}| > \Delta_{L} + \Delta_{R} + 9$ 
if $z_{t}^{X} \neq \varepsilon$.
Then, we can compute $u_{t}^{X_i}$ for all variables $X_i$ in $O(n \log^*M)$ time and space
by processing variables in increasing order on the basis of the following fact:
$u_{t}^{X_i} = u_{t}^{X_{\ell}}$ if $z_{t}^{X_{\ell}} \neq \varepsilon$,
otherwise $u_{t}^{X_i}$ is the prefix of $z_{t}^{X_i}$ of length $\Delta_{L} + \Delta_{R} + 4$.
Similarly $v_{t}^{X_i}$ for all variables $X_i$ can be computed in $O(n \log^*M)$ time and space.

Using $u_{t}^{X_i}$ and $v_{t}^{X_i}$ for all variables $X_i$,
we can obtain $O(n \log^*M)$ blocks of signatures to which we give signatures.
We determine signatures by sorting the blocks by bucket sort as in Lemma~\ref{lem:signature_encblock}
in $O(n \log^*M)$ time.
Hence, we can get $\hat{\Lambda}_{t+1}$ in $O(n \log^*M)$ time and space.
\end{proof}

\begin{figure}[t]
\begin{center}
  \includegraphics[scale=0.5]{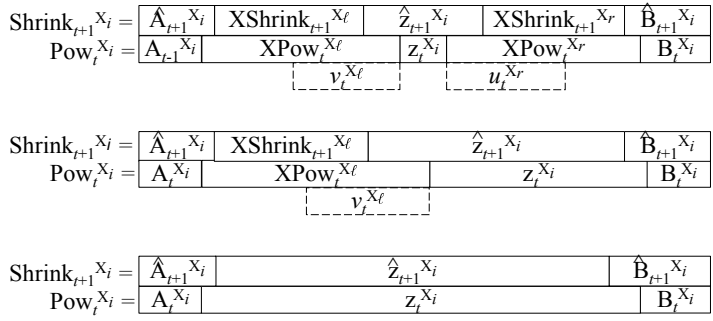}
  \caption{
  Abstract images of the needed signature sequence $v_{t}^{X_{\ell}} z_{t}^{X_i} u_{t}^{X_r}$ ($v_{t}^{X_{\ell}}$ and $u_{t}^{X_r}$ are not shown when they are empty)
  for computing $\hat{z}^{X_i}_{t+1}$ in three situations:
  Top for $0 \leq t < h^{X_{\ell}}, h^{X_{r}}$; middle for $h^{X_r} \leq t < h^{X_{\ell}}$; and bottom for $h^{X_{\ell}}, h^{X_{r}} \leq t < h^{X_i}$.
  } 
  \label{fig:hat_z_construction}
\end{center}
\end{figure}

\begin{proof}[Proof of Theorem~\ref{theo:HConstructuionTheorem}~(3b)]
Using Lemmas~\ref{lem:Lambda_0},~\ref{lem:Lambda_t} and~\ref{lem:hat_Lambda_t},
we can get $\hat{\Lambda}_{h^{X_n}}$ in $O(n \log \log \\(n \log^* M) \log N \log^*M)$ time
by computing $\Lambda_{0}, \hat{\Lambda}_{1}, \ldots, \hat{\Lambda}_{h^{X_n}}$ incrementally.
Note that during the computation we only have to keep $\Lambda_{t}$ (or $\hat{\Lambda}_{t}$) for the current $t$ and the assignments of $\mathcal{G}$.
Hence the working space is $O(n \log^* M + w)$.
By processing $\hat{\Lambda}_{h^{X_n}}$ in $O(n \log^* M)$ time, 
we can get the DAG of $\mathcal{G}$ of size $O(w)$.
\end{proof}

\section{Applications} \label{sec:applications}
Theorem~\ref{theo:changedSLP} is an application to text compression.
Theorems~\ref{theo:SLP_index}-\ref{theo:ImproveDictionaryMatching}
are applications to compressed string processing,
where the task is to process a given compressed representation of string(s)
without explicit decompression.
We believe that only a few applications are listed here,
considering the importance of LCE queries.
As one example of unlisted applications, 
there is a paper~\cite{I2016} in which our LCE data structure was used to improve 
an algorithm of computing the Lyndon factorization of a string represented by a given SLP.
\begin{theorem}\label{theo:changedSLP}
(1) Given a dynamic signature encoding $\mathcal{G}$ for 
  $\mathcal{G} = (\Sigma, \mathcal{V}, \mathcal{D}, S)$ of size $w$ which generates $T$, 
  we can compute an SLP $\mathcal{S}$ of size $O(w \log |T|)$ generating $T$ in $O(w \log |T|)$ time.
(2) Let us conduct a single $\mathit{INSERT}$ or $\mathit{DELETE}$
    operation on the string $T$
    generated by the SLP of (1).
    Let $y$ be the length
    of the substring to be inserted or deleted,
    and let $T'$ be the resulting string.
    During the above operation on the string,
    we can update, in $O((y + \log |T'| \log^* M)(f_{\mathcal{A}} + \log |T'|) )$ time,
    the SLP of (1) to an SLP $\mathcal{S}'$ of size $O(w' \log |T'|)$
    which generates $T'$, 
    where $w'$ is the size of updated $\mathcal{G}$ which generates $T'$.
\end{theorem}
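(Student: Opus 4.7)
For part (1), the observation is that the signature encoding $\mathcal{G}$ is essentially an SLP already, except for its run-length productions $e \to \hat{e}^k$. The plan is to replace each such run-length production by a chain of $O(\log k)$ ordinary binary productions using the standard repeated-squaring trick: introduce fresh auxiliary variables $Y_0, Y_1, \ldots, Y_{\lfloor \log_2 k \rfloor}$ with $Y_0 \to \hat{e}\hat{e}$ and $Y_{i+1} \to Y_i Y_i$, then assemble $\hat{e}^k$ by concatenating in a right-leaning chain those $Y_i$ whose indices correspond to the $1$-bits of $k$. Since $k \le |T|$, each run-length rule contributes $O(\log |T|)$ SLP rules, and summing over the $w$ rules of $\mathcal{G}$ yields an SLP $\mathcal{S}$ of size $O(w \log |T|)$ constructible in $O(w \log |T|)$ total time. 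The binary productions of $\mathcal{G}$, coming from $\encblock{\cdot}$ whose length-$2$-to-$4$ blocks have already been unfolded to binary rules by the recursive $\mathit{Sig}$ function, can be copied into $\mathcal{S}$ unchanged.

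For part (2), I would maintain, alongside the dynamic signature encoding, a pointer from every run-length signature $e$ in $\mathcal{G}$ to the list of auxiliary SLP variables and productions that encode it in $\mathcal{S}$. I would then instrument the update procedure of Theorem~\ref{theo:theorem2} so that, during $\mathit{INSERT}$ or $\mathit{DELETE}$, it emits the sequence of signatures of $\mathcal{G}$ that are created or destroyed. By the analysis of Theorem~\ref{theo:theorem2} (via Lemma~\ref{lem:ancestors}), this sequence has length $O(y + \log |T'| \log^* M)$. For each destroyed signature I follow the pointer and remove its auxiliary SLP rules; for each newly created signature I either copy a single binary rule into $\mathcal{S}$, or, if it is a run-length rule $e \to \hat{e}^k$, build the $O(\log |T'|)$-rule chain as in~(1) and register it in the pointer table. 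Each such elementary action costs $O(f_{\mathcal{A}} + \log |T'|)$ time, giving the overall update bound $O((y + \log |T'| \log^* M)(f_{\mathcal{A}} + \log |T'|))$. The resulting SLP has size $O(w' \log |T'|)$ because each of the $w'$ signatures of the updated $\mathcal{G}$ contributes at most $O(\log |T'|)$ rules.

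The main obstacle I expect is the bookkeeping needed to ensure that the repeated-squaring chains associated with run-length signatures do not interact across updates: to keep the accounting clean I would avoid any sharing of auxiliary variables between distinct run-length signatures, so that an insertion or deletion in $\mathcal{G}$ touches only the SLP rules belonging to the affected signatures and leaves all other chains untouched. A secondary point to verify is that for a run-length signature $e \to \hat{e}^k$ that survives the update, its chain from~(1) remains valid with respect to the new length bound $\log |T'|$, which is automatic because neither $k$ nor $\hat{e}$ changed. With these invariants in place, the size bound $O(w' \log |T'|)$ and the claimed update time follow directly from summing the per-signature work over the $O(y + \log |T'| \log^* M)$ signatures that change.
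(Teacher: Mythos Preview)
Your proposal is correct and follows essentially the same approach as the paper: both parts rely on replacing each run-length rule $e \to \hat{e}^k$ by an $O(\log k)$-size repeated-squaring chain, and part~(2) simply tracks the $O(y + \log |T'| \log^* M)$ signatures that change (via Lemma~\ref{lem:ancestors}) and rebuilds or removes the corresponding SLP rules at cost $O(f_{\mathcal{A}} + \log |T'|)$ each. The paper's proof is much terser and does not spell out the pointer bookkeeping or the non-sharing invariant you mention, but these are exactly the implementation details one would need, and they do not alter the argument.
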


We can get the next lemma using Theorem~\ref{theo:HConstructuionTheorem}~(3b) and Theorem~\ref{theo:theorem2}:
\begin{lemma}\label{lem:SLP_sort}
Given an SLP of size $n$ representing a string of length $N$,
we can sort the variables of the SLP
in lexicographical order in $O(n \log n \log N \log^* N)$ time and $O(n \log^* N + w)$ working space.
\end{lemma}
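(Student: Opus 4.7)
The idea is to build a signature encoding of $T := \val{X_n}$ and run a comparison-based sort on the variables, where each pairwise lexicographic comparison is answered by one LCE query on $\mathcal{G}$ plus, at worst, one $O(\log N)$-time character access into the DAG.

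First, I would invoke Theorem~\ref{theo:HConstructuionTheorem}~(3b) on the input SLP to obtain the DAG of the signature encoding $\mathcal{G}$ of $T$, in $O(n \log\log(n \log^* N) \log N \log^* N) = O(n \log n \log N \log^* N)$ time within the target working space $O(n \log^* N + w)$. Next, I would associate to every variable $X_i$ a pair $(p_i, |\val{X_i}|)$ with $T[p_i..p_i+|\val{X_i}|-1] = \val{X_i}$; such a pair is produced for all $i$ simultaneously by a top-down traversal of the SLP's DAG from $X_n$ that, on the first visit of each child, propagates the parent's position (plus the left sibling's length for the right child). The no-useless-variables assumption guarantees reachability of every $X_i$, so this phase uses $O(n)$ time and $O(n)$ auxiliary space.

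Then I would run merge sort on $X_1, \ldots, X_n$ ordered by $\val{X_i}$. To compare $X_i$ and $X_j$, compute $\ell := \LCEQ(T, T, p_i, p_j)$ on $\mathcal{G}$ via Lemma~\ref{lem:sub_operation_lemma} (with $e_1 = e_2 = S$), costing $O(\log N + \log \ell \log^* N) = O(\log N \log^* N)$ time. If $\ell \geq \min(|\val{X_i}|, |\val{X_j}|)$, the order is decided by comparing the two stored lengths; otherwise the single mismatching pair $T[p_i+\ell]$ vs.\ $T[p_j+\ell]$ is compared, each character being extracted by an $O(\log N)$-time root-to-leaf descent of the DAG. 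With $O(n \log n)$ such comparisons, the total sorting cost is $O(n \log n \log N \log^* N)$, matching the claim. Theorem~\ref{theo:theorem2}'s $O(w f_\mathcal{A})$ preprocessing is installed once alongside $\mathcal{G}$ to guarantee deterministic signature lookups whenever the construction or query phase needs them.

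The main technical hurdle I anticipate is keeping the total working space within $O(n \log^* N + w)$: the intermediate encodings $\Lambda_t, \hat\Lambda_t$ of Section~\ref{sec:HConstruction3-2} must be released immediately after $\mathcal{G}$ is formed, and both the arrays $(p_i, |\val{X_i}|)$ and the merge-sort scratch must fit into an $O(n)$ additive budget rather than growing across recursive calls. Once this bookkeeping is in place, the stated time bound follows by summing the three phases, noticing that $\log\log(n \log^* N) = O(\log n)$ so that the Theorem~\ref{theo:HConstructuionTheorem}~(3b) cost is absorbed into the sorting cost.
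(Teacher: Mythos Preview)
Your proposal is correct and matches the paper's approach: the paper merely states that the lemma follows from Theorem~\ref{theo:HConstructuionTheorem}~(3b) and Theorem~\ref{theo:theorem2}, and the mechanism you spell out---build $\mathcal{G}$ from the SLP, record one occurrence $(p_i,|\val{X_i}|)$ per variable (exactly as in the proof of Theorem~\ref{theo:smaller_LCE}), then comparison-sort via $\LCEQ$ on $S$---is precisely the intended instantiation. The only superfluous step is invoking Theorem~\ref{theo:theorem2}'s $O(w f_{\mathcal{A}})$ preprocessing: neither the construction of Theorem~\ref{theo:HConstructuionTheorem}~(3b) nor the $\LCEQ$ queries of Lemma~\ref{lem:sub_operation_lemma} require the dynamic $\mathit{Assgn}^{-1}$ dictionary, so you can drop that sentence without affecting correctness or the bounds.
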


Lemma~\ref{lem:SLP_sort} has an application to an SLP-based index of Claude and Navarro~\cite{claudear:_self_index_gramm_based_compr}.
In the paper, they showed how to construct their index in $O(n \log n)$ time
if the lexicographic order of variables of a given SLP is already computed.
However, in order to sort variables they almost decompressed the string, and hence,
needs $\Omega(N)$ time and $\Omega(N \log |\Sigma|)$ bits of working space.
Now, Lemma~\ref{lem:SLP_sort} improves the sorting part yielding the next theorem.
\begin{theorem} \label{theo:SLP_index}
Given an SLP of size $n$ representing a string of length $N$,
we can construct the SLP-based index of~\cite{claudear:_self_index_gramm_based_compr}
in $O(n \log n \log N \log^* N)$ time and $O(n \log^* N + w)$ working space.
\end{theorem}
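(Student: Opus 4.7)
The plan is to simply compose Lemma~\ref{lem:SLP_sort} with the existing Claude--Navarro construction. The key observation, already highlighted right before the theorem statement, is that their construction algorithm has two logical stages: (i) sort the variables of the SLP lexicographically with respect to the strings they derive, and (ii) build the index from the sorted list in $O(n \log n)$ time. Stage (ii) is unchanged in our setting, so the whole task reduces to showing that we can perform stage (i) within the stated bounds, after which the two cost estimates are simply added.

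First I would invoke Lemma~\ref{lem:SLP_sort} on the input SLP of size $n$ representing a string of length $N$. This yields the lexicographic order of the variables in $O(n \log n \log N \log^* N)$ time and $O(n \log^* N + w)$ working space, where $w$ is the size of the signature encoding of the string represented by the SLP. Internally, Lemma~\ref{lem:SLP_sort} constructs the signature encoding via Theorem~\ref{theo:HConstructuionTheorem}~(3b) and uses the dynamic LCE machinery of Theorem~\ref{theo:theorem2}, so the signature encoding (of size $w$) is necessarily held in memory during sorting, which is what accounts for the $w$ term in the working-space bound.

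Next I would feed the sorted list of variables into the Claude--Navarro construction of~\cite{claudear:_self_index_gramm_based_compr}, which produces the SLP-based index in $O(n \log n)$ additional time. Since we can discard the signature encoding once sorting is complete (the index construction itself uses only the SLP plus the lexicographic order), the working space is bounded by $O(n \log^* N + w)$ throughout. Summing the two stages, the overall time is $O(n \log n \log N \log^* N) + O(n \log n) = O(n \log n \log N \log^* N)$, as required.

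The only potentially subtle point, and in my view the main (albeit minor) obstacle, is confirming that the Claude--Navarro construction really does operate within our $O(n \log^* N + w)$ working-space budget once the order is supplied; this is what their prior bottleneck implicitly asserts, since the dominant $\Omega(N)$ space they incurred came solely from the naive sorting stage that we are replacing. With that observation the two bounds compose cleanly and the theorem follows.
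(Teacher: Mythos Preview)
Your proposal is correct and follows exactly the same approach as the paper: the theorem is stated as an immediate consequence of Lemma~\ref{lem:SLP_sort} (which replaces the $\Omega(N)$-cost sorting stage of Claude and Navarro) together with their $O(n \log n)$-time index construction once the lexicographic order of the variables is known. Your write-up is in fact more explicit than the paper's, which simply remarks that Lemma~\ref{lem:SLP_sort} ``improves the sorting part yielding the next theorem.''
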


\begin{theorem} \label{theo:faster_LCP}
      Given an SLP $\mathcal{S}$ of size $n$ generating a string $T$ of length $N$,
      we can construct, in $O(n \log \log n \log N \log^* N)$ time, 
      a data structure 
      which occupies $O(n \log N \log^* N)$ space 
      and supports $\LCPQ(\val{X_i}, \val{X_j})$ and $\LCSQ(\val{X_i}, \val{X_j})$ queries 
      for variables $X_i, X_j$ in $O(\log N)$ time.
The $\LCPQ(\val{X_i}, \val{X_j})$ and $\LCSQ(\val{X_i}, \val{X_j})$ query times can be improved 
to $O(1)$ using $O(n \log n \log N \log^* N)$ preprocessing time. 
\end{theorem}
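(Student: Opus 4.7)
The plan is to combine the direct SLP-to-signature-encoding construction of Theorem~\ref{theo:HConstructuionTheorem}~(3b) with additional per-variable precomputation that exploits the common-sequence property. First, build the signature encoding $\mathcal{G}$ from the input SLP $\mathcal{S}$ using Theorem~\ref{theo:HConstructuionTheorem}~(3b); in the static case $\log^* M = O(\log^* N)$, so this runs in $O(n \log\log n \log N \log^* N)$ time. During the level-by-level computation of $\Lambda_0, \hat{\Lambda}_1, \Lambda_1, \ldots$, I would record $\id{\val{X_i}}$ for every SLP variable $X_i$ and, for every pair $(X_i, t)$, store a single compound signature summarising the level-$t$ left boundary $\hat{L}_t^{\val{X_i}} L_t^{\val{X_i}}$ (and symmetrically the right boundary $R_t^{\val{X_i}} \hat{R}_t^{\val{X_i}}$ for $\LCSQ$), together with an array of the $O(\log^* N)$ primitive signatures it is composed of. The total extra storage is $O(n \log N \log^* N)$, and these summaries are produced by the bottom-up process at no asymptotic cost, since $\mathit{Sig}$ can be looked up in $O(1)$ once $\mathcal{G}$ is fixed.

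To answer $\LCPQ(\val{X_i}, \val{X_j})$, I would traverse the two derivation trees in parallel from the roots $\id{\val{X_i}}$ and $\id{\val{X_j}}$, and at each level $t$ compare only the stored level-$t$ left-boundary summaries in $O(1)$ time. By the common-sequence property (Lemma~\ref{lem:common_sequence2}) and the injectivity of $\mathit{Sig}$, equality of the summaries certifies equality of the corresponding prefix of $\val{X_i}$ and $\val{X_j}$, so $O(\log N)$ level comparisons locate the unique level at which the left boundaries first diverge. At that level we spend $O(\log^* N)$ to pinpoint the exact mismatching primitive signature using the stored expansion, and then run the LCE descent of Lemma~\ref{lem:sub_operation_lemma} from that signature downwards; because only one off-diagonal branch requires further exploration and its depth is at most $O(\log N)$, the whole query takes $O(\log N)$ time. $\LCSQ$ is handled symmetrically by applying the same construction to the signature encoding of $T^R$ obtained from the reversed SLP.

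For the $O(1)$-query refinement, I would use Lemma~\ref{lem:SLP_sort} to sort the $n$ SLP variables lexicographically by $\val{X_i}$ in $O(n \log n \log N \log^* N)$ time, then build the corresponding LCP array using $n-1$ of the $O(\log N)$-time queries above and equip it with a constant-time range-minimum structure in $O(n)$ extra time; any query is then answered by two rank lookups and one RMQ. The main obstacle I foresee is the second step: one must verify carefully that the per-level summaries really do let us bypass entire levels in $O(1)$, i.e.\ that a summary match at level $t$ together with summary matches at all higher levels genuinely certifies the equality of the corresponding substrings of $\val{X_i}$ and $\val{X_j}$, and that the divergence is pinpointed exactly once so that the $\log^* N$ factor in Lemma~\ref{lem:sub_operation_lemma} is absorbed into a single $O(\log^* N)$ charge instead of being paid on every level.
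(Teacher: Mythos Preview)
Your treatment of the $O(1)$ refinement (sort the variables, build the LCP array, equip it with constant-time RMQ) and of $\LCSQ$ via the reversed SLP matches the paper. The difference is in how the $O(\log N)$-time $\LCPQ$ query is obtained, and here the paper's route is markedly simpler and avoids the obstacle you identified. Once $\id{\val{X_i}}$ is known for every $i$ (done, as you say, by also computing $A_t^{X}, \hat{A}_t^{X}, B_t^{X}, \hat{B}_t^{X}$ for \emph{all} variables during the level-by-level construction---this is the paper's Lemma~\ref{lem:SLPSignatureEncoding}), one simply runs the LCE algorithm of Lemma~\ref{lem:sub_operation_lemma} from position $1$ on the pair $(\id{\val{X_i}}, \id{\val{X_j}})$. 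The point is that when both strings start at position $1$, the left-side pieces of the common-sequence decomposition telescope: the longest common prefix $P$ is represented in both derivation trees by $\hat{A}^{P}_{h^{P}}\,\xshrink{h^{P}}{P}\,R^{P}_{h^{P}-1}\hat{R}^{P}_{h^{P}-1}\cdots R^{P}_{0}\hat{R}^{P}_{0}$, whose run-length size is $O(\log |P| + \log^* M)$ rather than $O(\log |P|\log^* M)$. Hence the greedy LCE traversal already costs $O(\log N)$, with no per-level summaries needed (this is exactly the content of the paper's Lemma~\ref{lem:lcp_lcs_on_H}).

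The obstacle you flagged in your scheme is genuine. Equality of $\hat{L}_t^{\val{X_i}} L_t^{\val{X_i}}$ with $\hat{L}_t^{\val{X_j}} L_t^{\val{X_j}}$ only says those $O(\log^* M)$ signatures coincide; it does not certify that the prefixes of $\val{X_i}$ and $\val{X_j}$ agree up to any particular length, because the boundaries at successive levels do not tile a prefix of the string in a way that is easy to argue about, and the bulk of each string lives in the $\xshrink{t}/\xpow{t}$ interior at every level. Also, your ``compound signature'' lies outside the domain of $\mathit{Sig}$ as defined (blocks of length $2$--$4$ or a single run), so producing it requires extra machinery. Making your boundary-comparison idea rigorous would in effect re-derive the telescoping observation above, so it is cleaner to invoke it directly.
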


\begin{theorem}\label{theo:smaller_LCE}
  Given an SLP $\mathcal{S}$ of size $n$ generating a string $T$ of length $N$, 
  there is a data structure 
  which occupies $O(w + n)$ space and 
  supports queries $\LCEQ(\val{X_i},\val{X_j},a,b)$ for
  variables $X_i,X_j$, $1 \leq a \leq |X_i|$ and $1 \leq b
  \leq |X_j|$ in $O(\log N + \log \ell \log^* N)$ time, where $w = O(z \log N \log^* N)$.
  The data structure can be constructed in
  $O(n \log\log n \log N \log^* N)$ preprocessing time and 
  $O(n \log^* N + w)$
  working space, where $z \leq n$ is the size of the LZ77 factorization of $T$ and $\ell$ is the answer of LCE query.
\end{theorem}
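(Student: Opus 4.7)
The plan is to reduce LCE queries on arbitrary SLP variables to LCE queries on the single string $T = \val{X_n}$, which are already supported by Lemma~\ref{lem:sub_operation_lemma} once the signature encoding of $T$ is in hand. The only pieces missing from directly combining Theorem~\ref{theo:HConstructuionTheorem}~(3b) and Lemma~\ref{lem:sub_operation_lemma} are (i) a way to translate a position inside an arbitrary $\val{X_i}$ into a position inside $T$, and (ii) a way to truncate the LCE computation so that the running time depends on the query's actual answer $\ell$ and not on the full LCE on $T$.

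For the construction, I would invoke Theorem~\ref{theo:HConstructuionTheorem}~(3b) to build $\mathcal{G}$ directly from $\mathcal{S}$. Since we are in the static setting with $M = 4N$, we have $\log^* M = O(\log^* N)$, so the construction runs in $O(n \log\log n \log N \log^* N)$ time and $O(n \log^* N + w)$ working space, matching the theorem's budget. The resulting DAG occupies $O(w)$ space. Next, using the fact that the SLP has no useless variables, every $X_i$ occurs at least once in the derivation tree of $X_n$, so I would record, for each $X_i$, one such occurrence position $p_i$ in $T$ with $\val{X_i} = T[p_i..p_i+|\val{X_i}|-1]$. These positions are obtained by a single DFS on the SLP DAG starting at $X_n$ with $p_n := 1$ and propagating $p_\ell := p_i$ and $p_r := p_i + |\val{X_\ell}|$ the first time $X_\ell$ and $X_r$ receive a position along an assignment $X_i \to X_\ell X_r$. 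The DFS runs in $O(n)$ time and stores $O(n)$ extra integers, so the final data structure has size $O(w + n)$.

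To answer a query $\LCEQ(\val{X_i}, \val{X_j}, a, b)$, I use the identity
\[
  \LCEQ(\val{X_i}, \val{X_j}, a, b) \;=\; \min\bigl(\LCEQ(T, T, p_i + a - 1, p_j + b - 1),\; |\val{X_i}| - a + 1,\; |\val{X_j}| - b + 1\bigr),
\]
which is valid because $\val{X_i}[a..]$ and $\val{X_j}[b..]$ are prefixes of $T[p_i+a-1..]$ and $T[p_j+b-1..]$, respectively. I would invoke the procedure of Lemma~\ref{lem:sub_operation_lemma} on $\mathcal{G}$ with $e_1 = e_2 = S$, but terminate as soon as the cumulative match length reaches $\ell^* := \min(|\val{X_i}| - a + 1, |\val{X_j}| - b + 1)$.

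The main obstacle I expect to handle carefully is the early-termination analysis: we need the visited-node count to depend on the truncated answer $\ell$ rather than on the full LCE on $T$, which could be arbitrarily larger. This is addressed by observing that the greedy left-to-right traversal of Lemma~\ref{lem:sub_operation_lemma} only explores ancestors of nodes covering the currently matched prefix, so once the matched length reaches $\ell^*$ the traversal can be halted. By Lemma~\ref{lem:ancestors} the induced subtree then has at most $O(\log N + \log \ell \log^* N)$ nodes, giving the desired query time. Combined with the earlier space and preprocessing bounds, this establishes the theorem.
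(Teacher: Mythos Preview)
Your approach is essentially the same as the paper's: build $\mathcal{G}$ via Theorem~\ref{theo:HConstructuionTheorem}~(3b), record one occurrence $p_i$ of each $X_i$ in $T$ by a linear-time pass over the SLP, and reduce the query to an LCE on $T$ via Lemma~\ref{lem:sub_operation_lemma}. The paper's proof is terser and does not spell out the clipping $\min(\cdot,\,|\val{X_i}|-a+1,\,|\val{X_j}|-b+1)$ or the need to stop the traversal early; you are right to flag this, since without early termination the $\ell$ in the running time would be the LCE on $T$ rather than the actual answer on $X_i,X_j$, and your justification via Lemma~\ref{lem:ancestors} applied to the length-$\ell$ matched prefix is the correct fix.
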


Let $h$ be the height of the derivation tree of a given SLP $\mathcal{S}$.
Note that $h \geq \log N$.
Matsubara et al.~\cite{matsubara_tcs2009} showed 
an $O(nh(n + h \log N))$-time $O(n(n + \log N))$-space
algorithm to compute an $O(n \log N)$-size representation of 
all palindromes in the string.
Their algorithm uses
a data structure which supports in $O(h^2)$ time,
$\LCEQ$ queries of a special form $\LCEQ(\val{X_i}, \val{X_j}, 1, p_j)$~\cite{MasamichiCPM97}.
This data structure takes $O(n^2)$ space and can be constructed in 
$O(n^2 h)$ time~\cite{lifshits07:_proces_compr_texts}. 
Using Theorem~\ref{theo:smaller_LCE}, we obtain a faster algorithm,
as follows:
\begin{theorem}\label{theo:palindrome}
Given an SLP of size $n$ generating a string of length $N$,
we can compute an $O(n \log N)$-size representation
of all palindromes in the string 
in $O(n \log^2 N \log^* N)$ time and $O(n \log^* N + w)$ space.
\end{theorem}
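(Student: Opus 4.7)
The plan is to invoke the palindrome-enumeration algorithm of Matsubara et al.~\cite{matsubara_tcs2009} almost verbatim, replacing only its internal LCE oracle of~\cite{lifshits07:_proces_compr_texts} with our compressed LCE data structure from Theorem~\ref{theo:smaller_LCE}.

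To see that this works, I would first re-examine the original analysis and isolate the LCE cost: the $O(nh(n+h\log N))$ bound of~\cite{matsubara_tcs2009} decomposes as $O(n^2 h)$ for building the LCE oracle of~\cite{lifshits07:_proces_compr_texts}, plus $O(n \log N)$ LCE queries of the restricted form $\LCEQ(\val{X_i}, \val{X_j}, 1, p_j)$ each answered in $O(h^2)$ time by that oracle, plus a lower-order amount of bookkeeping that actually emits the $O(n \log N)$-size palindrome representation. Apart from these queries, the driver does not touch the oracle.

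Next I would apply Theorem~\ref{theo:smaller_LCE} to the input SLP to build our LCE data structure in $O(n \log\log n \log N \log^* N)$ time and $O(n \log^* N + w)$ working space. By Lemma~\ref{lem:sub_operation_lemma}, the structure supports the required query form and answers each call in $O(\log N + \log \ell \log^* N) = O(\log N \log^* N)$ worst-case time (using $\ell \le N$). Running the driver against this new oracle costs $O(n \log N)$ queries times $O(\log N \log^* N)$ per query, for $O(n \log^2 N \log^* N)$ total. This absorbs both the preprocessing (since $\log\log n = O(\log N)$) and the bookkeeping, yielding the claimed time bound; the working space is inherited from our LCE structure as $O(n \log^* N + w)$.

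The delicate part, which I would treat as the main obstacle, is a careful audit of the algorithm of~\cite{matsubara_tcs2009} to confirm that (i) every LCE call it issues matches a form supported by Lemma~\ref{lem:sub_operation_lemma}, (ii) the number of LCE queries is genuinely $O(n \log N)$ rather than some quantity involving $h$, and (iii) no auxiliary step --- for instance, navigation inside the derivation tree of the SLP --- silently relies on the $O(n^2)$-space oracle that we are discarding. Once this audit is complete, the stated complexities follow by the arithmetic above.
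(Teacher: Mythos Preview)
Your proposal is correct and follows essentially the same route as the paper: plug the compressed LCE oracle of Theorem~\ref{theo:smaller_LCE} into the Matsubara et al.\ framework, which the paper summarizes as $O(P(n,N)+E(n,N)\cdot n\log N)$ time and $O(n\log N+S(n,N))$ space, and then observe that the arithmetic yields $O(n\log^2 N\log^* N)$ time and $O(n\log^* N+w)$ space. The paper simply cites the $O(n\log N)$ query count as a known fact rather than re-auditing it, but otherwise your argument and the paper's coincide.
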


Our data structures also solve \emph{the grammar compressed dictionary matching problem}~\cite{DBLP:journals/tcs/INIBT15}.

\begin{theorem}\label{theo:ImproveDictionaryMatching}
Given a DSLP $\langle \mathcal{S}, m\rangle$ of size $n$ 
that represents a dictionary $\Pi_{\langle\mathcal{S},m \rangle}$ for $m$ patterns of total length $N$,
we can preprocess the DSLP in $O((n \log \log n + m \log m) \log N \log^* N)$ time and $O(n \log N \log^* N)$ space
so that, given any text $T$ in a streaming fashion,
we can detect all $\mathit{occ}$ occurrences of the patterns in $T$ in $O(|T|\log m \log N \log^* N + \mathit{occ})$ time.
\end{theorem}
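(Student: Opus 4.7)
The plan is to combine our compressed LCE machinery from Theorems~\ref{theo:HConstructuionTheorem} and~\ref{theo:smaller_LCE} with a classical Aho--Corasick-style automaton, realised implicitly through LCE queries against the signature encoding of the dictionary.

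First, I apply Theorem~\ref{theo:HConstructuionTheorem}~(3b) to the given DSLP to build the signature encoding $\mathcal{G}$ of (the concatenation of) the $m$ patterns in $O(n \log\log n \log N \log^* N)$ time and $O(n \log^* N + w)$ working space. The LCE data structure of Theorem~\ref{theo:smaller_LCE} built on $\mathcal{G}$ then occupies the claimed $O(n \log N \log^* N)$ space and answers $\LCEQ$ queries between arbitrary pattern suffixes in $O(\log N + \log \ell \log^* N)$ time.

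Second, I sort the $m$ patterns lexicographically using LCE queries as the comparison oracle in the spirit of Lemma~\ref{lem:SLP_sort}: each comparison costs $O(\log N \log^* N)$ time, so sorting costs $O(m \log m \log N \log^* N)$ time overall. On top of the sorted list I construct a Patricia-like trie over the patterns, locating edge splits and branching depths through LCE queries, and equip this trie with Aho--Corasick-style failure and output links. Each failure link is obtained by a small number of LCE queries between a pattern suffix and a pattern prefix of the dictionary, so the total preprocessing stays within the stated budget.

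Third, for streaming, I maintain the current automaton state as a node of the trie (equivalently, as an interval in the sorted pattern list) together with the length of the match accumulated so far. When the next text character arrives, I advance one step by a binary search among the $m$ patterns using LCE queries between the current text prefix and candidate pattern prefixes, spending $O(\log m \log N \log^* N)$ time per character; chains of failure links then emit any terminal patterns, and this part is charged against the $\mathit{occ}$ term. The main obstacle is faithfully simulating Aho--Corasick in compressed space, i.e., proving that every transition (including the failure cascades required to report every match) reduces to $O(\log m)$ LCE queries without losing any occurrence. Once this reduction from automaton transitions to LCE queries is established, the claimed preprocessing and query bounds follow from Theorems~\ref{theo:HConstructuionTheorem}~(3b) and~\ref{theo:smaller_LCE} by direct summation.
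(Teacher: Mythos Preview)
Your preprocessing plan---building the signature encoding from the DSLP via Theorem~\ref{theo:HConstructuionTheorem}~(3b), sorting the patterns with $\LCEQ$ comparisons, and erecting an $O(m)$-node compacted trie---is essentially what the paper does (it invokes Lemma~\ref{lem:SLPSignatureEncoding} and Theorem~\ref{theo:faster_LCP} rather than Theorem~\ref{theo:smaller_LCE}, but the effect is the same).

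The streaming phase is where your proposal diverges, and it has a real gap. You want to precompute Aho--Corasick failure links on the $O(m)$ explicit trie nodes, then spend one binary search per text character and charge ``chains of failure links'' to $\mathit{occ}$. Two problems: (i)~during matching the state can lie at any of the $\Theta(N)$ implicit positions along compacted edges, and a failure transition from such a state is not determined by the $O(m)$ precomputed links---you give no mechanism for this; (ii)~failure cascades are \emph{not} chargeable to $\mathit{occ}$ (that is the role of output/dictionary links); they are absorbed by the standard KMP/AC amortisation against $|T|$, so ``$O(\log m\log N\log^*N)$ per character'' is false worst-case and must be argued amortised---which you have not done. Your final bound is salvageable, but not by the argument you sketch.

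The paper avoids Aho--Corasick altogether. It represents the current longest match as a pair $(p,l)$, where $p$ is a signature already in $\mathcal{G}$ and the match equals $\val{p}[1..l]$. On reading $T[j{+}1]$ it tries start positions $i=j{-}l{+}1,j{-}l{+}2,\ldots$ in order; for each $i$ a single binary search over the sorted patterns, using $\LCEQ$ against $p$ (since $T[i..j]=\val{p}[i{-}j{+}l..l]$), decides whether $T[i..j{+}1]$ is a pattern prefix. Every failed $i$ simultaneously yields the max-prefix for position $i$ (hence all pattern occurrences starting there, via the trie with level-ancestor support) and advances the left end; therefore the total number of binary searches over the entire text is $O(|T|)$, giving the stated bound with a clean amortisation and no failure-link machinery.
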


It was shown in~\cite{DBLP:journals/tcs/INIBT15} that we can 
construct in $O(n^4\log n)$ time a data structure of size $O(n^2\log N)$ 
which finds all occurrences of the patterns in $T$ in $O(|T|(h+m))$ time,
where $h$ is the height of
the derivation tree of DSLP $\langle \mathcal{S}, m \rangle$.
Note that our data structure of Theorem~\ref{theo:ImproveDictionaryMatching}
is always smaller, and runs faster when $h = \omega(\log m \log N \log^* N)$.

\section{Appendix: Supplementary Examples and Figures}\label{sec:Example_Section}
\begin{example}[$\encblockd{p}$ and $\encpow{s}$] \label{ex:Encblock}
Let $\log^* W = 2$, and then $\deltaLR{L} = 8, \deltaLR{R} = 4$.\\
If $p = 1,2,3,2,5,7,6,4,3,4,3,4,1,2,3,4,5$ and $d = 1,0,0,1,0,1,0,0,1,0,0,0,1,0,1,0,0$,
then $\encblockd{p} = (1,2,3),(2,5),(7,6,4),(3,4,3,4),(1,2),(3,4,5)$, $|\encblockd{p}| = 6$ and $\encblockd{p}[2] = (2, 5)$. 
For string $s = aabbbbbabb$,
$\encpow{s} = a^2b^5a^1b^2$ and
$|\encpow{s}| = 4$ and $\encpow{s}[2] = b^5$.
\end{example}

\begin{example}[SLP]\label{ex:SLP}
  Let $\mathcal{S} = (\Sigma, \mathcal{V}, \mathcal{D}, S)$ be the SLP
  s.t.
$\Sigma = \{A, B, C \}$, $\mathcal{V} = \{ X_1, \cdots , X_{11} \}$, 
$\mathcal{D} = \{ 
X_{1} \rightarrow A, X_{2} \rightarrow B, X_{3} \rightarrow C, 
X_4 \rightarrow X_{3}X_{1}, X_5 \rightarrow X_{4}X_{2}, 
X_6 \rightarrow X_{5}X_{5}, X_7 \rightarrow X_{2}X_{3}, 
X_8 \rightarrow X_{1}X_{2}, X_9 \rightarrow X_{7}X_{8}, 
X_{10} \rightarrow X_{6}X_{9}, X_{11} \rightarrow X_{10}X_{6}
\}$, $S = X_{11}$, 
the derivation tree of $S$ represents $CABCABBCABCABCAB$.
\end{example}

\begin{example}[RLSLP]\label{ex:tree}
Let $\mathcal{G} = (\Sigma, \mathcal{V}, \mathcal{D}, S)$ be an RLSLP, 
where $\Sigma = \{A, B, C \}$, $\mathcal{V} = \{1, \ldots , 15 \}$, $\mathcal{D} = \{ 
1 \rightarrow A, 2 \rightarrow B, 3 \rightarrow C,
4 \rightarrow 3^4, 5 \rightarrow 1^1, 6 \rightarrow 2^1, 7 \rightarrow 3^1, 
8 \rightarrow (7,5),  9 \rightarrow (8,6), 10 \rightarrow (5,6), 11 \rightarrow (10,4), 
12 \rightarrow 9^2, 13 \rightarrow 10^7, 14 \rightarrow 11^1, 15 \rightarrow (12,13), 16 \rightarrow (15,14), 
17 \rightarrow 16^1
\}$, and $S = 17$.
The derivation tree of the start symbol $S$ represents a single string 
$T = CABCABABABABABABABABABCCCC$. 
Here, $\sig{(7,5)} = 8$, $\sig{(7,5,6)} = 9$, $\sig{(6,5)} = \rm{undefined}$.
See also Fig.~\ref{fig:SignatureTree}
which illustrates the derivation tree of the start symbol $S$
and the DAG for $\mathcal{G}$.
\end{example}

\begin{example}[Signature encoding]\label{ex:signature_dictionary}
Let $\mathcal{G} = (\Sigma, \mathcal{V}, \mathcal{D}, S)$ be an RLSLP of Example~\ref{ex:tree}. 
Assuming $\encblock{\pow{0}{T}} = (7,5,6),(7,5,6),(5,6)^7,(5,6,4)$ and $\encblock{\pow{1}{T}} = (12,13,14)$ hold, 
$\mathcal{G}$ is the signature encoding of $T$ and $\id{T} =  17$.
See Fig.~\ref{fig:SignatureTree} for an illustration of the derivation tree of $\mathcal{G}$ and the corresponding DAG.
\end{example}

\begin{figure}[ht]
\begin{center}
  \includegraphics[scale=0.7]{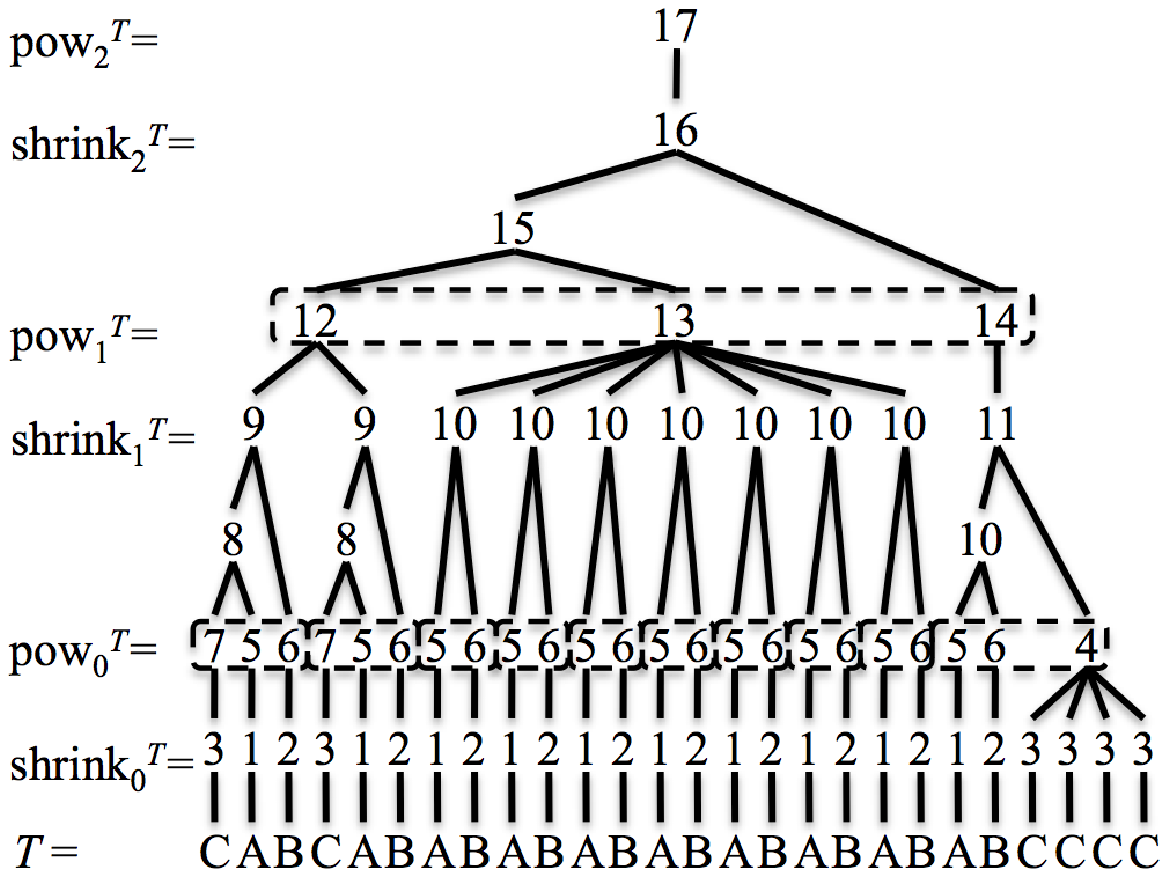}  
  \includegraphics[scale=0.7]{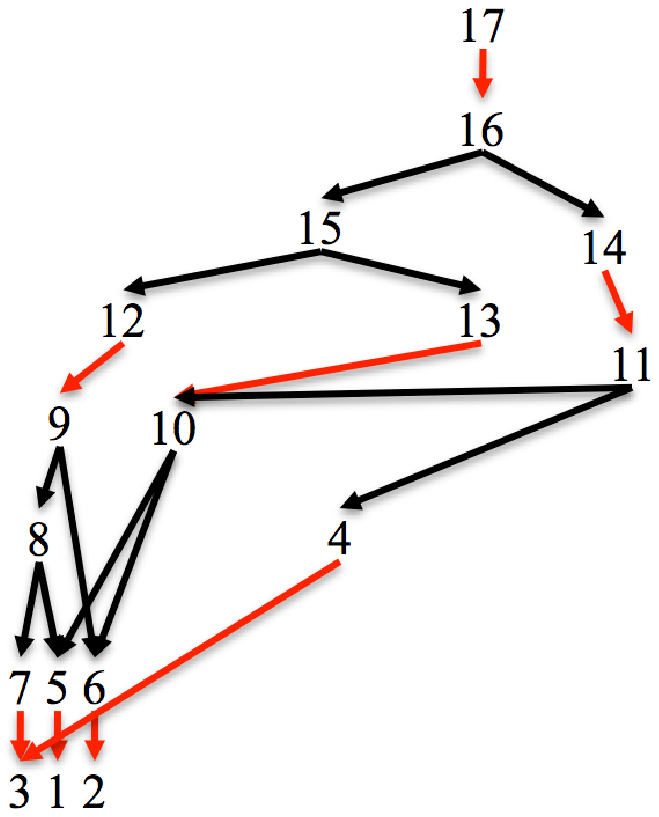}
  \caption{
  The derivation tree of $S$ (left) and the DAG for $\mathcal{G}$ (right) of Example~\ref{ex:tree}. 
  In the DAG, the black and red arrows represent $e \rightarrow e_{\ell}e_r$ and
  $e \rightarrow \hat{e}^{k}$ respectively.   
  In Example~\ref{ex:signature_dictionary}, $T$ is encoded by signature encoding.
  In the derivation tree of $S$, the dotted boxes represent the blocks created by the $\mathit{Eblock}$ function.
  }
  \label{fig:SignatureTree}
\end{center}
\end{figure}

\begin{figure}[h]
\begin{center}
  \includegraphics[scale=0.5]{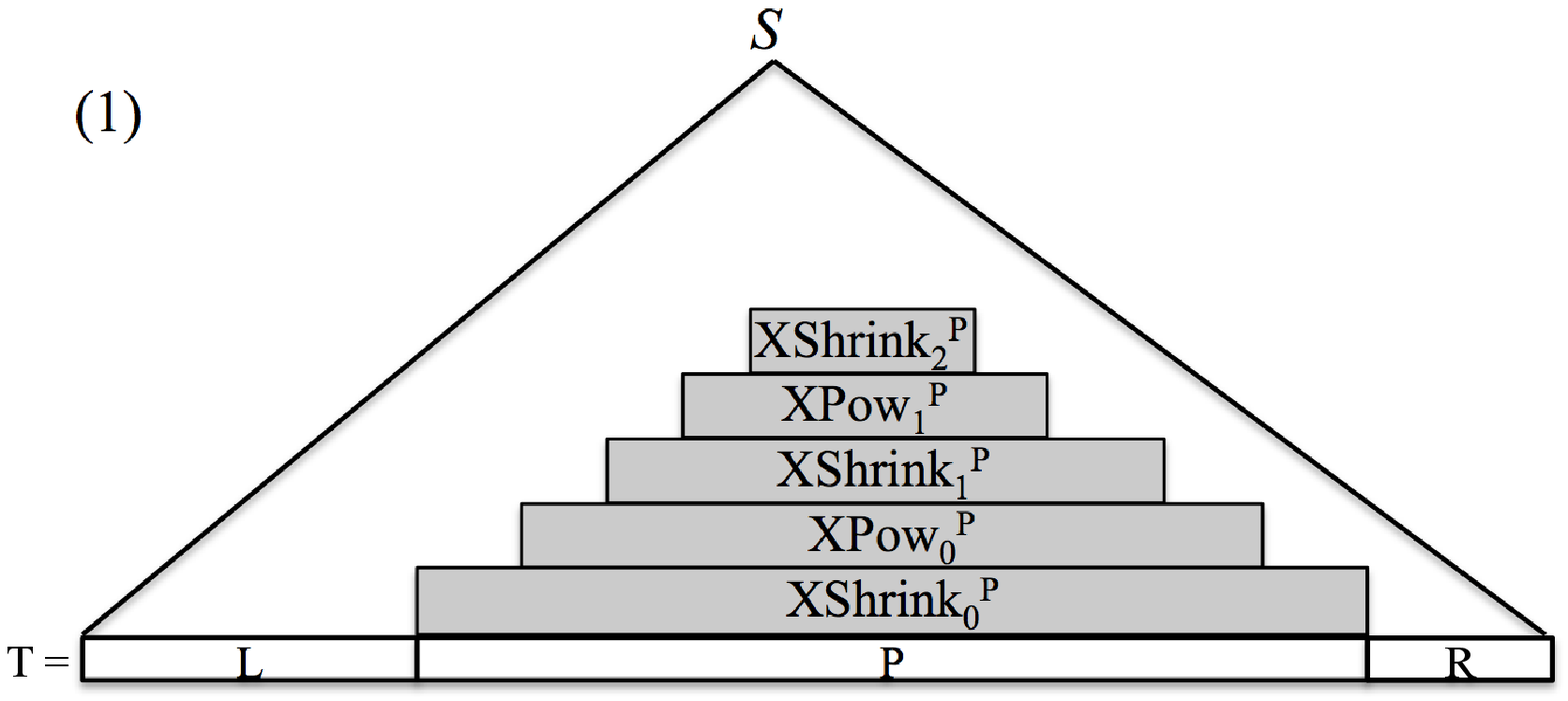}
  \includegraphics[scale=0.6]{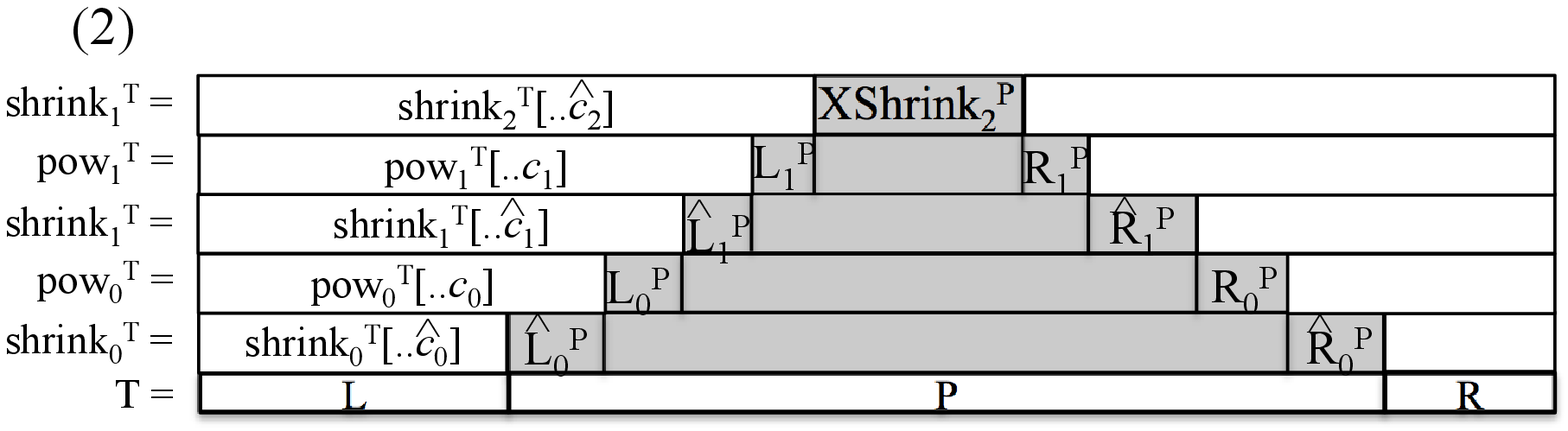}
  \includegraphics[scale=0.5]{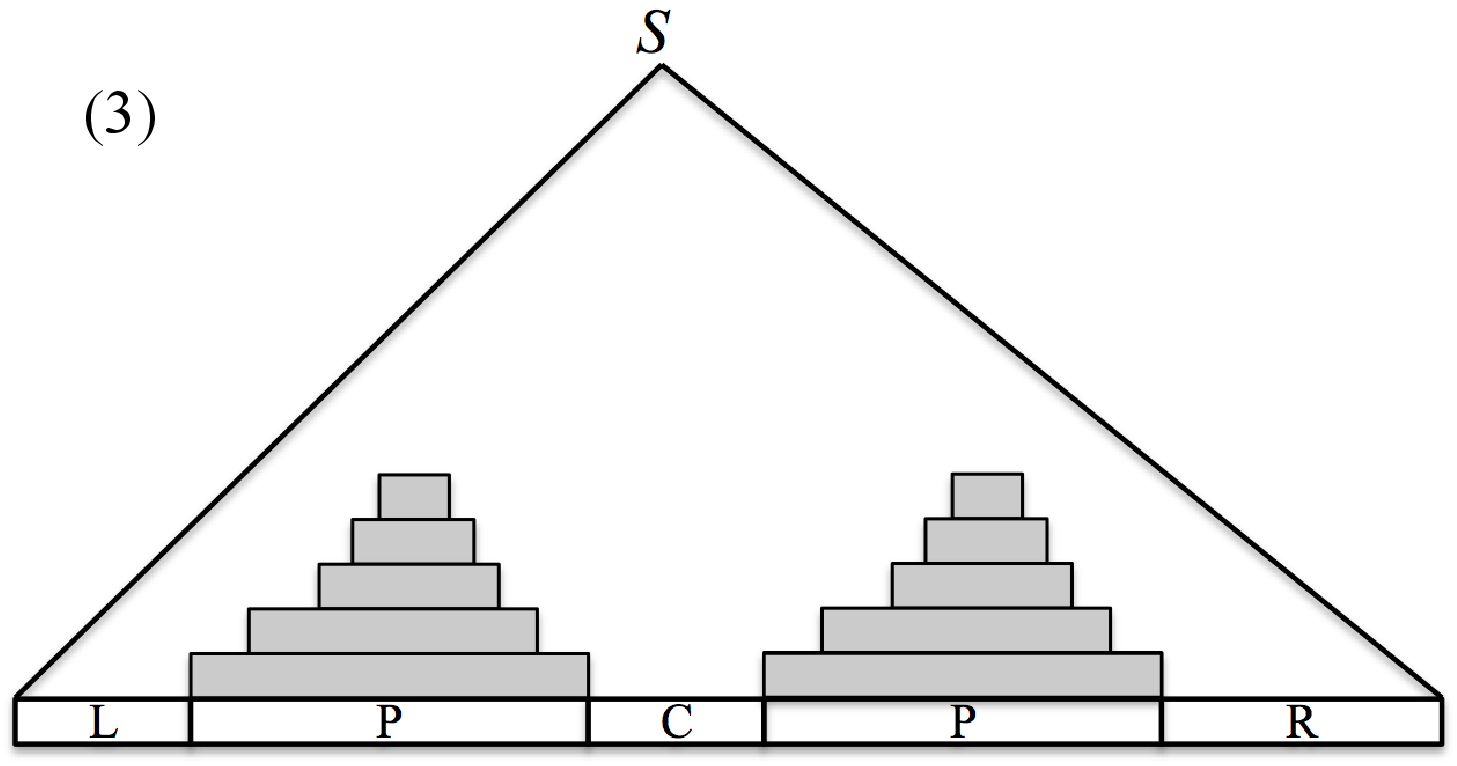}
  \caption{
  Abstract images of consistent signatures of substring $P$ of text $T$,
  on the derivation trees of the signature encoding of $T$.
  Gray rectangles in Figures (1)-(3) represent common signatures for occurrences of $P$. 
  (1) Each $\mathit{XShrink}_{t}^{P}$ and $\mathit{XPow}_{t}^{P}$ occur on substring $P$ 
  in $\mathit{shrink}_{t}^{T}$ and $\mathit{Pow}_{t}^{T}$, respectively, where $T = LPR$.
  (2) The substring $P$ can be represented by 
  $\hat{L}_{0}^{P}L_{0}^{P}\hat{L}_{1}^{P}L_{1}^{P}\mathit{XShrink}_{2}^{P}R_{1}^{P}\hat{R}_{1}^{P}R_{0}^{P}\hat{R}_{0}^{P}$. 
  (3) There exist common signatures on every substring $P$ in the derivation tree.
  } 
  \label{fig:CommonSequence}
\end{center}
\end{figure}

\section{Appendix: Proof of Lemma~\ref{lem:CoinTossing}}\label{sec:proof_prelim}
\begin{proof}
Here we give only an intuitive description of a proof of Lemma~\ref{lem:CoinTossing}.
More detailed proofs can be found at~\cite{DBLP:journals/algorithmica/MehlhornSU97}
and~\cite{LongAlstrup}.

Let $p$ be an integer sequence of length $n$,
called a \emph{$W$-colored sequence},
where $p[i] \neq p[i+1]$ for any $1 \leq i < n$ and $0 \leq p[j] \leq W$ for any $1 \leq j \leq n$.
Mehlhorn et al.~\cite{DBLP:journals/algorithmica/MehlhornSU97} showed that there exists a function $f'$  
which returns a $(\log W)$-colored sequence $p'$ for a given $W$-colored sequence $p$ in $O(|p|)$ time, 
where $p'[i]$ is determined only by $p[i-1]$ and $p[i]$ for $1 \leq i \leq |p|$. 
Let $p^{\langle k \rangle}$ denote the outputs after applying $f'$ to $p$ by $k$ times.
They also showed that there exists a function $f''$
which returns a bit sequence $d$ satisfying the conditions of Lemma~\ref{lem:CoinTossing} 
for a $6$-colored sequence $p$ in $O(|p|)$ time, 
where $d[i]$ is determined only by $p[i-3..i+3]$ for $1 \leq i \leq |p|$. 
Hence we can compute $d$ for a $W$-colored sequence $p$ 
in $O(|p| \log^* W)$ time by applying $f''$ to $p^{\langle \log^* W + 2 \rangle}$ after computing $p^{\langle \log^* W + 2 \rangle}$. 
Furthermore, Alstrup et al.~\cite{LongAlstrup} showed
that $d$ can be computed in $O(|p|)$ time using a precomputed table of size $o(\log W)$.
The idea is that $p^{\langle 3 \rangle}$ is a $\log\log\log W$-colored sequence and the number of 
all combinations of a $\log\log\log W$-colored sequence of length $\log^* W + 11$ is $2^{(\log^* W + 11)\log\log\log W} = o(\log W)$. 
Hence we can compute $d$ for a $W$-colored sequence in linear time using a precomputed table of size $o(\log W)$.
\end{proof}

\section{Appendix: Omitted Proofs in Sections~\ref{sec:lce} and~\ref{sec:Update}}\label{sec:appendix_proof_sig}
\subsection{Proof of Lemma~\ref{lem:common_sequence2}}
\begin{proof}
	Consider any integer $i$ with $T[i..i+|P|-1] = P$
	(see also Fig.~\ref{fig:CommonSequence}(2)). 
	Note that for $0 \leq t < h^{P}$,
	if $\mathit{XShrink}_{t}^{P}$ occurs in $\shrink{t}{T}$, then 
	$\mathit{XPow}_{t}^{P}$ always occurs in $\pow{t}{T}$,
	because $\mathit{XPow}_{t}^{P}$ is determined only by $\mathit{XShrink}_{t}^{P}$. 
	Similarly, for $0 < t \leq h^{P}$, if $\mathit{XPow}_{t-1}^{P}$ occurs in $\pow{t-1}{T}$, 
	then $\mathit{XShrink}_{t}^{P}$ always occurs in $\shrink{t}{T}$.
	Since $\mathit{XShrink}_{0}^{P}$ occurs at position $i$ in $\shrink{0}{T}$, 
	$\mathit{XShrink}_{t}^{P}$ and $\mathit{XPow}_{t}^{P}$ occur in the derivation tree of $\id{T}$. 
	Hence we discuss the positions of $\mathit{XShrink}_{t}^{P}$ and $\mathit{XPow}_{t}^{P}$.
	Now, let $\hat{c}_{t}$ + 1 and $c_{t}$ + 1 be the beginning positions of
	the corresponding occurrence of $\mathit{XShrink}_{t}^{P}$ in $\shrink{t}{T}$ and 
	that of $\mathit{XPow}_{t}^{P}$ in $\pow{t}{T}$, respectively. 
	Then $\shrink{t}{T}[..\hat{c}_{t}]$ consists of $\pow{t-1}{T}[..c_{t-1}]$ and $L_{t-1}^{P}$ for $0 < t \leq h^{P}$. 
	Also, $\pow{t}{T}[..c_{t}]$ consists of $\shrink{t}{T}[..\hat{c}_{t}]$ and $\hat{L}_{t}^{P}$ for $0 \leq t < h^{P}$. 
		This means that 
		the substring $P$ occurring at position $i$ in $T$ is  
		represented as $\uniq{P}$ in the signature encoding
	Therefore Lemma~\ref{lem:common_sequence2} holds.
\end{proof}

\subsection{Proof of Lemma~\ref{lem:ancestors}}
\begin{proof}
  By Definition~\ref{def:xshrink}, for every level,
  $X$ contains $O(\log^* M)$ nodes that are parents of the nodes representing $\uniq{P}$.
  Lemma~\ref{lem:ancestors} holds because the number of nodes at some level is halved when $\mathit{Shrink}$ is applied.   	
  More precisely, considering the $x$ nodes of $X$ at some level to which $\mathit{Shrink}$ is applied,
  the number of their parents is at most $(x + 2) / 2$. 
  	Here the `+2' term reflects the fact that both ends of $x$ nodes may be coupled with nodes outside $X$.
  	And also, since $|\encpow{\hat{L}_{t}^{P}}| = |\encpow{\hat{R}_{t}^{P}}| = 1$ for $0 \leq t < h^{P}$ and $|\encpow{\xshrink{h^{P}}{P}}| = O(|\log^* M|)$, 
  	each nodes representing $\hat{L}_{t}^{P}$ and $\hat{R}_{t}^{P}$ has a common parent for every level, 
  	and the number of parents of nodes representing $\xshrink{h^{P}}{P}$ is $O(\log^* M)$.
  Note that $h = O(\log |\val{e}|)$ holds for $e \in \mathcal{V}$ 
  by the signature encoding, where $h$ is the height of derivation tree of $e$.    
\end{proof}

\subsection{Proof of Lemma~\ref{lem:ComputeShortCommonSequence}}
\begin{proof}
  	Let $\mathcal{T}$ be the derivation tree of $e$ and 
  	consider the induced subtree $X$ of $\mathcal{T}$
  	whose root is the root of $\mathcal{T}$ and whose leaves are 
  	the parents of the nodes representing $\uniq{s[j..j+y-1]}$.
  	Then the size of $X$ is $O(\log |s| + \log y \log^* M)$ by Lemma~\ref{lem:ancestors}. 
  	Starting at the given node in the DAG which corresponds to $e$, 
  	we compute $X$ using Definition~\ref{def:xshrink} and the properties described in the proof of Lemma~\ref{lem:ancestors} 
  	in $O(\log |s| + \log y \log^* M)$ time. 
  	Hence Lemma~\ref{lem:ComputeShortCommonSequence} holds.
\end{proof}

\section{Appendix: Omitted Proofs in Section~\ref{sec:Construction}}\label{sec:Proof_HConstructuionTheorem}

\subsection{Proof of Theorem~\ref{theo:HConstructuionTheorem}~(2)}
\begin{proof}
	Consider a dynamic signature encoding $\mathcal{G}$ for an empty string. 
	Then Theorem~\ref{theo:HConstructuionTheorem}~(2) immediately holds 
	by computing $\mathit{INSERT'}(c_i,|f_i|,|f_1 \cdots f_{i-1}|+1)$ for all $1 \leq i \leq z$ incrementally, 
	where $c_i \leq |f_1 \cdots f_{i-1}| - |f_i|$ is a position such that $T[c_i..c_i+|f_i|-1] = f_i$ holds.
		Note that when $f_i$ is a character which does not occur in $f_1, \ldots f_{i-1}$ for $1 \leq i \leq z$, 
		we compute $\mathit{INSERT}(f_i,|f_1 \cdots f_{i-1}|+1)$ in $O(f_{\mathcal{A}} \log N \log^* M)$ time
		instead of the above $\mathit{INSERT'}$ operation.
\end{proof}
Note that we can directly show Lemma~\ref{lem:upperbound_signature} from the above proof 
because the size of $\mathcal{G}$ increases $O(\log N \log^* M)$ by Lemma~\ref{lem:ancestors}, 
every time we do $\mathit{INSERT'}(c_i,|f_i|,|f_1 \cdots f_{i-1}|+1)$ for $1 \leq i \leq z$.

\subsection{Proof of Theorem~\ref{theo:HConstructuionTheorem}~(3a)}
\begin{proof}
	We use the \emph{G-factorization} proposed in~\cite{rytter03:_applic_lempel_ziv}. 
	By the G-factorization of $T$ with respect to $\mathcal{S}$,
 	$T$ is partitioned into $O(n)$ strings, each of which, corresponding to $T[i..j]$, is derived by a variable $X$ of $\mathcal{S}$
	such that $X$ appears in the derivation tree of $\mathcal{S}$ to derive a substring of $T[1..i-1]$, 
	or otherwise $X$ derives a single character that does not appear in $T[1..i-1]$.
	Note that we can compute a sequence of variables of $\mathcal{S}$ 
	corresponding to the G-factorization of $T$ with respect to $\mathcal{S}$
 	in $O(n)$ time by the depth-first traversal of the DAG of $S$.
	Since the G-factorization resembles the LZ77 factorization, 
	we can construct the dynamic signature encoding $\mathcal{G}$ for $T$ by 
	$O(n)$ $\mathit{INSERT'}$ and $\mathit{INSERT}$ operations 
	as the proof of Theorem~\ref{theo:HConstructuionTheorem}~(2).
\end{proof}

\subsection{Proof of Lemma~\ref{lem:Lambda_t}}\label{sec:Proof_Lambda_t}
\begin{proof}
We first compute, for all variables $X_i$,
$\encpow{\xshrink{t}{X_i}}$ if $|\encpow{\xshrink{t}{X_i}}| \leq \Delta_{L} + \Delta_{R} + 9$,
otherwise $\encpow{\hat{L}_{t}^{X_i}}$ and $\encpow{\hat{R}_{t}^{X_i}}$.
The information can be computed in $O(n \log^*M)$ time and space in a bottom-up manner, i.e., by processing variables in increasing order.
For $X_i \rightarrow X_{\ell} X_{r}$, if both $|\encpow{\xshrink{t}{X_{\ell}}}|$ and $|\encpow{\xshrink{t}{X_{r}}}|$ are no greater than $\Delta_{L} + \Delta_{R} + 9$,
we can compute $\encpow{\xshrink{t}{X_i}}$ in $O(\log^* M)$ time 
by naively concatenating $\encpow{\xshrink{t}{X_{\ell}}}$, $\encpow{\hat{z}_{t}^{X_i}}$ and $\encpow{\xshrink{t}{X_{r}}}$.
Otherwise $|\encpow{\xshrink{t}{X_i}}| > \Delta_{L} + \Delta_{R} + 9$ must hold, and 
$\encpow{\hat{L}_{0}^{X_i}}$ and $\encpow{\hat{R}_{0}^{X_i}}$ can be computed in $O(1)$ time from $\encpow{\hat{z}_{t}^{X_i}}$ and the information for $X_{\ell}$ and $X_{r}$.

The run-length encoded signatures represented by $z_{t}^{X_i}$ can be obtained in $O(\log^* M)$ time
by using $\hat{z}_{t}^{X_i}$ and the above information for $X_{\ell}$ and $X_r$:
$z_{t}^{X_i}$ is created over run-length encoded signatures that are obtained by concatenating 
$\encpow{\xshrink{0}{X_{\ell}}}$ (or $\encpow{\hat{R}_{0}^{X_{\ell}}}$), $z_{t}^{X_i}$ and $\encpow{\xshrink{0}{X_r}}$ (or $\encpow{\hat{R}_{0}^{X_r}}$).
Also, $A_{t}^{X_n}$ and $B_{t}^{X_n}$ represents $\hat{A}_{t}^{X_n} \hat{L}_{t}^{X_n}$ and $\hat{R}_{t}^{X_n} \hat{B}_{t}^{X_n}$, respectively.

Hence, we can compute in $O(n \log^* M)$ time $O(n \log^*M)$ run-length encoded signatures to which we give signatures.
We determine signatures in $O(n \log \log (n \log^* M) \log^* M)$ time by sorting the run-length encoded signatures as Lemma~\ref{lem:Lambda_t}.
\end{proof}

\section*{Appendix D: Omitted Proofs in Section~\ref{sec:applications}}\label{sec:appendix_applications}

\subsection{Proof of Theorem~\ref{theo:changedSLP}}

\subsubsection{Proof of Theorem~\ref{theo:changedSLP}~(1)}
\begin{proof}
For any signature $e \in \mathcal{V}$ such that $e \rightarrow e_{\ell}e_r$, 
we can easily translate $e$ to a production of SLPs because the assignment is a pair of signatures, 
like the right-hand side of the production rules of SLPs.
For any signature $e \in \mathcal{V}$ such that $e \rightarrow \hat{e}^k$, 
we can translate $e$ to at most $2 \log k$ production rules of SLPs:
We create $t = \lfloor \log k \rfloor$ variables which represent
$\hat{e}^{2^1}, \hat{e}^{2^2}, \ldots, \hat{e}^{2^t}$ and concatenating them
according to the binary representation of $k$ to make up $k$ $\hat{e}$'s.
Therefore we can compute $\mathcal{S}$ in $O(w \log |T|)$ time.
\end{proof}

\subsubsection{Proof of Theorem~\ref{theo:changedSLP}~(2)}
\begin{proof}
Note that the number of created or removed signatures in $\mathcal{V}$ 
is bounded by $O(y + \log |T'| \log^* M)$ by Lemma~\ref{lem:ancestors}. 
For each of the removed signatures, we remove the corresponding production from $\mathcal{S}$. 
For each of created signatures, we create the corresponding production and add it to $\mathcal{S}$ as in the proof of (1). 
Therefore Theorem~\ref{theo:changedSLP} holds. 
\end{proof}

\subsection{Proof of Theorem~\ref{theo:faster_LCP}}
We use the following known result.
\begin{lemma}[\cite{LongAlstrup}]\label{lem:lcp_lcs_on_H}
Using signature encodings $\mathcal{G}_{1}, \ldots \mathcal{G}_{m}$, 
we can support 
\begin{itemize}
 \item $\mathit{LCP}(T_i, T_j)$ in $O(\log |T_i| + \log |T_j|)$ time,
 \item $\mathit{LCS}(T_i, T_j)$ in $O((\log |T_i| + \log |T_j|) \log^* M)$ time
\end{itemize}
where $T_i, T_j \in \{ T_1, \ldots, T_m \}$ and 
$\mathcal{G}_{k} =(\Sigma, \mathcal{V}, \mathcal{D}, S_{k})$ of a string $T_k$ for $1 \leq k \leq m$, 
namely $\mathcal{G}_{1}, \ldots, \mathcal{G}_{m}$ share $\mathcal{D}$.
\end{lemma}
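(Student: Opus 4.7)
The plan is to reduce both queries to coordinated top-down traversals of the derivation trees of $S_i$ and $S_j$. Two facts make this efficient. First, because $\mathcal{G}_1, \ldots, \mathcal{G}_m$ share the dictionary $\mathcal{D}$, two signatures $e, e' \in \mathcal{V}$ satisfy $\val{e} = \val{e'}$ iff $e = e'$ as integers, so equality of any two subtree-derived strings is a constant-time integer comparison. Second, by Lemma~\ref{lem:common_sequence2}, if $P$ is the longest common prefix (resp.\ suffix) of $T_i$ and $T_j$, then $P$ has the canonical representation $\uniq{P}$, and this representation must appear as a prefix (resp.\ suffix) of the level-wise signature decomposition of both derivation trees.

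For $\LCPQ(T_i, T_j)$, I would walk down the leftmost spines of the two derivation trees simultaneously, maintaining two pointers that jointly demarcate the already-matched prefix. At each step, the algorithm compares the signature immediately to the right of each pointer: integer equality (which certifies string equality) lets both pointers skip over the entire subtree, while a mismatch forces a descent into the left children of the two offending signatures. The work is charged against moves on the leftmost spines, which have $O(\log |T_i|)$ and $O(\log |T_j|)$ nodes respectively in a signature encoding. The crucial invariant is that along the left boundary, the signatures being compared at level $t$ are precisely those prescribed by the canonical $\hat{L}^{P}_t$ and $L^{P}_t$ of Definition~\ref{def:xshrink}, so only $O(1)$ comparisons are needed per level, giving the $O(\log |T_i| + \log |T_j|)$ bound.

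For $\LCSQ(T_i, T_j)$, I would run the symmetric procedure along the rightmost spines, matching against the canonical tail constructed from $R^{P}_t$ and $\hat{R}^{P}_t$. Unlike the left boundary, splicing together the right-fringe signatures across consecutive levels may require scanning up to $\Theta(\log^* M)$ candidate signatures before identifying the one that participates in the canonical suffix representation. Charging $O(\log^* M)$ work per level over the $O(\log |T_i| + \log |T_j|)$ levels yields the stated $O((\log |T_i| + \log |T_j|) \log^* M)$ bound.

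The main obstacle I anticipate is the tight accounting for the $\LCPQ$ case, namely proving that only $O(1)$ signature comparisons suffice per level on the left boundary, whereas the general LCE algorithm of Lemma~\ref{lem:sub_operation_lemma} pays $O(\log \ell \log^* M)$ for arbitrary positions. This asymmetry must be traced back to the specific definitions of $\hat{L}^{P}_t, L^{P}_t$ and the locally consistent parsing of Lemma~\ref{lem:CoinTossing}: the constant-factor width on the left boundary of $\uniq{P}$ (namely that each of $\hat{L}^{P}_t$ and the leading portion of $L^{P}_t$ is uniquely determined by a constant-size window) must be leveraged to rule out branching, and establishing this cleanly is the main technical content of the proof.
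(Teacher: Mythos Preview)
Your overall plan---reduce to the $\LCEQ$ traversal of Lemma~\ref{lem:sub_operation_lemma} and bound the size of the common signature sequence that must be matched---is exactly what the paper does. The paper makes this explicit by exhibiting the representation
\[
v \;=\; \hat{A}^{P}_{h^{P}}\,\xshrink{h^{P}}{P}\,R_{h^{P}-1}^{P}\hat{R}_{h^{P}-1}^{P}\cdots R_{0}^{P}\hat{R}_{0}^{P}
\]
for the longest common prefix $P$ and noting $|\encpow{v}| = O(\log |P| + \log^* M)$.

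However, your accounting has the two boundaries swapped, and that is where the gap lies. For $\LCPQ$, $P$ starts at position~$1$ in both $T_i$ and $T_j$, so the \emph{left} boundary of $P$ coincides with the left boundary of each whole string. The entire left part of $\uniq{P}$, namely $\hat{L}^{P}_{0}L^{P}_{0}\cdots \hat{L}^{P}_{h^{P}-1}L^{P}_{h^{P}-1}$, therefore collapses into the single top-level sequence $\hat{A}^{P}_{h^{P}}$ of length $O(\log^* M)$---not because $L^{P}_t$ is constant-width (it is not: $|L^{P}_t|\le \Delta_L+3=O(\log^* M)$), but because this boundary is shared verbatim by both encodings and so contributes nothing per level. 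The genuine per-level cost comes from the \emph{right} boundary of $P$, where $|R^{P}_t|\le \Delta_R+4=O(1)$. It is the asymmetry $\Delta_R=4$ versus $\Delta_L=\log^* W+6$ in Lemma~\ref{lem:CoinTossing} that yields $O(1)$ work per level and hence $O(\log|T_i|+\log|T_j|)$ total.

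For $\LCSQ$ the picture is mirrored: the right boundary collapses into $\hat{B}^{P}_{h^{P}}$, and the per-level cost now comes from $L^{P}_t$, which has width $O(\log^* M)$; this is the source of the extra $\log^* M$ factor. Your intuition that the asymmetry must be traced to Lemma~\ref{lem:CoinTossing} is correct, but the specific mechanism is $\Delta_L$ versus $\Delta_R$, and consequently the roles of left and right in your write-up need to be exchanged.
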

\begin{proof}

We compute $\mathit{LCP}(T_i, T_j)$ by $\mathit{LCE}(T_i,T_j,1,1)$,  
namely, we use the algorithm of Lemma~\ref{lem:sub_operation_lemma}. 
Let $P$ denote the longest common prefix of $T_i$ and $T_j$.
We use the notation $\hat{A}^{P}$ defined in Section~\ref{sec:HConstruction3-2}.
 	Then the both substrings $P$ occurring at position $1$ 
 	in $T_i$ and at position $1$ in $T_j$ are represented as 
 	$v = \hat{A}^{P}_{h^{P}}\xshrink{h^{P}}{P}R_{h^{P}-1}^{P}\hat{R}_{h^{P}-1}^{P} \cdots R_{0}^{P}\hat{R}_{0}^{P}$ 
 	in the signature encoding by a similar argument of Lemma~\ref{lem:common_sequence2}. 
Since $|\encpow{v}| = O(\log |P| + \log^* M)$, 
we can compute $\mathit{LCP}(T_i, T_j)$ in $O(\log |T_i| + \log |T_j|)$ time. 
Similarly, we can compute $\mathit{LCS}(T_i, T_j)$ in $O((\log |T_i| + \log |T_j|) \log^* M)$ time.
More detailed proofs can be found in~\cite{LongAlstrup}.
\end{proof}
To use Lemma~\ref{lem:lcp_lcs_on_H} for $\id{\val{X_1}}, \ldots, \id{\val{X_n}}$, we show the following lemma. 
\begin{lemma}\label{lem:SLPSignatureEncoding}
Given an SLP $\mathcal{S}$, 
we can compute $\id{\val{X_1}}, \ldots, \id{\val{X_n}}$ in \\
$O(n \log\log n \log N \log^* M)$ time and $O(n \log N \log^* M)$ space. 
\end{lemma}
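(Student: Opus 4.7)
The plan is to extend the algorithm of Theorem~\ref{theo:HConstructuionTheorem}~(3b), which produces $\id{\val{X_n}}$, so that a signature is simultaneously produced for every variable $X_i$ of the SLP. The machinery of Section~\ref{sec:HConstruction3-2} already yields, at every level $t$, the common-sequence cores $\hat{z}^{X_i}_{t}$ and $z^{X_i}_{t}$ for all variables; what we additionally need are the boundary pieces $\hat{A}^{X_i}_{t},\hat{B}^{X_i}_{t},A^{X_i}_{t},B^{X_i}_{t}$ that turn $\xshrink{t}{X_i}$ and $\xpow{t}{X_i}$ into the full $\shrink{t}{X_i}$ and $\pow{t}{X_i}$. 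Once these are available, $\id{\val{X_i}}$ is just the unique signature produced at level $h^{X_i}$ when the entire $\pow{h^{X_i}}{X_i}$ collapses.

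First I would run Lemmas~\ref{lem:Lambda_0}, \ref{lem:Lambda_t} and \ref{lem:hat_Lambda_t} to compute $\Lambda_0,\hat{\Lambda}_1,\Lambda_1,\ldots,\hat{\Lambda}_{h^{X_n}}$ together with the shared dictionary $\mathcal{D}$, at a cost of $O(n \log\log(n\log^* M)\log N \log^* M)$ time and $O(n \log^* M + w)$ working space. In parallel, at each level $t$ and for each variable $X_i \to X_\ell X_r$ I would compute $\hat{A}^{X_i}_{t}$ from $\hat{A}^{X_\ell}_{t}$ together with the right-end context ($L^{X_\ell}_{t-1}$ and $\hat{L}^{X_\ell}_{t}$) that becomes internal to $X_i$, and symmetrically $\hat{B}^{X_i}_{t}$ from $\hat{B}^{X_r}_{t}$; similar recursions produce $A^{X_i}_{t}$ and $B^{X_i}_{t}$. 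Every signature block or run-length encoded signature that arises in this recursion is already assigned a signature by the $(3b)$ algorithm, so no extra bucket sort or Han sort is incurred here.

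Next, for each $X_i$ I would detect the level $t = h^{X_i}$ at which $|\encpow{\xshrink{t}{X_i}}| \leq \deltaLR{L}+\deltaLR{R}+9$ and the full $\pow{t}{X_i}=A^{X_i}_{t}\,\xpow{t}{X_i}\,B^{X_i}_{t}$ collapses to a single signature under a final $\mathit{Epow}$/$\mathit{Eblock}$ step; that signature is $\id{\val{X_i}}$, which I record. Because each boundary piece has length $O(\log^* M)$ (it receives an $O(\log^* M)$ additive contribution from $L$ or $\hat{L}$ and is then roughly halved by $\mathit{Eblock}$), the total per-level cost of maintaining boundary pieces across all variables is $O(n \log^* M)$, which is subsumed by the cost of $(3b)$. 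Summed over the $O(\log N)$ levels, and charging the Han-sort of Lemma~\ref{lem:Lambda_0} at every level, the total running time remains $O(n \log\log n \log N \log^* M)$. Discarding old levels as we ascend, the working space is $O(n \log^* M)$ for current boundary pieces plus $O(w)=O(n \log N \log^* M)$ for $\mathcal{D}$, matching the claimed $O(n \log N \log^* M)$ bound.

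The main obstacle is the uniform size analysis of the boundary sequences: I must verify that the coupled recurrences governing $|\hat{A}^{X_i}_{t}|$ and $|A^{X_i}_{t}|$ (and their symmetric right-side counterparts) really do stabilize at $O(\log^* M)$ for every variable and every level, so that the per-level cost is $O(n \log^* M)$ rather than growing with $t$. Once this is established, all other steps are direct adaptations of the $(3b)$ construction, and identifying $\id{\val{X_i}}$ with the single signature produced at level $h^{X_i}$ follows immediately from the definitions.
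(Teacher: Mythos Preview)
Your proposal takes essentially the same approach as the paper: extend the level-by-level construction of Theorem~\ref{theo:HConstructuionTheorem}~(3b) by computing the boundary pieces $A_t^{X_i},\hat{A}_t^{X_i},B_t^{X_i},\hat{B}_t^{X_i}$ for \emph{every} variable $X_i$, not only for $X_n$, and then read off $\id{\val{X_i}}$ once the appropriate level collapses. The paper's proof is just a two-sentence pointer to exactly this modification, together with the observation that the total size of all boundary pieces over all variables and levels is $O(n\log N\log^* M)$.

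One inaccuracy to fix: you assert that ``every signature block or run-length encoded signature that arises in this recursion is already assigned a signature by the (3b) algorithm, so no extra bucket sort or Han sort is incurred here.'' That is not true. The leftmost and rightmost blocks of $\shrink{t}{X_i}$ and $\pow{t}{X_i}$, for $i<n$, are formed with the \emph{string-boundary} convention ($\tilde{p}[j]=0$ outside the string in Lemma~\ref{lem:CoinTossing}), which is different from how the same region is parsed when $X_i$ sits inside $X_n$. Hence the boundary pieces can produce genuinely new signatures not present in $\mathcal{G}$ for $X_n$. The fix is simple and does not affect your bounds: throw these $O(n\log^* M)$ extra blocks per level into the same bucket/Han sort passes used in Lemmas~\ref{lem:Lambda_0}--\ref{lem:hat_Lambda_t}; the per-level cost stays $O(n\log\log(n\log^* M)\log^* M)$. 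This is also why the space bound is the larger $O(n\log N\log^* M)$ (the enlarged dictionary) rather than the $O(n\log^* M + w)$ of (3b).

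Your ``main obstacle''---that $|\hat{A}^{X_i}_t|,|A^{X_i}_t|=O(\log^* M)$ uniformly---is real but routine: the recurrence $|\hat{A}_{t+1}|\le \lceil(|A_t|+|L_t|)/2\rceil$ with $|L_t|\le\Delta_L+3$ and $|A_t|\le|\hat{A}_t|+1$ stabilizes at $O(\Delta_L)=O(\log^* M)$, exactly as for $X_n$ in (3b).
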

\begin{proof}
Recall that the algorithm of Theorem~\ref{theo:HConstructuionTheorem}~(3) 
computes $\id{\val{X_n}}$ in $O(n \log\log n \log N \log^* M)$ time.
We can modify the algorithm to compute $\id{\val{X_1}}, \ldots, \id{\val{X_n}}$ without changing the time complexity:
We just compute $A_{t}^{X}$, $\hat{A}_{t}^{X}$, $B_{t}^{X}$ and $\hat{B}_{t}^{X}$ for ``all'' $X \in \mathcal{S}$, not only for $X_n$.
Since the total size is $O(n \log N \log^* M)$, Lemma~\ref{lem:SLPSignatureEncoding} holds.
\end{proof}

We are ready to prove Theorem~\ref{theo:faster_LCP}.
\begin{proof}
The first result immediately follows from Lemma~\ref{lem:lcp_lcs_on_H} and~\ref{lem:SLPSignatureEncoding}. 
To speed-up query times for $\LCPQ$ and $\LCSQ$,
we sort variables in lexicographical order 
in $O(n \log n \log N)$ time by $\LCPQ$ query and a standard comparison-based sorting.
Constant-time $\LCPQ$ queries are then possible by 
using a constant-time RMQ data structure~\cite{DBLP:journals/jal/BenderFPSS05}
on the sequence of the lcp values. 
Next we show that $\LCSQ$ queries can be supported similarly.  
Let SLP $\mathcal{S} = (\Sigma, \mathcal{V}, \mathcal{D},S)$ and 
$Y_i \rightarrow \mathit{expr}_{i}$ for $1 \leq i \leq n$, where
$\mathit{expr}_{i} = Y_{r}Y_{\ell}$ for $X_{i} \rightarrow X_{\ell}X_{r} \in \mathcal{D}$ and 
$\mathit{expr}_{i} = a$ for $(X_{i} \rightarrow a \in \Sigma) \in \mathcal{D}$. 
Then consider an SLP $\mathcal{S'} = (\Sigma, \mathcal{V'}, \mathcal{D},S')$ of size $n$, where 
$\mathcal{V'} = \{ Y_{1}, \ldots, Y_{n} \}$, $\mathcal{D'} = \{ Y_{1} \rightarrow \mathit{expr}_{i}, \ldots, Y_{n} \rightarrow \mathit{expr}_{n} \}$ and 
$S' = Y_{n}$. 
Namely $\mathcal{S'}$ represents $T^{R}$. 
By supporting $\LCPQ$ queries on $\mathcal{S'}$, 
$\LCSQ$ queries on $\mathcal{S}$ can be supported. 
Hence Theorem~\ref{theo:faster_LCP} holds. 
\end{proof}

\subsection{Proof of Theorem~\ref{theo:smaller_LCE}}
\begin{proof}
We can compute a static signature encoding $\mathcal{G} = (\Sigma, \mathcal{V}, \mathcal{D}, S)$ of size $w$ 
representing $T$ in $O(n \log \log n \log N \log^* M)$ time 
and $O(n \log^* M + w)$ working space using Theorem~\ref{theo:HConstructuionTheorem}, 
where $w = O(z \log N \log^* M)$.
Notice that 
each variable of the SLP appears at least once in the 
derivation tree of $T_{n}$ of the last variable $X_n$ representing the string $T$.
Hence, if we store an occurrence of each variable $X_i$ in $\mathcal{T}_n$
and $|\val{X_i}|$, we can reduce any LCE query on two variables 
to an LCE query on two positions of $\val{X_n} = T$.
In so doing, for all $1 \leq i \leq n$  
we first compute $|\val{X_i}|$ 
and then compute the leftmost occurrence $\ell_i$ of $X_i$ in $\mathcal{T}_n$,
spending $O(n)$ total time and space.
By Lemma~\ref{lem:sub_operation_lemma}, 
each LCE query can be supported in $O(\log N + \log \ell \log^* M)$ time.
Since $z \leq n$~\cite{rytter03:_applic_lempel_ziv},
the total preprocessing time is $O(n \log \log n \log N \log^* M)$
and working space is $O(n \log^* M + w)$.
\end{proof}
\subsection{Proof of Theorem~\ref{theo:palindrome}}

\begin{proof}
For a given SLP of size $n$ representing a string of length $N$,
let $P(n,N)$, $S(n, N)$, and $E(n,N)$ be
the preprocessing time and space requirement for an $\LCEQ$ data structure on
SLP variables, and each $\LCEQ$ query time, respectively.

Matsubara et al.~\cite{matsubara_tcs2009} showed that 
we can compute an $O(n \log N)$-size representation
of all palindromes in the string
in $O(P(n,N) + E(n,N) \cdot n \log N)$ time and $O(n \log N + S(n, N))$ space.
Hence, using Theorem~\ref{theo:smaller_LCE},
we can find all palindromes in the string in 
$O(n \log \log n \log N \log^* M + n \log^2 N \log^* M) = O(n \log^2 N \log^* M)$ time
and $O(n \log^* M + w)$ space.
\end{proof}

\subsection{Proof of Theorem \ref{theo:ImproveDictionaryMatching}}

\begin{proof}
In the preprocessing phase, 
we construct a static signature encoding $\mathcal{G} = (\Sigma, \mathcal{V}, \mathcal{D}, S)$ of size $w'$ 
such that $\id{\val{X_{1}}}, \ldots ,\id{\val{X_n}} \in \mathcal{V}$ 
using Lemma~\ref{lem:SLPSignatureEncoding}, spending $O(n \log \log n \\ \log N \log^* M)$ 
time, where $w' = O(n \log N \log^* M)$. 
Next we construct a compacted trie of size $O(m)$ that represents the $m$ patterns, which we denote by \emph{$\mathit{PTree}$ (pattern tree)}.
Formally, each non-root node of $\mathit{PTree}$ represents either a pattern or 
the longest common prefix of some pair of patterns.
$\mathit{PTree}$ can be built by using $\LCPQ$ of Theorem~\ref{theo:faster_LCP} in $O(m \log m \log N)$ time.
We let each node have its string depth, and the pointer to its deepest ancestor node that represents a pattern if such exists.
Further, we augment $\mathit{PTree}$ with a data structure for level ancestor queries so that 
we can locate any prefix of any pattern, designated by a pattern and length, in $\mathit{PTree}$ in $O(\log m)$ time
by locating the string depth by binary search on the path from the root to the node representing the pattern.
Supposing that we know the longest prefix of $T[i..|T|]$ that is also a prefix of one of the patterns, which we call the \emph{max-prefix for $i$},
$\mathit{PTree}$ allows us to output $\mathit{occ}_i$ patterns occurring at position $i$ in $O(\log m + \mathit{occ}_i)$ time.
Hence, the pattern matching problem reduces to computing the max-prefix for every position.

In the pattern matching phase, our algorithm processes $T$ in a streaming fashion, i.e.,
each character is processed in increasing order and discarded before taking the next character.
Just before processing $T[j+1]$, 
the algorithm maintains a pair of signature $p$ and integer $l$
such that $\derive(p)[1..l]$ is the longest suffix of $T[1..j]$ that is also a prefix of one of the patterns.
When $T[j+1]$ comes, we search for the smallest position $i \in \{j-l+1, \dots, j+1\}$ such that there is a pattern whose prefix is $T[i..j+1]$.
For each $i \in \{j-l+1, \dots, j+1\}$ in increasing order,
we check if there exists a pattern whose prefix is $T[i..j+1]$ by binary search on a sorted list of $m$ patterns.
Since $T[i..j] = \derive(p)[i-j+l..l]$, $\LCEQ$ with $p$ can be used for comparing a pattern prefix and $T[i..j+1]$ (except for the last character $T[j+1]$),
and hence, the binary search is conducted in $O(\log m \log N \log^* M)$ time.
For each $i$, if there is no pattern whose prefix is $T[i..j+1]$,
we actually have computed the max-prefix for $i$, and then we output the occurrences of patterns at $i$.
The time complexity is dominated by the binary search, which takes place $O(|T|)$ times in total.
Therefore, the algorithm runs in $O(|T|\log m \log N \log^* M + \mathit{occ})$ time.

By the way, one might want to know occurrences of patterns as soon as they appear
as Aho-Corasick automata do it by reporting the occurrences of the patterns by their ending positions.
Our algorithm described above can be modified to support it without changing the time and space complexities.
In the preprocessing phase, we additionally compute \emph{$\mathit{RPTree}$ (reversed pattern tree)},
which is analogue to $\mathit{PTree}$ but defined on the reversed strings of the patterns,
i.e., $\mathit{RPTree}$ is the compacted trie of size $O(m)$ that represents the reversed strings of the $m$ patterns.
Let $T[i..j]$ be the longest suffix of $T[1..j]$ that is also a prefix of one of the patterns.
A suffix $T[i'..j]$ of $T[i..j]$ is called the \emph{max-suffix for $j$} iff
it is the longest suffix of $T[i..j]$ that is also a suffix of one of the patterns.
Supposing that we know the max-suffix for $j$,
$\mathit{RPTree}$ allows us to output $\mathit{eocc}_j$ patterns occurring with ending position $j$ in $O(\log m + \mathit{eocc}_j)$ time.
Given a pair of signature $p$ and integer $l$ such that $T[i..j] = \derive(p)[1..l]$,
the max-suffix for $j$ can be computed in $O(\log m \log N \log^* M)$ time by binary search on a list of $m$ patterns sorted by their ``reversed'' strings
since each comparison can be done by ``leftward'' $\LCEQ$ with $p$.
Except that we compute the max-suffix for every position and output the patterns ending at each position,
everything else is the same as the previous algorithm, and hence, the time and space complexities are not changed.
\end{proof}

\vspace*{1pc}

\clearpage
\bibliography{ref}

\end{document}